\newtheorem{lem}{Lemma}
\newtheorem{thm}{Theorem}
\newtheorem{prop}{Proposition}
\newtheorem{cor}{Corollary}
\newtheorem{rem}{Remark}
\newtheorem{exam}{Example}
\newcommand{\F}{\mathbb{F}}
\newenvironment{proof}{{\em Proof:}}
{\hspace{\stretch{1}}%
\rule{1ex}{1ex}\\}
\begin{document}

\title{Construction of optimal Hermitian self-dual codes from unitary matrices}
\author
%{ Lin Sok, \\
%School of Mathematical Sciences, Anhui University,\\
% Hefei, Anhui, 230601, China
{
{Lin Sok, School of Mathematical Sciences, Anhui University, Hefei, Anhui, 230601, China and}\\{ Department of Mathematics, Royal University of Phnom Penh, 12156 Phnom Penh, Cambodia, \\{ \tt soklin\_heng@yahoo.com}}
}
\date{}
\maketitle
%\begin{abstract}
%\noindent

We provide an algorithm to construct unitary matrices over finite fields. We present various constructions of Hermitian self-dual code by means of unitary matrices, where some of them generalize the quadratic double circulant constructions. Many optimal Hermitian self-dual codes over large finite fields with new parameters  are obtained. More precisely MDS or almost MDS Hermitian self-dual codes of lengths up to $18$ are constructed over finite fields $\F_{q},$ where $q=3^2,4^2,5^2,7^2,8^2,9^2,11^2,13^2,17^2,19^2.$ Comparisons with classical constructions are made.

%\end{abstract}
{\bf Keywords:} Unitary matrices, Hermitian self-dual codes, matrix product codes, optimal codes, MDS codes, almost MDS codes\\
%%{\bf AMS Classification (MSC 2010):} Primary 94B05, Secondary 20G40
%%%%%%%%%%%%%%%%%%%%%%%%%%%%%%%%
\section{Introduction}\label{section:intro}
%%%%%%%%%%%%%%%%%%%%%%%%%%%%%%%
MDS codes form an optimal family of classical codes. They are closely related to combinatorial designs and finite geometry and have many applications in both theory and practice.

Self-dual codes are one of the most interesting classes of linear codes that find various applications in cryptographic protocols (secret sharing schemes) and combinatorics. They have close connections with group theory, lattice theory and design theory. It is well known that Euclidean binary self-dual codes are asymptotically good~\cite{MacSloTho}. Constructions of Euclidean self-dual codes over large finite fields were given by many authors \cite{AraGul,BetGeoMas,Gabo,GabOtm,GeoKou,GraGul,Gue,JinXin,KimLee,KimLee07,ShiSokSole}.

There have been a lot of works on Euclidean self-dual codes but less results have been known for the Hermitian case. The motivation of studying Hermitian self-dual (self-orthogonal) codes over $\F_{q^2}$ was due to their connection to $q-$ary quantum stabilizer codes \cite{AshKni}. Quaternary Hermitian self-dual codes were considered by MacWilliams et al.\cite{MacOdlSloWar} where they first gave the classification of length up to $14.$ Following the work \cite{MacOdlSloWar}, Conway et al. \cite{ConPleSlo} completed the classification of quaternary Hermitian self-dual codes of length up to $16.$ Later Huffman \cite{Huf90,Huf91,Huf97} classified the extremal quaternary Hermitian self-dual codes of length up to $28$. In 2011, Harada et al. \cite{HarMun} classified all quaternary Hermitian self-dual codes of length $20.$

There are two well-known constructions of Hermitian self-dual codes over large finite fields; the quadratic double circulant construction by Gaborit \cite{Gabo} and the building-up construction by Kim et al. \cite{KimLee}. 

Recently Tong and Wang \cite{TongWang} have constructed all MDS $q^2-$ary Hermitian self-dual codes of all lengths less than or equal to $q+1$ from the generalized Reed-Solomon codes. 

In this paper we first study unitary groups over finite fields. We provide methods to construct unitary matrices and apply them to construct Hermitian self-dual codes over finite fields of prime power orders. We obtain many new optimal codes, more precisely MDS or almost MDS Hermitian self-dual codes of lengths up to 18 are constructed over finite fields with sizes $q=3^2,4^2,5^2,7^2,9^2,11^2,13^2,16^2,17^2,19^2$. Further more with the same lengths, our method can also be applied efficiently for any $q$ being a square greater than $19^2$. Some MDS and almost MDS as well as optimal Hermitian self-dual codes with new parameters are summarized in Table \ref{table:new00} and  Table \ref{table:new01}.

For rather small values $q=2^2,3^2,4^2,5^2$, we construct optimal Hermitian self-dual codes up to length $28$ (up to length $34$ for $q=2^2$). Numerical results show that our constructions perform better than the quadratic double circulant contruction \cite{Gabo} and the building-up construction \cite{KimLee}, for example a Hermitian self-dual code over $
\F_4$ obtained from our construction has parameters $[18,9,8]$ which are better than those of \cite{Gabo} and Hermitian self-dual codes with parameters $[10,5,6],[10,5,6],[12,6,7]$ over $\F_{3^2},\F_{4^2},\F_{ 8^2}$ respectively are better than those of \cite{KimLee} with the parameters over $\F_{8^2}$ being new in \cite{TongWang}. Furthermore, over $\F_{q},q=11^2,13^2,16^2,17^2$, we obtain MDS Hermitian self-dual codes with parameters $[14,7,8]$ which are better than \cite{GulKimLee} while the parameters over $\F_{11^2}$ are new in \cite{TongWang}. On the one hand, our construction has no restriction on lengths like in \cite{Gabo} and thus more parameters are available. On the other hand, our method is easily applicable to construct codes of large lengths without going any recursive step like in \cite{KimLee}, which makes the code construction faster. All the computations are done with Magma \cite{mag}.

The paper is organized as follows: Section \ref{section:pre} gives preliminaries and background on self-dual codes as well as studies the unitary group over finite fields. Section \ref{section:con} gives different constructions of Hermitian self-dual codes.  Section \ref{section:matrix-product} studies matrix product codes which are Hermitian self-dual. Section \ref{section:embedding} provides a method to embed a self-orthogonal code into a self-dual code.
Section \ref{section:num} describes parameters of different constructions and makes comparisons among them. We end up with some concluding remarks in Section \ref{section:conclusion}.

\section{Preliminaries}\label{section:pre}
\label{Sec-Prelim}

%We refer to~\cite{HufPle} for basic definitions and results related
%to self-dual codes. 
\subsection{Self-dual codes}
A {\em
linear $[n,k]$ code $C$ of length $n$} over ${{\mathbb F}_q}$ is a $k$-dimensional subspace
of  $ {\mathbb F}_q^n$. An element in $C$ is called a {\em codeword}. The
(Hamming) weight wt$({\bf{x}})$ of a vector ${\bf{x}}=(x_1, \dots,
x_n)$ is the number of non-zero coordinates in it. The {\em minimum
distance ({\rm{or}} minimum weight) $d(C)$} of $C$ is
$d(C):=\min\{{\mbox{wt}}({\bf{x}})~|~ {\bf{x}} \in C, {\bf{x}} \ne
{\bf{0}} \}$. For ${\bf{x}}=(x_1,
\dots, x_n)$ and ${\bf{y}}=(y_1, \dots, y_n)$ in ${\mathbb F}_{q^2}^n$, their {\em Euclidean and Hermitian inner product} are defined respectively by
$${\bf{x}}\cdot{\bf{y}}=\sum_{i=1}^n x_i y_i, {\bf{x}}*{\bf{y}}=\sum_{i=1}^n x_i y_i^q.$$ 
We say that $\bf x$ is Hermitian orthogonal to $\bf y$ if ${\bf x}*{\bf y}=0.$
%For ${\bf x}=(x_1,\hdots,x_n)\in \F_{q^2}^n$, we write ${\overline {\bf x}}$ for $(x_1^q,\hdots,x_n^q)$  and 
For $E\subset \F_{q^2}^n$ we denote
$$\overline{E}:=\{(x_1^q,\hdots,x_n^q)|(x_1,\hdots,x_n)\in E\}.$$
For ${\bf x}=(x_1,\hdots,x_n)\in \F_{q^2}^n,$ we denote ${\bf x}^q=(x_1^q,\hdots,x_n^q).$
The Euclidean (resp. Hermitian)  {\em dual} of $C$,
denoted by $C^{\perp_E}$ (resp. $C^{\perp_H}$) is the set of vectors orthogonal to every
codeword of $C$ under the Euclidean (resp. Hermitian) inner product. A linear code
$C$ is called {\em Euclidean (resp. Hermitian) self-orthogonal} if $C\subset C^{\perp_E}$ (resp. $C\subset C^{\perp_H}$). A code $C$ is called 
{\em Euclidean (resp. Hermitian) self-dual} if $C=C^{\perp_E}$ (resp. $C=C^{\perp_H}$). 

%Since ${\bf x}*{\bf y}={\bf x}\cdot \overline{\bf y}$, 
It is easy to verify that $C^{\perp_H}=\overline{C^{\perp_E}}.$
It is well known that a self-dual code can only exist for even length. If $C$ is an $[n,k,d]$ code, then from the Singleton bound, its minimum distance is bounded by
$$d(C)\le n-k+1.$$
A code meeting the above bound is called {\em Maximum Distance Separable} ({MDS}) code and is called almost MDS if its minimum distance is one unit less than the MDS case. A code is called {\it optimal} if it has the highest possible minimum distance for its length and dimension and thus an MDS code is optimal.

\medskip
%%%%%%%%%%%%%%%%%%%%%%%%%%%%%%%%%%
\subsection{Unitary group over finite fields}
As we will see in the next sections, a unitary group over finite fields plays a key role in our construction of Hermitian self-dual codes. 
%Let $\mathbb{F}_q$ denote a finite field of characteristic $p$, that is $q=p^m$ for some positive integer $m$.

The {\em unitary group} of index $n$ over a finite field $q^2$ elements with $q=p^{m}$  is defined by
$${\cal U}_n(q^2):=\{A\in GL(n,q^2)| A{\overline A^\top}=I_n\},$$
where $\overline{A}$ is the matrix obtained from $A$ by taking the conjugate of all entries of A, that is, 
if $A=(a_{i,j})_{1\le i,j\le n}$ then $\overline {A}=(a_{i,j}^{q})_{1\le i,j\le n}.$ 

The order of the group was determined by Wall \cite{Wall} and is given as follows

\begin{equation}
|{\cal U}_n(q^2)|=q^{\frac{n^2-n}{2}}\prod\limits_{i=1}^n(q^i-(-1)^i).
\label{eq:order-group}
\end{equation}
In what follows, we present some elements used to generate a unitary group. 
Let $q=p^{m}$ for some prime $p$ and some positive integer $m.$ Let $\theta=\frac{p-1}{2}\in {\mathbb F}_p$ if $p\neq 2$ and $\theta=1$ otherwise.
Let $a,b,c,d\in {\mathbb F}_{q^2}$ such that 
\begin{equation}
\begin{cases}
%\begin{array}{c}
a^{{q}+1}+b^{{q}+1}=1,\\ 
c^{{q}+1}+d^{{q}+1}=1,\\
a^{{q}}c+b^{{q}}d=0, \\
%\end{array}
\end{cases}\label{eq:transvection}
\end{equation}
and ${\bf v}=(a-1){\bf b}_1+b {\bf b}_2,{\bf w}=c{\bf b}_1+(d-1) {\bf b}_2$, ${\bf u}={\bf b}_1+{\bf b}_2+{\bf b}_3+{\bf b}_4$ if $n\ge 4$, where $B=\{ {\bf b}_1,\hdots,{\bf b}_n \}$ is the canonical basis of $\mathbb{F}_{q^2}^n$. 

Define two linear maps
\begin{equation}\label{eq:linearmaps}
\begin{array}{llll}
T_{{\bf u},\theta}: &\mathbb{F}_{q^2}^n \longrightarrow \mathbb{F}_{q^2}^n,& T_{a,b,c,d}:& \mathbb{F}_{q^2}^n \longrightarrow \mathbb{F}_{q^2}^n\\
 &{\bf x} \mapsto {\bf x} +\theta({\bf x\cdot u}){\bf u}& &{\bf x} \mapsto {\bf x}+({\bf x \cdot {\bf v}}){\bf b}_1+({\bf x}\cdot {\bf w}){\bf b}_2.\\
\end{array}
\end{equation}
Denote
$${\cal T}_n{(q^2)}:=
\begin{cases}
\langle {\cal P}_n ,T_{a,b,c,d}\rangle\text{ if } n\leq 3,\\
\langle {\cal P}_n ,T_{a,b,c,d},T_{{\bf u},\theta}\rangle, \text{ otherwise},
\end{cases}$$
where ${\cal P}_n$ is the permutation group of $n$ elements.

\begin{lem} Let $q=p^m.$ There exist solutions $a,b,c,d\in \F_{q^2}$ for the system of equations defined by
\begin{equation}
\begin{cases}
%\begin{array}{c}
a^{{q}+1}+b^{{q}+1}=1,\\ 
c^{{q}+1}+d^{{q}+1}=1,\\
a^{{q}}c+b^{{q}}d=0, \\
%\end{array}
\end{cases}
\end{equation}
\end{lem}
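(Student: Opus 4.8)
The plan is to read the three conditions as the single statement that the $2\times 2$ matrix $\left(\begin{smallmatrix} a & b \\ c & d\end{smallmatrix}\right)$ lies in ${\cal U}_2(q^2)$, and to build such a matrix one row at a time using the norm map $N\colon \F_{q^2}\to \F_q$, $N(x)=x^{q+1}$. The first fact I would record is that $N$ restricted to $\F_{q^2}^{*}$ is a surjective homomorphism onto $\F_q^{*}$: its kernel is the cyclic group of $(q+1)$-st roots of unity (order $q+1$, since $q+1\mid q^2-1$), so the image has order $(q^2-1)/(q+1)=q-1=|\F_q^{*}|$. In particular every element of $\F_q$ is of the form $x^{q+1}$ for some $x\in\F_{q^2}$, a fact I will use repeatedly. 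I would also note that $N(x)^q=N(x)$, as $N(x)\in\F_q$.

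First I would produce $a,b$ satisfying $a^{q+1}+b^{q+1}=1$. Since $N(a),N(b)\in\F_q$, this is merely the requirement that two norms sum to $1$; by surjectivity of $N$ I may pick any $\alpha\in\F_q$, choose $a$ with $N(a)=\alpha$ and $b$ with $N(b)=1-\alpha$. Taking $\alpha\notin\{0,1\}$ when $q>2$ yields genuinely nontrivial rows, while the choice $a=1,\,b=0$ always works and already settles bare existence.

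Next, with $a,b$ fixed I would solve the remaining two equations for $c,d$. The third equation $a^{q}c+b^{q}d=0$ is linear in $(c,d)$; assuming $b\neq 0$ I would set $d=-a^{q}b^{-q}c$ and substitute into the second equation. Using $a^{q(q+1)}=N(a)$ and $b^{q(q+1)}=N(b)$, the second equation collapses to $c^{q+1}\bigl(1+N(a)/N(b)\bigr)=1$, that is $c^{q+1}=N(b)$, after invoking $N(a)+N(b)=1$. This is solvable because $N(b)$ is itself a norm: the explicit choice $c=b$ works, and then a short computation using $b^{1-q^2}=1$ and $(-1)^{q+1}=1$ shows $N(d)=N(a)$, so that $c^{q+1}+d^{q+1}=N(b)+N(a)=1$ holds as well. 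The degenerate case $b=0$ (which forces $N(a)=1$) is handled symmetrically by taking $c=0$ and any $d$ with $N(d)=1$.

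The only genuine ingredient is the surjectivity of the norm map; everything after that is linear algebra in the single unknown pair $(c,d)$ together with a direct verification, so I expect no real obstacle. The sole point requiring care is the bookkeeping of the degenerate cases $b=0$ (and dually $a=0$), so that the construction is valid over every $\F_{q^2}$, including the small field $q=2$ where the only admissible norm pairs are $(1,0)$ and $(0,1)$.
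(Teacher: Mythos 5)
Your proof is correct, but it takes a different route from the paper's. The paper makes the short-circuit choice $(c,d)=(-b,a)$ and then simply picks $a,b$ in the subfield $\F_q$ (so that $a^q=a$, $b^q=b$), whence the first equation becomes $a^2+b^2=1$ (satisfied e.g.\ by $(1,0)$) and the orthogonality equation $a^q(-b)+b^qa=-ab+ba=0$ holds automatically; no property of the norm map is needed. You instead invoke surjectivity of the norm $N(x)=x^{q+1}$ onto $\F_q$ to produce \emph{arbitrary} $a,b\in\F_{q^2}$ with $N(a)+N(b)=1$, and then solve the linear constraint $a^qc+b^qd=0$ for $(c,d)$ in general, arriving at $c=b$, $d=-a^qb^{1-q}$ (with the degenerate case $b=0$ handled separately); your verification via $b^{1-q^2}=1$ and $(-1)^{q+1}=1$ is sound. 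What your approach buys: it parameterizes essentially all solutions with $b\neq 0$, including ones with $a,b\notin\F_q$, which is relevant because the paper's own remark notes that the trivial solution $(a,b)=(1,0)$, $(c,d)=(0,1)$ makes $T_{a,b,c,d}$ the identity and is useless for generating the unitary group --- your construction guarantees nontrivial solutions for every $q$ (even $q=2$, where norm-one elements other than $1$ exist). What the paper's approach buys: brevity, since restricting to the subfield kills the conjugation and no structural fact about the norm is required --- though as written its proof is slightly sloppy (it asserts $a^q=a$, $b^q=b$ for solutions of the first equation, which is only true because the authors implicitly \emph{choose} $a,b\in\F_q$, not a consequence of the first equation).
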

\begin{proof} Since we can choose $c=-b$ and $d=a,$ it is enough to prove that there exist $a,b\in \F_{q^2}$ such that
\begin{equation}
\begin{cases}
a^{{q}+1}+b^{{q}+1}=1\\
a^{{q}}(-b)+b^{{q}}a=0\\
\end{cases}\label{eq:transvection2}
\end{equation}
There exist $a,b\in \F_{q^2}$ such that the first equation of the system (\ref{eq:transvection2}) holds. Now we have that $a^{q}=a,b^{q}=b$ and thus the second equation of the system also holds.
\end{proof}
\begin{rem} 
It should be noted that $(a,b)=(1,0),(c,d)=(0,1)$ are solutions to the system (\ref{eq:transvection}). However with these values, the linear map $T_{a,b,c,d}$ is just the identity map and thus it is out of interest. 
\end{rem}
\begin{prop} The group ${\cal T}_n{(q^2)}$ is a subgroup of ${\cal U}_n(q^2).$
\end{prop}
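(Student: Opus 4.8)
The plan is to exploit the fact that $\mathcal{U}_n(q^2)$ is itself a group. Since $\mathcal{T}_n(q^2)$ is by definition the subgroup generated by $\mathcal{P}_n$, by $T_{a,b,c,d}$, and (when $n\ge 4$) by $T_{\mathbf{u},\theta}$, closure of $\mathcal{U}_n(q^2)$ under products and inverses means it suffices to verify that each of these generators individually satisfies the defining relation $A\overline{A}^\top = I_n$. Once every generator is shown to be unitary, the whole generated subgroup is forced into $\mathcal{U}_n(q^2)$.

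First I would dispose of the permutation group. Each $P\in\mathcal{P}_n$ is represented by a $0/1$ permutation matrix, whose entries lie in the prime field and are therefore fixed by the Frobenius conjugation, so $\overline{P}=P$; since permutation matrices are orthogonal, $P\overline{P}^\top = PP^\top = I_n$, and hence $\mathcal{P}_n\subseteq\mathcal{U}_n(q^2)$. Next I would write down the matrix of $T_{a,b,c,d}$. Reading off the image coordinate-by-coordinate, the map fixes $\mathbf{b}_3,\dots,\mathbf{b}_n$ and sends $(x_1,x_2)$ to $(ax_1+bx_2,\,cx_1+dx_2)$, so in the standard basis its matrix $A$ agrees with $I_n$ outside the top-left $2\times 2$ block, where it equals $\begin{pmatrix} a & b \\ c & d\end{pmatrix}$. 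The two nontrivial rows of $A$ are thus $(a,b,0,\dots,0)$ and $(c,d,0,\dots,0)$, and the relation $A\overline{A}^\top=I_n$ reduces to three scalar identities: the Hermitian norms of these rows equal $1$, which are exactly the first two equations of (\ref{eq:transvection}), and the Hermitian product of the two rows, $ac^q+bd^q$, vanishes. This last expression is the $q$-th power of $a^qc+b^qd$ (using $x^{q^2}=x$ on $\mathbb{F}_{q^2}$), so the third equation of (\ref{eq:transvection}) supplies it, giving $T_{a,b,c,d}\in\mathcal{U}_n(q^2)$.

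Finally, for $n\ge 4$ I would treat $T_{\mathbf{u},\theta}$, whose matrix is the rank-one update $M = I_n + \theta\,\mathbf{u}\mathbf{u}^\top$, with $\mathbf{u}=\mathbf{b}_1+\mathbf{b}_2+\mathbf{b}_3+\mathbf{b}_4$. Because $\theta\in\mathbb{F}_p$ and the entries of $\mathbf{u}$ lie in the prime field, $M$ is symmetric and fixed by conjugation, so $M\overline{M}^\top = M^2$. Using $\mathbf{u}^\top\mathbf{u}=\mathbf{u}\cdot\mathbf{u}=4$ one finds $(\mathbf{u}\mathbf{u}^\top)^2 = 4\,\mathbf{u}\mathbf{u}^\top$, whence $M^2 = I_n + (2\theta+4\theta^2)\,\mathbf{u}\mathbf{u}^\top$, and the scalar coefficient factors as $2\theta(1+2\theta)$. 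The definition of $\theta$ is engineered precisely so that this vanishes: for $p\neq 2$ one has $2\theta=p-1=-1$, hence $1+2\theta=0$, while for $p=2$ one has $2\theta=0$. In either case $M^2=I_n$, so $T_{\mathbf{u},\theta}\in\mathcal{U}_n(q^2)$, completing the argument.

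I expect the only genuine subtlety to be the bookkeeping: fixing one consistent convention for the matrices of the two linear maps and matching the off-diagonal orthogonality condition $ac^q+bd^q=0$ to the given equation $a^qc+b^qd=0$ through the Frobenius, together with confirming that the prescribed value of $\theta$ annihilates the coefficient $2\theta+4\theta^2$ in both characteristics. These are all short finite-field verifications rather than a substantial obstacle; the conceptual content is simply that the defining data of $T_{a,b,c,d}$ encodes Hermitian orthonormality of rows, and that $\theta$ is chosen exactly to make the Euclidean-type transvection an involution.
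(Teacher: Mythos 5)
Your proof is correct and takes essentially the same route as the paper: both reduce the statement to checking that each generator of ${\cal T}_n(q^2)$ (the permutation matrices, $T_{a,b,c,d}$, and $T_{{\bf u},\theta}$) satisfies $A\overline{A}^\top=I_n$, using that entries in $\F_p$ are fixed by conjugation. The only difference is that you make explicit the verifications the paper merely asserts, namely the Frobenius step $ac^q+bd^q=(a^qc+b^qd)^q=0$ and the computation $M^2=I_n+2\theta(1+2\theta)\,{\bf u}{\bf u}^\top=I_n$, both of which are correct.
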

\begin{proof} Obviously for any $A\in  {\cal P}_n$, we have $A=\overline{A}$ and hence $A$ is  Hermitian orthogonal. We also have
$T_{{\bf u},\theta}=\overline{T_{{\bf u},\theta}}$ since it is a matrix with all entries in $\F_p$ and thus $T_{{\bf u},\theta}$ is Hermitian orthogonal . Finally under the conditions given in Eq. (\ref{eq:transvection}) we have that $T_{a,b,c,d}$ satisfies 
$T_{a,b,c,d}(b_i)*T_{a,b,c,d}(b_j)=\delta_{ij}$, where $\delta_{ij}$ is the Kronecker symbol, and it is thus Hermitian orthogonal.
\end{proof}

The orders of the subgroup ${\cal T}_n(q^2)$ are computed using Magma \cite{mag} and given in Table \ref{Table:order} as well as compared with the orders of the unitary group given in Eq. (\ref{eq:order-group}). 
%From our computational results, these two groups are equal for $n\ge 3.$
\begin{table}[h]\caption{The orders of ${\cal T}_n(q^2)$ computed in Magma \cite{mag} compared with those of  ${\cal U}_n(q^2)$ Eq. (\ref{eq:order-group}) for $n=3,4$.}
$$
\begin{array}{c|c|c|c|c}

q^2&|{\cal T}_3(q^2)|&|{\cal U}_3(q^2)|&|{\cal T}_4(q^2)|&|{\cal U}_4(q^2)| \\
\hline

  3^2  & 24192& 24192 & 52254720& 52254720 \\

    4^2& 312000& 312000 &5091840000& 5091840000 \\    

   5^2 & 2268000& 2268000 & 176904000000& 176904000000 \\

   7^2 & 45308928& 45308928 &37298309529600& 37298309529600 
\end{array}
$$\label{Table:order}
\end{table}
%\subsection{Algorithm for searching unitary matrices}

We have already seen that the order of unitary groups in Table \ref{Table:order} grows very fast when the dimension becomes larger. The memory space for storing such matrices will be a challenging problem and visiting all the elements in the group is not possible say for $n\ge 4$. To search for optimal Hermitian self-dual codes in the next section we need the matrices with each row having as many non-zero entries as possible. For that, we propose the following algorithm:\\
%%%%%%%%%%%%%%%%%%%%%%%%%%%%%%%%%%%%%
\noindent \rule[-.1cm]{\linewidth}{0.2mm}
{\bf Algorithm 1}\\
\noindent \rule[.3cm]{\linewidth}{0.2mm}

\vspace{-.3cm}
%\noindent\textbf{Input:} $m,s$: positive integers and $n,d:$ index of unitary group, minimum distance respectively\\
%\textbf{Output:} A Hermitian self-dual $[2n,n,\ge d]$ code $C$ having its generator matrice in the form $(I_n|\alpha L)$

\noindent\textbf{Input:} $n\ge 4,m,s$: positive integers\\
\textbf{Output:} A list $S$ of $n\times n$ unitary matrices

\vspace{.2cm}
\noindent
\begin{small}
\begin{enumerate}
\item $T_{0}:=$ transposition; 
\item $T_{1}:=$ cycle of length $n$;
\item $T_{u,\theta}:=$ linear map in Eq. (\ref{eq:linearmaps});
\item $T_{a,b,c,d}:=$ linear map in Eq. (\ref{eq:linearmaps});
\item $N:=(T_{u,\theta}T_1T_{a,b,c,d}T_0)^m$;
\item $P:=(T_{a,b,c,d}T_0T_{u,\theta}T_1)^m$;
\item $Q:=(T_{u,\theta}T_0T_{a,b,c,d}T_1)^m$;
\item $R:=(T_{a,b,c,d}T_1T_{u,\theta}T_0)^m$;
\item $S:=[~];$
\item  for $i:=0$ to $s$ do
\item  for $j:=0$ to $s$ do
\item  for $k:=0$ to $s$ do
\item  for $l:=0$ to $s$ do
\item $L:=N^iP^jQ^kR^l$;
%\item if each row of $L$ has Hamming weight at least $d-1$ then $C:={\tt Span}(I_n|\alpha L)$;
%\item if the minimum distance of  $C$ is at least $d$ then return $C$; breakall;
\item {\tt Append}$(S,L)$;
%\item end if;
%\item end if;
\item end for;
\item end for;
\item end for;
\item end for;
\item return $S$;
\end{enumerate}
\end{small}
\noindent \rule[.3cm]{\linewidth}{0.2mm}
%%%%%%%%%%%%%%%%%%%%%%%%%%%%%%%%%
\section{Construction of Hermitian self-dual codes}\label{section:con}
In this section, we introduce some constructions of Hermitian self-dual codes based on the elements in the unitary group. 

First let us recall the classical constructions of Hermitian self-dual codes  provided by Gaborit \cite{Gabo}, which are known to be the pure and bordered quadratic double circulant construction whose generator matrices of the code are of the following forms
\begin{equation*}
G_r=(I_r|Q_r)
\end{equation*} and 
\begin{equation*} \label{eq:circulant}
G_{r}=\left(\begin{array}{c|ccc|c|ccc}
1&0&\cdots&0&\alpha&\beta&\cdots &\beta\\
\hline
0&&&&\gamma&&&\\
\vdots&&I_r&&\vdots&&Q_r&\\
0&&&&\gamma&&&\\
\end{array}\right)
\end{equation*}
respectively, where $\alpha,\beta,\gamma\in \F_{q^2}$ and $Q_q$ is a $r\times r$ circulant matrix indexed by quadratic residues in $\F_{q^2}$.

There is also a recursive construction of Hermitian self-dual codes given by Kim et al. \cite{KimLee} as follows.

\begin{prop}([building-up]\cite{KimLee})
\label{kimthm} Let $a$ be in $\F_{q^2}$ such that $a^{q+1}=-1$ in $\F_{q^2}$. Let $G_0$ be a generator matrix (not necessarily in standard form) of a self-dual code $C_0$ over $\F_{q^2}$ of length $2n$, where ${\bf g}_i$ are the rows of the matrices $G_0$, for $1\le i \le n$. Let ${\bf x}=(x_1,\hdots,x_n,x_{n+1},\hdots,x_{2n})$ be a vector in $\F_{q^2}^{2n}$ with ${\bf x* x}=-1$ in $\F_{q^2}$. Suppose that $y_i:={\bf x}*{\bf g}_i$ for $1\le i \le n$. Then the following matrix:

\begin{equation*} \label{eq:build-up}
G = \left( \begin{array}{cc|cccccc}
  1   & 0   & &  &  & {\bf{x}}   &  &  \\ \hline
  -y_1 & ay_1 &    &        &     &         &        &        \\
  \vdots & \vdots &  &  & & G_0  & & \\
 -y_{n} & ay_{n} &    &        &     & \\
\end{array}
\right)
\end{equation*}
generates a Hermitian self-dual code $C$ over $\F_{q^2}$ of length $2n+2.$
\end{prop}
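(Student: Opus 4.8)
\quad The plan is to verify directly that the code $C$ generated by $G$ is Hermitian self-orthogonal and has dimension $n+1$; since $C$ has length $2n+2$, these two facts together force $C=C^{\perp_H}$. Throughout I write ${\bf r}_0=(1,0,{\bf x})$ for the top row and ${\bf r}_i=(-y_i,\,ay_i,\,{\bf g}_i)$ for $1\le i\le n$ for the remaining rows, so that $G$ has exactly $n+1$ rows and $2n+2$ columns.

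First I would establish that $\dim C=n+1$ by proving the rows ${\bf r}_0,\dots,{\bf r}_n$ are $\F_{q^2}$-linearly independent. Assuming a relation $c_0{\bf r}_0+\sum_{i=1}^n c_i{\bf r}_i={\bf 0}$, reading off the second coordinate gives $a\sum_i c_iy_i=0$; since $a^{q+1}=-1$ forces $a\neq 0$, this yields $\sum_i c_iy_i=0$, and then the first coordinate gives $c_0=\sum_i c_iy_i=0$. Restricting the relation to the last $2n$ coordinates leaves $\sum_i c_i{\bf g}_i={\bf 0}$, and because the ${\bf g}_i$ are the rows of a generator matrix of the $[2n,n]$ self-dual code $C_0$ they are independent, so every $c_i=0$.

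The core step is the self-orthogonality, for which I would compute ${\bf r}_\mu*{\bf r}_\nu$ for all $\mu,\nu\in\{0,\dots,n\}$ under the Hermitian form. For the top row, ${\bf r}_0*{\bf r}_0=1+{\bf x}*{\bf x}=1+(-1)=0$ by the hypothesis ${\bf x}*{\bf x}=-1$. For two lower rows the first two coordinates contribute $y_iy_j^{q}+(ay_i)(ay_j)^{q}=(1+a^{q+1})\,y_iy_j^{q}$, which vanishes precisely because $a^{q+1}=-1$, while the remaining coordinates contribute ${\bf g}_i*{\bf g}_j=0$ since $C_0$ is Hermitian self-dual; hence ${\bf r}_i*{\bf r}_j=0$. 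The remaining and most delicate case is the cross term ${\bf r}_0*{\bf r}_i$: here the contribution $-y_i$ placed in the first coordinate of ${\bf r}_i$ is designed to cancel the contribution ${\bf x}*{\bf g}_i=y_i$ coming from the trailing $2n$ coordinates. I expect this cross term to be the main obstacle, since it is exactly the point where the $q$-th-power conjugations of the Hermitian form must be tracked carefully and where the precise placement of the conjugate in the definition $y_i={\bf x}*{\bf g}_i$ is used; this is the only place where the argument genuinely departs from its Euclidean analogue.

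Finally I would assemble the pieces: self-orthogonality gives $C\subseteq C^{\perp_H}$, and since $\dim C=n+1$ we have $\dim C^{\perp_H}=(2n+2)-(n+1)=n+1=\dim C$, whence $C=C^{\perp_H}$. Thus $C$ is a Hermitian self-dual code over $\F_{q^2}$ of length $2n+2$, as claimed.
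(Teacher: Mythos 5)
The paper itself gives no proof of this proposition (it is quoted from \cite{KimLee} as a known result), so your proposal has to stand on its own --- and it does not, because the single step you defer is precisely the step that fails as the statement is set up. Your dimension argument and the computations ${\bf r}_0*{\bf r}_0=1+{\bf x}*{\bf x}=0$ and ${\bf r}_i*{\bf r}_j=(1+a^{q+1})\,y_iy_j^q+{\bf g}_i*{\bf g}_j=0$ are correct. But the cross term, which you only announce as ``designed to cancel'' and never compute, does not cancel under the paper's conventions. With ${\bf u}*{\bf v}=\sum_k u_kv_k^q$ and $y_i:={\bf x}*{\bf g}_i$, the entry $-y_i$ of ${\bf r}_i$ enters ${\bf r}_0*{\bf r}_i$ through its $q$-th power, so
\[
{\bf r}_0*{\bf r}_i \;=\; 1\cdot(-y_i)^q+0\cdot(ay_i)^q+{\bf x}*{\bf g}_i \;=\; y_i-y_i^q,
\]
which vanishes only when $y_i\in\F_q$, and there is no reason for that to hold. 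A concrete failure over $\F_4$: take $C_0=\langle(1,1)\rangle$, ${\bf x}=(\omega,0)$ (so ${\bf x}*{\bf x}=\omega^3=1=-1$ in characteristic $2$), $a=1$; then $y_1={\bf x}*{\bf g}_1=\omega$ and ${\bf r}_0*{\bf r}_1=\omega-\omega^2=1\neq 0$, so $G$ is not even self-orthogonal.

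What your plan missed is that the cancellation requires a conjugate somewhere, and completing the computation would have revealed that the statement as transcribed in this paper carries a conjugation typo. Either define $y_i:={\bf g}_i*{\bf x}=({\bf x}*{\bf g}_i)^q$ (which is the formulation that matches Kim--Lee), in which case ${\bf r}_0*{\bf r}_i=(-y_i)^q+{\bf x}*{\bf g}_i=-y_i^q+y_i^q=0$; or keep $y_i={\bf x}*{\bf g}_i$ but place $-y_i^q$ and $ay_i^q$ in the first two columns, in which case ${\bf r}_0*{\bf r}_i=(-y_i^q)^q+y_i=-y_i+y_i=0$ and ${\bf r}_i*{\bf r}_j=(1+a^{q+1})\,y_i^qy_j=0$. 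With either correction, the remainder of your argument (linear independence of the rows, hence $\dim C=n+1$, hence $C=C^{\perp_H}$ by the dimension count) goes through verbatim. As written, however, the proposal proves self-orthogonality only for the pairs that were never in doubt and leaves open --- indeed false under the stated conventions --- exactly the pair it identifies as the crux.
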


It should be noted that if $\omega$ is a primitive root of $\F_{q^{2}}$ and $\alpha=\omega^{\frac{q-1}{2}}$, then we have $\alpha^{q+1}=-1.$
%From elementary number theory \cite{IR}, we know that for $q=p^m$ the equation $\alpha^{p^m+1}= -1 $ always has solutions.
Thus we derive the construction of Hermitian self-dual codes over ${\mathbb F}_{q^2}$ as follows.

\begin{prop}\label{prop:sd1}Let $q=p^m$ with $p$ being a prime.
 Let $L \in {\cal U}_{n}(q^2)$ and fix  $\alpha \in {\mathbb F}_{q^2}$ such that $\alpha^{q+1}=-1.$ Then the matrix $G_{n}$ of the following form:
\begin{equation}\label{eq:sd1}
G_{n}=\left(
\begin{array}{c|c}
L^\top & \alpha L\\
\end{array}
\right),
\end{equation}
 generates a $q^2-$ary self-dual $[2n,n]$ code.
\end{prop}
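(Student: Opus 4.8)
The plan is to verify the two defining properties of a Hermitian self-dual $[2n,n]$ code directly from the generator matrix: that $C$ has dimension $n$, and that $C$ is Hermitian self-orthogonal. Since $\dim C^{\perp_H} = 2n - \dim C$, these two facts together force $C = C^{\perp_H}$.

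First I would settle the dimension. Because $L \in \mathcal{U}_n(q^2) \subset GL(n,q^2)$, the matrix $L$ is invertible, hence so is its transpose $L^\top$. The left-hand block of $G_n$ is therefore an invertible $n \times n$ matrix, so the $n$ rows of $G_n$ are linearly independent and $\dim C = n$. As the length is $2n$, this is exactly the dimension required of a self-dual code.

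The heart of the argument is to show $C \subseteq C^{\perp_H}$, i.e. that the Hermitian Gram matrix of the rows of $G_n$ vanishes. Recalling that $\mathbf{x} * \mathbf{y} = \mathbf{x} \cdot \overline{\mathbf{y}}$, the relevant object is $G_n \overline{G_n}^\top$ rather than $G_n G_n^\top$; one must conjugate exactly one factor. Writing $\overline{G_n} = \left( (\overline{L})^\top \mid \alpha^q \overline{L} \right)$ (using $\overline{\alpha} = \alpha^q$) and multiplying block-wise gives
$$G_n \overline{G_n}^\top = L^\top \overline{L} + \alpha^{q+1}\, L (\overline{L})^\top.$$
It remains to evaluate the two products. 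The defining relation $L (\overline{L})^\top = I_n$ handles the second term directly. For the first, since $L$ is square and invertible, $L(\overline{L})^\top = I_n$ also yields $(\overline{L})^\top L = I_n$; transposing this last identity gives $L^\top \overline{L} = I_n$. Substituting $\alpha^{q+1} = -1$ then produces $G_n \overline{G_n}^\top = I_n - I_n = 0$, so every pair of rows of $G_n$ is Hermitian orthogonal and $C$ is Hermitian self-orthogonal.

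Finally, combining $C \subseteq C^{\perp_H}$ with $\dim C = n = 2n - n = \dim C^{\perp_H}$ forces $C = C^{\perp_H}$, so $C$ is a Hermitian self-dual $[2n,n]$ code. The only delicate point, and the step I would treat most carefully, is the conjugate-transpose bookkeeping: one must remember that the Hermitian inner product conjugates a single argument (which is what produces the factor $\alpha^{q+1}$ from $\alpha \cdot \alpha^q$), and one must derive $L^\top \overline{L} = I_n$ from the stated unitary condition rather than assuming it. Both become short once the unitarity of $L$ is exploited in its transposed and conjugated forms.
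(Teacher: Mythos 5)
Your proof is correct and follows essentially the same route as the paper's: the paper's one-line observation that $L^\top\in\mathcal{U}_n(q^2)$ is exactly your derived identity $L^\top\overline{L}=I_n$, and both arguments then combine it with $L\overline{L}^\top=I_n$, $\alpha^{q+1}=-1$, and the invertibility of $L$ to get self-orthogonality plus dimension $n$. Your version merely spells out the Gram-matrix computation that the paper leaves implicit.
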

\begin{proof} First note that if $L$ is in ${\cal U}_n(q^2)$ then so is $L^\top$. Let $g_i$ be the $i-$th row of $G_n$. It follows that $g_i*{g_j}=0$ for $1\le i,j\le n,$ which means that the code having $G_n$ as its generator matrix is Hermitian self-orthogonal. Since the dimension of the code is equal to the row rank of $L$, the result follows.
\end{proof}

%The above construction is based on the elements in the unitary group. However the order of unitary groups ${\cal U}_n(q^2)$ is extremely large when $n$ grows bigger and thus is not possible to visit all elements in the group say apart from $n=4$. For sake of the above and next constructions, we propose an algorithm to generate the unitary matrices as follows

%Parameters of Hermitian self-dual codes over $\F_{2^2}$ and $\F_{3^2},\F_{5^2}$ from Construction (\ref{eq:sd1}) are given in Table \ref{table:F4} and Tables \ref{table:F9} \ref{table:F25} respectively and are compared with the quadratic double circulant construction \cite{Gabo} and the building-up construction \cite{KimLee} respectively. With the same parameters available, our constructions perform better than those of \cite{Gabo} and \cite{KimLee}. More precisely over $\F_{2^2}$ we obtain a Hermitian self-dual code with parameters $[18,9,8]$ while the parameters in \cite{Gabo} are just $[16,8,6].$ Also over $\F_{3^2}$ we obtain a Hermitian self-dual code with parameters $[10,5,6]$ but the parameters in \cite{KimLee} are just $[10,5,5].$

Similar to Construction (\ref{eq:sd1}), we have the following. 
\begin{prop}\label{prop:sd2}Let $q=p^m$ with $p$ being a prime.
 Let $L \in {\cal U}_{n}({q^2})$ and fix  $\alpha \in {\mathbb F}_q$ such that $\alpha^{q+1}= -1.$ Then the matrix $G_{n}$ of the following form:
\begin{equation}\label{eq:sd2}
G_{n}=\left(
\begin{array}{c|c}
I_n & \alpha L\\
\end{array}
\right),
\end{equation}
 generates a $q^2-$ary Hermitian self-dual $[2n,n]$ code.
\end{prop}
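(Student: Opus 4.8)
The plan is to mimic the proof of Proposition~\ref{prop:sd1} but track how the identity block on the left interacts with the Hermitian inner product, since replacing $L^\top$ by $I_n$ changes exactly which orthogonality relations must be checked. Write $g_i$ for the $i$-th row of $G_n$ in Eq.~(\ref{eq:sd2}), so that $g_i = (e_i \mid \alpha \ell_i)$, where $e_i$ is the $i$-th standard basis vector and $\ell_i$ is the $i$-th row of $L$. The code is automatically $[2n,n]$ since the left block $I_n$ forces the rows to be linearly independent, giving row rank exactly $n$; thus it suffices to verify Hermitian self-orthogonality, i.e. $g_i * g_j = 0$ for all $1\le i,j\le n$, because a self-orthogonal code of length $2n$ and dimension $n$ is necessarily self-dual.

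First I would expand the Hermitian product block by block. By definition of the Hermitian inner product, $g_i * g_j = e_i * e_j + (\alpha \ell_i) * (\alpha \ell_j) = \delta_{ij} + \alpha \alpha^q (\ell_i * \ell_j)$, using that $e_i * e_j = \delta_{ij}$ and that the Hermitian form is conjugate-linear in the second argument so the scalar $\alpha$ in the right block contributes $\alpha \cdot \alpha^q = \alpha^{q+1}$. Next I would invoke the two governing hypotheses: since $L \in {\cal U}_n(q^2)$, its rows satisfy $\ell_i * \ell_j = \delta_{ij}$ (this is precisely the condition $L\overline{L}^\top = I_n$ unpacked rowwise), and since $\alpha^{q+1} = -1$ by the choice of $\alpha$, the cross term becomes $-\delta_{ij}$. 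Combining, $g_i * g_j = \delta_{ij} - \delta_{ij} = 0$ for every pair, which is exactly the required self-orthogonality.

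The only genuinely delicate point, and the one I would state carefully, is the existence and placement of $\alpha$: the proposition writes $\alpha \in \mathbb{F}_q$ with $\alpha^{q+1} = -1$, whereas the remark preceding the propositions guarantees such an $\alpha$ in $\mathbb{F}_{q^2}$ (taking $\alpha = \omega^{(q-1)/2}$ for $\omega$ a primitive root). I would flag that the computation $g_i * g_j = \delta_{ij} + \alpha^{q+1}\delta_{ij}$ only uses $\alpha^{q+1} = -1$ and not the ambient field of $\alpha$, so the argument goes through verbatim regardless; one should simply take $\alpha \in \mathbb{F}_{q^2}$ as furnished by the remark. Everything else is the same routine bilinear bookkeeping as in Proposition~\ref{prop:sd1}, so no further obstacle arises and the proof closes immediately after the block expansion.
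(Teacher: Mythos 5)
Your proof is correct and follows essentially the same route as the paper's: expand $g_i * g_j$ blockwise, use that the rows of $L\in{\cal U}_n(q^2)$ are Hermitian orthonormal together with $\alpha^{q+1}=-1$ to get self-orthogonality, then conclude self-duality from the dimension count forced by the $I_n$ block. Your additional observation that the hypothesis $\alpha\in\mathbb{F}_q$ should read $\alpha\in\mathbb{F}_{q^2}$ (as in Proposition~\ref{prop:sd1}, and as furnished by $\alpha=\omega^{(q-1)/2}$) is a genuine and worthwhile correction to the statement, since the proof uses only $\alpha^{q+1}=-1$.
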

\begin{proof} Let $g_i$ be the $i-$th row of $G_n$. It follows that $g_i*{g_j}=0$ for $1\le i,j\le n,$ which means that the code having $G_n$ as its generator matrix is Hermitian self-orthogonal. Since the dimension of the code is equal to $n$, the result follows.
\end{proof}

In the rest of the paper, $J_n$ denotes the $n\times n$ matrix with all entries equal to $1.$

\begin{prop}  Let $n\equiv 2\pmod p$ and $q=p^m.$ Fix $a\in {\mathbb F}_q$ such that $a^{q+1}= -1.$ Then for any $L{\in \cal U}_n({q^2})$, a code with the following generator matrix is a $q^2-$ary Hermitian self-dual $[2n,n]$ code:

\begin{equation} \label{eq:sd3}
G_{n}=\left(\begin{array}{c|c}
&\\
J_n-I_n&a L\\
&\\
\end{array}\right).
\end{equation}
\end{prop}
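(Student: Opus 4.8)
The plan is to verify directly that the $n$ rows of $G_n$ are pairwise Hermitian orthogonal, and then to show that $G_n$ has full row rank $n$, which promotes self-orthogonality to self-duality. First I would name the rows: let $g_i$ denote the $i$-th row of $G_n$, whose left half is the $i$-th row of $J_n-I_n$ — call it $\mathbf{r}_i=\mathbf{1}-\mathbf{e}_i$, the all-ones vector with a $0$ placed in position $i$ — and whose right half is $a\boldsymbol{\ell}_i$, where $\boldsymbol{\ell}_i$ is the $i$-th row of $L$. Since the Hermitian form splits additively across the left/right blocks, I would write $g_i*g_j=(\mathbf{r}_i*\mathbf{r}_j)+(a\boldsymbol{\ell}_i)*(a\boldsymbol{\ell}_j)$ and analyze the two contributions separately.

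For the right contribution, pulling the scalar through the Hermitian form gives $(a\boldsymbol{\ell}_i)*(a\boldsymbol{\ell}_j)=a^{q+1}(\boldsymbol{\ell}_i*\boldsymbol{\ell}_j)$. Here I would invoke that $L\in\mathcal{U}_n(q^2)$ means $L\overline{L}^\top=I_n$, i.e. the rows of $L$ are Hermitian orthonormal, $\boldsymbol{\ell}_i*\boldsymbol{\ell}_j=\delta_{ij}$; combined with the hypothesis $a^{q+1}=-1$ this contributes exactly $-\delta_{ij}$. For the left contribution, the key simplification is that the entries of $J_n-I_n$ lie in $\F_p$, so conjugation acts trivially on them ($x^q=x$ for $x\in\F_p$) and $\mathbf{r}_i*\mathbf{r}_j$ reduces to the ordinary dot product $\mathbf{r}_i\cdot\mathbf{r}_j$. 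A direct count then gives $\mathbf{r}_i\cdot\mathbf{r}_i=n-1$ (the vector has $n-1$ ones) and, for $i\neq j$, $\mathbf{r}_i\cdot\mathbf{r}_j=n-2$ (the two vectors deviate from the all-ones pattern precisely in the two distinct positions $i$ and $j$).

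Adding the two contributions yields $g_i*g_i=(n-1)-1=n-2$ and $g_i*g_j=(n-2)+0=n-2$ for $i\neq j$; thus both the diagonal and off-diagonal inner products collapse to the single value $n-2$. The crux is that the hypothesis $n\equiv 2\pmod p$ forces $n-2=0$ in $\F_{q^2}$ simultaneously in every case, so each $g_i*g_j$ vanishes and the code is Hermitian self-orthogonal. Finally, since $a^{q+1}=-1\neq 0$ forces $a\neq 0$ and $L$ is invertible, the right block $aL$ is already an invertible $n\times n$ matrix; hence $G_n$ has rank $n$, the code has dimension $n$, and a Hermitian self-orthogonal $[2n,n]$ code necessarily coincides with its Hermitian dual. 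I do not anticipate a genuine obstacle, as the whole argument is a short computation; the only point demanding care is the off-diagonal count producing $n-2$ rather than $n-1$, since it is exactly this coincidence that aligns the diagonal and off-diagonal cases and accounts for the precise congruence $n\equiv 2\pmod p$.
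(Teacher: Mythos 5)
Your proof is correct and follows essentially the same route as the paper's: split each row into its $J_n-I_n$ block and its $aL$ block, use Hermitian orthonormality of the rows of $L$ together with $a^{q+1}=-1$ to see that every inner product $g_i*g_j$ equals $n-2\equiv 0\pmod p$, and then invoke invertibility of $L$ to get rank $n$ and hence self-duality. You simply make explicit the diagonal/off-diagonal case distinction that the paper compresses into a single line.
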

\begin{proof} Let $g_i=(l_i|r_i)$ be the $i-$th row of $G_n$. It follows that $g_i* g_j=l_i* l_j+r_i* r_j=n-2\equiv 0\pmod p$ for $1\le i,j\le n,$ which means that the code having $G_n$ as its generator matrix is Hermitian self-orthogonal. Since the dimension of the code is equal to the row rank of $L$, the result follows.
\end{proof}

Similarly we have the following construction.

\begin{prop}  Let $n\equiv -2\pmod p$ and $q=p^m.$ Fix $a\in {\mathbb F}_q$ such that $a^{q+1}= -1.$ Then for any $L{\in \cal U}_n(q^2)$, a code with the following generator matrix is a $q^2-$ary Hermitian self-dual $[2n,n]$ code:

\begin{equation} \label{eq:sd4}
G_{n}=\left(\begin{array}{c|c}
&\\
J_n+I_n&a L\\
&\\
\end{array}\right).
\end{equation}
\end{prop}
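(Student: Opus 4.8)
The plan is to mirror the proof of the preceding proposition (the $J_n-I_n$ case), since the only structural change is the sign in the left block and the congruence imposed on $n$. Writing $g_i=(l_i\mid r_i)$ for the $i$-th row of $G_n$, where $l_i$ is the $i$-th row of $J_n+I_n$ and $r_i$ is $a$ times the $i$-th row of $L$, I would split the Hermitian product as $g_i*g_j=l_i*l_j+r_i*r_j$ and evaluate the two pieces separately, aiming to show that the Gram matrix of the rows vanishes.

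For the right block I would use that $L\in{\cal U}_n(q^2)$, so $L\overline{L}^\top=I_n$, i.e. $\sum_k L_{ik}L_{jk}^q=\delta_{ij}$; pulling out the scalar then gives $r_i*r_j=a^{q+1}\delta_{ij}=-\delta_{ij}$ by the choice $a^{q+1}=-1$. For the left block the key observation is that every entry of $J_n+I_n$ lies in the prime field $\F_p\subset\F_{q^2}$ and is therefore fixed by the conjugation $x\mapsto x^q$, so the Hermitian product collapses to the ordinary integer dot product of the two rows.

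The main (and essentially only) computation is this left dot product. Off the diagonal ($i\neq j$) the rows agree with the all-ones vector except that $l_i$ carries a $2$ in position $i$ and $l_j$ a $2$ in position $j$, giving $l_i*l_j=(n-2)+2+2=n+2$; on the diagonal the single entry $2$ contributes $4$, giving $l_i*l_i=(n-1)+4=n+3$. Adding the right-block contribution yields $g_i*g_j=(n+2)+0=n+2$ for $i\neq j$ and $g_i*g_i=(n+3)+(-1)=n+2$, so the Gram matrix is the constant matrix $(n+2)J_n$. I expect the merging of the on- and off-diagonal values to the common value $n+2$ to be the one point needing care: it hinges on the $-1$ coming from the unitary block exactly cancelling the extra $3$ produced by the squared diagonal entry, and this is precisely the role of the sign choice in $J_n+I_n$ paired with $n\equiv-2$.

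Finally, invoking $n\equiv-2\pmod p$ gives $n+2=0$ in $\F_{q^2}$, so every $g_i*g_j$ vanishes and the code is Hermitian self-orthogonal. Since $L$ is invertible and $a\neq 0$, the right block $aL$ has rank $n$, whence $G_n$ has rank $n$ and the code has dimension $n$; a self-orthogonal $[2n,n]$ code is necessarily self-dual, which completes the argument.
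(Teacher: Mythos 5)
Your proof is correct and is essentially the paper's own argument: the paper omits a proof for this proposition, saying only that it follows "similarly" to the preceding $J_n-I_n$ case, whose proof is exactly the row-wise computation $g_i*g_j=l_i*l_j+r_i*r_j\equiv 0 \pmod p$ followed by the rank-$n$ observation that you carry out (in more explicit detail, separating the diagonal and off-diagonal cases). Nothing in your route differs in substance from the paper's intended proof, and your final step that a self-orthogonal $[2n,n]$ code is automatically self-dual is the same dimension count the paper relies on.
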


\begin{lem}\label{lem:double1} Let $q=p^m$ and $a\in \F_{q^2}$ such that $a^{q+1}=-1.$ Let $L\in {\cal U}_n(q^2)$ and $L_i$ denote its $i-$th row. Assume there exist $\alpha,\beta,\gamma,\delta,\lambda,\theta$ in $\F_{q^2}$ satisfying 
\begin{equation}\label{system:1}
\begin{cases}
\delta^{q+1}+(n-2)+\gamma^{q+1}=0\\
\theta^{q+1}+n\beta^{q+1}+\alpha^{q+1}+n\lambda^{q+1}=0\\
\theta\delta^{q}+(n-1)\beta+\alpha\gamma^{q}+\lambda a^{q}=0.\\
\end{cases}
\end{equation}
Then the code with the following generator matrix
\begin{equation} \label{eq:double1}
G_{n}=\left(\begin{array}{c|ccc|c|ccc}
\theta&\beta&\cdots&\beta&\alpha&&\lambda(L_1+\cdots+L_n)&\\
\hline
\delta&&&&\gamma&&&\\
\vdots&&J_n-I_n&&\vdots&&aL&\\
\delta&&&&\gamma&&&\\
\end{array}\right)
\end{equation}
is a $q^2-$ary Hermitian self-orthogonal $[2n+2,\ge n]$ code. Moreover if $\delta=0$ and $\theta\not =0$ then $G_n$ generates a $q^2-$ary Hermitian self-dual $[2n+2,n+1]$ code.
\end{lem}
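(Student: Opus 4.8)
The plan is to verify directly that every pair of rows of $G_n$ is Hermitian orthogonal, exactly as in the proofs of Propositions \ref{prop:sd1}--\ref{prop:sd2}, and then to identify the three equations in the system (\ref{system:1}) as the three distinct inner-product conditions that arise. I would label the top row of $G_n$ as ${\bf g}_0=(\theta,\beta,\dots,\beta\mid\alpha,\lambda(L_1+\cdots+L_n))$ and the remaining rows as ${\bf g}_i=(\delta,{\bf e}_i\mid\gamma,aL_i)$ for $1\le i\le n$, where ${\bf e}_i$ is the $i$-th row of $J_n-I_n$ (i.e. the all-ones vector with a $0$ in position $i$). There are three types of pairings to check: the norm ${\bf g}_i*{\bf g}_i$ and cross terms ${\bf g}_i*{\bf g}_j$ among the bottom rows, the norm ${\bf g}_0*{\bf g}_0$ of the top row, and the cross terms ${\bf g}_0*{\bf g}_i$ between the top row and a bottom row.

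First I would dispose of the bottom-row pairings. Since $L\in{\cal U}_n(q^2)$, the rows satisfy $L_i*L_j=\delta_{ij}$, so $aL_i*aL_j=a^{q+1}\delta_{ij}=-\delta_{ij}$. For the left block, ${\bf e}_i*{\bf e}_j$ counts the overlap of two all-ones-minus-a-point vectors: it equals $n-1$ when $i=j$ and $n-2$ when $i\neq j$. Adding the constant $\delta^{q+1}$ from the first coordinate, the diagonal inner product is $\delta^{q+1}+(n-1)-1=\delta^{q+1}+(n-2)$ and the off-diagonal one is $\delta^{q+1}+(n-2)+\gamma^{q+1}$; here the first equation of (\ref{system:1}) forces the off-diagonal case to vanish, and—this is the subtle point—the diagonal case vanishes automatically because $\delta^{q+1}+(n-2)=(\delta^{q+1}+(n-2)+\gamma^{q+1})-\gamma^{q+1}=-\gamma^{q+1}$ must combine with the $-1$ from $aL_i*aL_i$; I would double-check that the bookkeeping of the $\gamma^{q+1}$ term places the first equation correctly so that both diagonal and off-diagonal bottom-row products are zero simultaneously. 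The top-row norm ${\bf g}_0*{\bf g}_0=\theta^{q+1}+n\beta^{q+1}+\alpha^{q+1}+\lambda^{q+1}(L_1+\cdots+L_n)*(L_1+\cdots+L_n)$, and expanding the last term with $L_i*L_j=\delta_{ij}$ gives $\sum_{i}L_i*L_i=n$, so the norm is $\theta^{q+1}+n\beta^{q+1}+\alpha^{q+1}+n\lambda^{q+1}$, which is precisely the second equation. Finally ${\bf g}_0*{\bf g}_i$ produces $\theta\delta^q+\beta\cdot({\bf 1}*{\bf e}_i)+\alpha\gamma^q+\lambda(L_1+\cdots+L_n)*(aL_i)^{\,}$; since ${\bf 1}*{\bf e}_i=n-1$ and $(L_1+\cdots+L_n)*aL_i=a^q\sum_j L_j*L_i=a^q$, this equals $\theta\delta^q+(n-1)\beta+\alpha\gamma^q+\lambda a^q$, matching the third equation.

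Thus under (\ref{system:1}) all row inner products vanish and the code is Hermitian self-orthogonal of length $2n+2$; its dimension is at least $n$ because the bottom $n$ rows already contain the full-rank block $aL$ in their right half, giving $n$ independent rows, so the code is $[2n+2,\ge n]$. For the moreover clause, set $\delta=0$ and $\theta\neq0$: the first coordinate of every bottom row becomes $0$ while the top row has nonzero first entry $\theta$, so ${\bf g}_0$ cannot lie in the span of ${\bf g}_1,\dots,{\bf g}_n$, raising the rank to $n+1$; a self-orthogonal code of length $2n+2$ and dimension $n+1$ has $\dim C=\tfrac12(2n+2)$ and hence is self-dual. The main obstacle I anticipate is purely clerical rather than conceptual: keeping the $\gamma^{q+1}$ and $\delta^{q+1}$ contributions in the correct equation so that the single off-diagonal condition in the first line of (\ref{system:1}) also kills the diagonal bottom-row norm, and confirming the sign conventions coming from $a^{q+1}=-1$; once the three inner-product types are written out carefully, the identification with the three stated equations is immediate.
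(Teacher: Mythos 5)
Your proof is correct and takes essentially the same route as the paper's: a direct computation of the three types of row inner products, which reproduce precisely the three equations of system (\ref{system:1}), followed by the rank argument (your observation that $\delta=0$, $\theta\neq 0$ forces the top row out of the span of the bottom $n$ rows is the same point the paper makes via row operations and a column swap). The bookkeeping point you flagged resolves exactly as you suspected: your diagonal bottom-row product simply dropped the $\gamma^{q+1}$ contribution of the $(n+2)$-th coordinate, and restoring it gives $\delta^{q+1}+(n-1)+\gamma^{q+1}+a^{q+1}=\delta^{q+1}+(n-2)+\gamma^{q+1}$, the very same expression as the off-diagonal product, so the first equation of (\ref{system:1}) kills both simultaneously.
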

\begin{proof} Let $C$ be the code generated by $G_n$. For $1\le i \le n+1$, let $g_{i}$ be the $i-$th row of $G_n.$ A simple calculation implies that
\begin{equation*}
\begin{cases}
g_i*g_i=\delta^{q+1}+(n-1)+\gamma^{q+1}+a^{q+1}\text{ for } i\le 2 \le n+1\\
g_i*g_j=\delta^{q+1}+(n-2)+\gamma^{q+1}\text{ for } 2\le i\not= j\le n+1\\
g_1*g_1=\theta^{q+1}+n\beta^{q+1}+\alpha^{q+1}+n\lambda^{q+1}\\
g_1*g_i=\theta\delta^{q}+(n-1)\beta+\alpha\gamma^{q}+\lambda a^{q}\text{ for } i\le 2 \le n+1.\\
\end{cases}
\end{equation*}
For $C$ to be Hermitian self-orthogonal, we have to take $a^{q+1}=-1$ and the system (\ref{system:1}).
Now since $L$ is invertible, the row rank of the matrix $G_n$ is at least $n$. If $\delta=0$ and $\theta\not=0$ then by applying elementary row operations on the last $n$ rows of $G_n$ followed by swapping the first column and the $(n+2)-$th column, it is easy to see that the row rank of $G_n$ is exactly $n+1$ and thus the code $C$ is $q^2-$ary Hermitian self-dual.
\end{proof}
We deduce the constructions of $q^2-$ary Hermitian self-dual codes from the matrix $G_n$ in Eq. (\ref{eq:double1}) as follows.

%***************************************************************************
\begin{thm}\label{thm:double1} Assume that $q=p^m,a^{q+1}=-1,\delta=0,\gamma^{q+1}=2-n\not \equiv 0 \pmod p .$ Then
\begin{enumerate}
%\item for $n-1\not \equiv 0\pmod p,\theta\in \F_{q^2},\alpha=a\theta(1-n),\beta=\frac{a\gamma^q}{1-n},\lambda=0,$ the matrix $G_n$ generates a $q^2-$ary Hermitian self-dual $[2n+2,n+1]$ code.
\item for $0\not=\theta\in \F_{q^2},\beta=a\theta\gamma,\alpha=\frac{(1-n)\beta}{\gamma^q},\lambda=0,$ the matrix $G_n$ generates a $q^2-$ary Hermitian self-dual $[2n+2,n+1]$ code.
\item for $\theta=\sqrt[q+1]{n},\beta=\frac{1}{n-1+a\theta\gamma^q},\alpha=a\theta\beta,\lambda=a,$ the matrix $G_n$ generates a $q^2-$ary Hermitian self-dual $[2n+2,n+1]$ code.
\item for $0\not=\theta\in \F_{q^2},\beta=\frac{a\theta \gamma^q}{2-n},\alpha=a\theta,\lambda=\beta a,$ the matrix $G_n$ generates a $q^2-$ary Hermitian self-dual $[2n+2,n+1]$ code.

\end{enumerate}
\end{thm}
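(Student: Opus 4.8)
The plan is to recognize that Theorem~\ref{thm:double1} is a direct corollary of Lemma~\ref{lem:double1}: in each of the three cases it suffices to check that the prescribed sextuple $(\alpha,\beta,\gamma,\delta,\lambda,\theta)$ satisfies the system~(\ref{system:1}) together with $\delta=0$ and $\theta\neq 0$, whereupon Lemma~\ref{lem:double1} immediately yields that $G_n$ generates a Hermitian self-dual $[2n+2,n+1]$ code. Since all three cases fix $\delta=0$ and $\gamma^{q+1}=2-n$, the first equation of~(\ref{system:1}) reads $(n-2)+\gamma^{q+1}=0$ and holds by hypothesis; thus the work reduces to verifying the second (norm) equation and the third (cross) equation in each case.

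First I would assemble the elementary facts used throughout: $a^{q+1}=-1$; the integers $n,n-1,n-2,2-n$, viewed in $\F_p$, are fixed by the $q$-th power map; the norm $x\mapsto x^{q+1}$ is multiplicative; and $(\gamma^q)^{q+1}=(\gamma^{q+1})^q=(2-n)^q=2-n$. These reduce every verification to an identity in $\theta^{q+1}$, $\beta^{q+1}$ and integers.

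Then I would treat the three cases by substitution. In Case~(1), the third equation telescopes to $(n-1)\beta+(1-n)\beta=0$ straight from the definition $\alpha=(1-n)\beta/\gamma^q$; for the second, substituting $\beta^{q+1}=-(2-n)\theta^{q+1}$ and $\alpha^{q+1}=-(n-1)^2\theta^{q+1}$ collapses it to $\theta^{q+1}\bigl[1+n(n-2)-(n-1)^2\bigr]=0$, where the bracket vanishes as a polynomial identity in $n$. In Case~(2), using $a\cdot a^q=-1$ the third equation is exactly the defining relation $\beta\bigl[(n-1)+a\theta\gamma^q\bigr]=1$, while the second collapses because $\theta^{q+1}=n$, $\alpha^{q+1}=-n\beta^{q+1}$ and $n\lambda^{q+1}=-n$ cancel in pairs. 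In Case~(3), the second equation vanishes identically since $\alpha^{q+1}=-\theta^{q+1}$ and $n\lambda^{q+1}=-n\beta^{q+1}$, and the third rearranges to $(n-2)\beta+a\theta\gamma^q=0$, which is the definition of $\beta$.

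The one place where genuine care is needed is the well-definedness of the parameters, which is where I expect the subtleties to sit rather than in the algebra. The divisions by $\gamma^q$ and by $2-n$ are legitimate because $2-n\not\equiv 0\pmod p$ forces $\gamma\neq 0$; the choice $\theta\neq 0$ is free in Cases~(1) and~(3); and in Case~(2) one must ensure that $\theta=\sqrt[q+1]{n}$ exists and is nonzero (which needs $p\nmid n$, using surjectivity of the norm map onto $\F_q^*$) and that the denominator $(n-1)+a\theta\gamma^q$ defining $\beta$ does not vanish. Once these side conditions are recorded, each case reduces to an invocation of Lemma~\ref{lem:double1}.
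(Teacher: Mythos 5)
Your overall route is exactly the paper's: treat the theorem as a corollary of Lemma~\ref{lem:double1}, note that $\delta=0$ and $\gamma^{q+1}=2-n$ dispose of the first equation of system~(\ref{system:1}), and verify the remaining two equations case by case. Your substitution algebra in all three cases is correct (and more explicit than the paper's proof, which simply asserts that the displayed formulas solve the system), and your observation that Case~(2) tacitly requires $n\not\equiv 0\pmod p$ --- since otherwise $\theta=\sqrt[q+1]{n}=0$ and Lemma~\ref{lem:double1} no longer yields self-duality --- is a genuine point that the paper's statement leaves implicit.

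The one gap is in Case~(2): you write that ``one must ensure \ldots\ that the denominator $(n-1)+a\theta\gamma^q$ defining $\beta$ does not vanish,'' but you leave this as a side condition to be recorded. It cannot be recorded as an extra hypothesis: the theorem as stated asserts that $\beta=\frac{1}{n-1+a\theta\gamma^q}$ is a valid choice under the standing assumptions alone, so the non-vanishing must be proved, and the paper does prove it. The argument is the same norm trick you use elsewhere: if $a\theta\gamma^q=1-n$, raise both sides to the power $q+1$; using $a^{q+1}=-1$, $\theta^{q+1}=n$ and $(\gamma^q)^{q+1}=(\gamma^{q+1})^q=2-n$, the left side equals $-n(2-n)=n^2-2n$ while the right side equals $(1-n)^2=n^2-2n+1$, a contradiction in $\F_p$. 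With this two-line addition (valid for every choice of the $(q+1)$-st root $\theta$ and every admissible $a,\gamma$), your proof is complete and coincides with the paper's.
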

%*********************************************************************************
\begin{proof}  It is obvious that $\delta=0$ and $\gamma^{q+1}=2-n$ satisfy the first equation of system (\ref{system:1}).
\begin{enumerate}
\item Plugging $\lambda=0$ in system (\ref{system:1}) and letting $\theta$ be arbitrary, we obtain two equations in two varaibles $\beta,\alpha$ having the solutions in the desired form.
\item Plugging $\lambda=a$ in system (\ref{system:1}) and letting $\theta=\sqrt[q+1]{n}$, we obtain two equations in two varaibles $\beta,\alpha$ having the solutions in the desired form. If $a\theta\gamma^q=1-n$, then by raising both sides to the power $q+1$, we get that $a^{q+1}\theta^{q+1}\gamma^{q+1}=-n(2-n)=n^2-2n+1,$ which is a contradiction.
 \item Plugging $\lambda=a\beta$ in system (\ref{system:1}) and letting $\theta$ be arbitrary, we obtain two equations in two varaibles $\beta,\alpha$ having the solutions in the desired form.

\end{enumerate}
\end{proof}
%Parameters of $q^2-$ary Hermitian self-dual codes over $\F_{3^2}$ for Theorem \ref{thm:double1} 1) are given in Table \ref{table:F9} and over $\F_{11^2},\F_{13^2}$ for Theorem \ref{thm:double1} 2) and 3) in Table \ref{table:new}.
\begin{rem}
\begin{enumerate}
\item It should be noted that if $\delta=0$ and $\gamma=0$ then the code with the generator matrix (\ref{eq:double1}) is still $q^2-$ary Hermitian self-dual but it has minimum distance at most $2$.
\item For 1) and 3), the parameters of the constructed codes are the same for any $\theta\not=0$ so we can choose $\theta=1.$
\item Taking $\theta=1,\delta=0,\lambda=0$ in Theorem \ref{thm:double1} 1), we can express Equation (\ref{eq:double1}) as 
\begin{equation*} 
G_{n}=\left(\begin{array}{c|ccc|c|ccc}
1&\beta&\cdots&\beta&\alpha&&0\cdots 0&\\
\hline
0&&&&\gamma&&&\\
\vdots&&J_n-I_n&&\vdots&&aL&\\
0&&&&\gamma&&&\\
\end{array}\right)
\end{equation*}
By applying elementary row operations on $G_n$, we obtain
\begin{equation*}
G'_{n}=\left(\begin{array}{c|ccc|c|ccc}
1&\beta&\cdots&\beta&\alpha&&0\cdots 0&\\
\hline
0&&&&\gamma_1&&&\\
\vdots&&Q&&\vdots&&I_n&\\
0&&&&\gamma_n&&&\\
\end{array}\right),
\end{equation*}
which can be viewed as a generalized construction of the quadratic double circulant construction.
\end{enumerate}
\end{rem}

%**********************************************************************************
\begin{lem} \label{lem:dependent}Assume that $\delta=1.$ Let $g_i$ denote the $i-$th row of $G_n$ of Eq. (\ref{eq:double1}). If $g_1$ is in span$(g_2,\hdots,g_{n+1})$ then $\theta=\epsilon n$ and $\beta=\epsilon (n-1)$ for some non-zero $\epsilon \in \F_{q^2}.$
\end{lem}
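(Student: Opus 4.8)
The plan is to write out the assumed linear dependence coordinate by coordinate and to exploit the combinatorial structure of the block $J_n-I_n$, whose defining feature is that each of its columns omits exactly one $1$. First I would suppose that $g_1=\sum_{i=2}^{n+1}c_i g_i$ for some scalars $c_i\in\F_{q^2}$ and then compare the two sides entry by entry. The key reduction is that only the first $n+1$ columns of $G_n$ (the leading column together with the $J_n-I_n$ block) are needed for the conclusion; the columns coming from the $\alpha/\gamma$ entry and from the $aL$ block merely impose further consistency conditions, which are automatically met under the hypothesis that $g_1$ lies in the span and which play no role in determining $\theta$ and $\beta$.

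Next I would read off the relevant equations. Comparing the leading column and using the assumption $\delta=1$ gives $\theta=\sum_{i=2}^{n+1}c_i$. For the $j$-th column of the $J_n-I_n$ block, the unique summed row whose diagonal entry was deleted (namely the row $i=j+1$) contributes a $0$, while every other contributes a $1$, so the corresponding equation reads $\beta=\sum_{i\ne j+1}c_i=\theta-c_{j+1}$. Letting $j$ range over $1,\dots,n$ forces $c_{j+1}=\theta-\beta$ for every $j$, so all the coefficients coincide with the common value $\epsilon:=\theta-\beta$. Substituting $c_i=\epsilon$ back into $\theta=\sum_{i=2}^{n+1}c_i$ (a sum of $n$ equal terms) yields $\theta=n\epsilon$, and then $\beta=\theta-\epsilon=(n-1)\epsilon$, which is exactly the claimed form.

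It remains to justify $\epsilon\ne 0$, and this is the only point requiring genuine care rather than bookkeeping. If $\epsilon=0$ then all $c_i$ vanish and hence $g_1=0$; conversely, a vanishing $g_1$ would force $\theta=\beta=0$, and no nonzero $\epsilon$ could then satisfy both $\theta=n\epsilon$ and $\beta=(n-1)\epsilon$ (this would demand $p\mid n$ and $p\mid(n-1)$ simultaneously). Thus the degenerate case $g_1=0$ must be excluded, and under the standing understanding that $g_1$ is a nonzero row of the generator matrix we get $\epsilon\ne 0$, completing the argument. I do not anticipate a real obstacle here, as the statement reduces to elementary linear algebra; the substantive observation is simply that each column of $J_n-I_n$ misses a single $1$, which is precisely what pins the dependence coefficients down to one value.
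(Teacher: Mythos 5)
Your proof is correct and takes essentially the same route as the paper's: equate coordinates in the leading column and in the $J_n-I_n$ block, use the fact that each column of that block omits exactly one $1$ to force all dependence coefficients to a common value $\epsilon$, and read off $\theta=n\epsilon$, $\beta=(n-1)\epsilon$. The only difference is that you explicitly justify $\epsilon\neq 0$ by excluding the degenerate case $g_1=0$, a point the paper's proof (which simply ends with ``the result follows'') leaves implicit.
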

\begin{proof} Assume that $g_1=\lambda_1g_2+\cdots+\lambda_ng_{n+1}$ for some $\lambda_1,\hdots,\lambda_n\in \F_{q}.$ Considering the first $n+1$ equations defined by this system, we get that $\theta=\sum\limits_{i=1}^n{\lambda_i},\beta=\sum\limits_{i=2}^n{\lambda_i}=\cdots=\sum\limits_{i=1}^{n-1}{\lambda_i}.$ Thus $\lambda_1=\cdots=\lambda_n$ and the result follows.
%\begin{equation*}
%\begin{cases}
%\theta=\sum\limits_{i=1}^n{\lambda_i}\\
%\beta=\sum\limits_{i=2}^n{\lambda_i}\\
%\vdots\\
%\beta=\sum\limits_{i=1}^{n-1}{\lambda_i},\\
%\end{cases}
%\end{equation*}

\end{proof}
\begin{thm}\label{thm:double2} Assume that $q=p^m,a^{q+1}=-1,\delta=1,\gamma^{q+1}=1-n\not \equiv 0\pmod p.$ Then
\begin{enumerate}
\item for $p\equiv 1\pmod 2, n\not \equiv 3 \pmod p,\theta\in \F_{p},\beta=2,\alpha=\frac{2(1-n)-\theta}{\gamma^q},\lambda=0,(\theta-2)^2n-(2\theta^2-4\theta+4)\equiv 0\pmod p,$ the matrix $G_n$ generates a $q^2-$ary Hermitian self-dual $[2n+2,n+1]$ code.

\item for $ n\not \equiv 2 \pmod p,\theta\in \F_{p},\beta=1,\alpha=\frac{(1-n)-\theta}{\gamma^q},\lambda=0,(\theta-1)^2n-(2\theta^2-2\theta+1)\equiv 0\pmod p,$ the matrix $G_n$ generates a $q^2-$ary Hermitian self-dual $[2n+2,n+1]$ code.

\item for $ n\not \equiv 0\pmod p, \theta=\sqrt[q+1]{n},\beta=\frac{1-\theta}{n-1+a\theta\gamma^q},\alpha=a\theta\beta,\lambda=a,$ the matrix $G_n$ generates a $q^2-$ary Hermitian self-dual $[2n+2,n+1]$ code.
\item for $n\not \equiv 2\pmod p, \theta\in \F_{q^2},\beta=\frac{\theta (1+a\gamma^q)}{2-n},\alpha=a\theta,\lambda=\beta a,$ the matrix $G_n$ generates a Hermitian self-dual $[2n+2,n+1]$ code.

\end{enumerate}
\end{thm}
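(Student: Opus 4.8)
The plan is to follow the template of the proof of Theorem~\ref{thm:double1}, reducing everything to the self-orthogonality system~(\ref{system:1}) supplied by Lemma~\ref{lem:double1} and then to a separate rank computation. Since $a^{q+1}=-1$ is imposed, Lemma~\ref{lem:double1} already guarantees that the code generated by $G_n$ in Eq.~(\ref{eq:double1}) is Hermitian self-orthogonal of dimension at least $n$ precisely when the three equations of~(\ref{system:1}) hold, and that the last $n$ rows $g_2,\dots,g_{n+1}$ are independent because $L$ is invertible. The extra work in the case $\delta=1$ is therefore to upgrade ``self-orthogonal of dimension $\ge n$'' to ``self-dual of dimension $n+1$''. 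First I would record that with $\delta=1$ and $\gamma^{q+1}=1-n$ the first equation of~(\ref{system:1}) reads $1+(n-2)+(1-n)=0$ and so is automatic; it remains to verify the second and third equations for each of the four parameter choices, and then the rank.

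For the third equation I would substitute the prescribed values of $\lambda,\theta,\beta$ case by case; it is linear in $\alpha$, and solving it reproduces the stated $\alpha$ (and, in cases~3 and~4, the stated $\beta$). The relevant denominators must be nonzero: $\gamma^{q}\neq0$ in cases~1 and~2 holds because $\gamma^{q+1}=1-n\neq0$; $2-n\neq0$ in case~4 is the hypothesis $n\not\equiv2\pmod p$; and $n-1+a\theta\gamma^{q}\neq0$ in case~3 I would argue exactly as in Theorem~\ref{thm:double1}(2): if $a\theta\gamma^{q}=1-n$ then raising to the power $q+1$ and using $a^{q+1}=-1$, $\theta^{q+1}=n$ and $(\gamma^{q+1})^{q}=1-n$ gives $-n(1-n)=(1-n)^2$, whence $n=1$, contradicting $1-n\neq0$. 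For the second equation the behaviour splits. In cases~3 and~4, where $\lambda\in\{a,\beta a\}$ and $\alpha$ is a scalar multiple of $\theta$, the norm identities $\alpha^{q+1}=-n\beta^{q+1}$ (resp. $-\theta^{q+1}$) and $\lambda^{q+1}=-1$ (resp. $-\beta^{q+1}$) make the left-hand side of the second equation collapse identically to $0$, so no further condition appears. In cases~1 and~2, where $\lambda=0$ and $\theta\in\F_p$, I use $\theta^{q+1}=\theta^{2}$, $\beta^{q+1}=\beta^{2}$ and $\alpha^{q+1}=\big((1-n)k-\theta\big)^{2}/(1-n)$ with $k=2$, resp. $k=1$, to turn the second equation into a polynomial identity in $\theta$ over $\F_p$; clearing the denominator $1-n$ and simplifying yields exactly the stated congruences $(\theta-2)^2n\equiv2\theta^2-4\theta+4$ and $(\theta-1)^2n\equiv2\theta^2-2\theta+1\pmod p$.

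It remains to prove self-duality, i.e. that the row rank of $G_n$ equals $n+1=\tfrac{2n+2}{2}$. Because $\delta=1\neq0$, the final sentence of Lemma~\ref{lem:double1} does not apply, so instead I invoke Lemma~\ref{lem:dependent}: the dimension is $n+1$ unless $g_1\in\mathrm{span}(g_2,\dots,g_{n+1})$, and that dependence forces $\theta=\epsilon n$, $\beta=\epsilon(n-1)$ for some nonzero $\epsilon$, equivalently $\theta(n-1)=\beta n$. I would show the prescribed parameters never satisfy this relation. The hypothesis $p\equiv1\pmod 2$ in case~1 guarantees $\beta=2\neq0$, so $\epsilon\neq0$ is genuinely forced; the conditions $n\not\equiv3$, $n\not\equiv2$ and $n\not\equiv0\pmod p$ are then used to exclude $\theta(n-1)=\beta n$ by combining it with the $\theta$-congruence from the previous step (cases~1,~2) or with the norm constraints on $a\theta\gamma^q$ and $a\gamma^q$ (cases~3,~4) and deriving a contradiction modulo $p$. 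Once $g_1$ is shown independent, the self-orthogonal code has dimension $n+1$ and is therefore self-dual.

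The main obstacle is this last rank step. Verifying the two self-orthogonality equations is essentially bookkeeping with the Frobenius identities $x^{q+1}=x^2$ for $x\in\F_p$ and $(\gamma^{q+1})^{q}=1-n$, but excluding the dependency locus $\theta(n-1)=\beta n$ of Lemma~\ref{lem:dependent} is where the side hypotheses are genuinely needed. The delicate point is to intersect that locus with the solution set of the $\theta$-congruence in cases~1 and~2, or with the norm conditions on $a\theta\gamma^{q}$ and $a\gamma^{q}$ in cases~3 and~4, and to check that under the stated congruence restrictions on $n$ (and $p$ odd) this intersection is empty.
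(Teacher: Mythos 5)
Your reduction to the system (\ref{system:1}) is sound and matches the paper: with $\delta=1$, $\gamma^{q+1}=1-n$ the first equation is automatic, the third equation is linear in $\alpha$ (and determines $\beta$ in cases 3 and 4), the second equation collapses identically in cases 3--4, and in cases 1--2 the identities $x^{q+1}=x^2$ for $x\in\F_p$ turn it into exactly the stated congruences. The denominator argument in case 3 is also the paper's. The problem is the rank step for cases 1 and 2. You correctly keep the scalar $\epsilon$ from Lemma \ref{lem:dependent} and reduce dependence to the locus $\theta(n-1)=\beta n$, but your plan to derive a contradiction by intersecting this locus with the $\theta$-congruence provably cannot work, because that intersection is nonempty under the stated hypotheses. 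Substituting $\theta=2n/(n-1)$ (case 1, $\beta=2$) into $(\theta-2)^2n-(2\theta^2-4\theta+4)\equiv 0$ and clearing $(n-1)^2$ leaves $4(n^2-n+1)\equiv 0\pmod p$; case 2 likewise reduces to $n^2-n+1\equiv 0\pmod p$. This is compatible with everything assumed: take $p=7$, $n=5$, $\theta=6$ in case 1. Then $p$ is odd, $n\not\equiv 3,1\pmod 7$, the congruence holds ($(\theta-2)^2n-(2\theta^2-4\theta+4)=80-52=28\equiv 0$), and $\theta(n-1)=24\equiv 3\equiv 2n\pmod 7$, so $(\theta,\beta)$ lies exactly on the dependence locus. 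No contradiction modulo $p$ can be extracted from $\theta$ and $\beta$ alone at such parameters.

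The theorem is still true there, but the proof must use the coordinates of $G_n$ that Lemma \ref{lem:dependent} ignores. The proof of that lemma shows all coefficients $\lambda_i$ in a putative relation $g_1=\sum_i\lambda_ig_{i+1}$ equal a common $\epsilon$, and $\epsilon\neq 0$ since $\beta=\epsilon(n-1)\neq 0$; comparing the $(n+2)$-th coordinate and the last $n$ coordinates then also forces $\alpha=\epsilon n\gamma$ and $\lambda=\epsilon a$. In cases 1 and 2 one has $\lambda=0$ while $\epsilon a\neq 0$, so dependence is impossible --- that is the missing argument (in the example above it is also visible as $\alpha=-14/\gamma^q\equiv 0$ versus the required $\epsilon n\gamma\neq 0$). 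The same observation justifies setting $\epsilon=1$ in case 3 (there $\lambda=a=\epsilon a$), a normalization the paper performs silently and your write-up does not address; case 4 can be finished by the paper's ratio-and-norm argument, which is scale-invariant and therefore needs no such justification. In short: your careful handling of $\epsilon$ is more honest than the paper's, but the contradiction you plan to derive in cases 1 and 2 does not exist, and the repair requires extending Lemma \ref{lem:dependent} to the $\alpha$ and $\lambda$ coordinates rather than sharpening the congruence analysis.
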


\begin{proof} It is obvious that $\delta=1$ and $\gamma^{q+1}=1-n$ satisfy the first equation of system (\ref{system:1}).
\begin{enumerate}
\item Plugging $\lambda=0$ in system (\ref{system:1}) and letting $\theta$ be arbitrary, we obtain two equations in two varaibles $\beta,\alpha$ having the solutions in the desired form.
 From Lemma \ref{lem:dependent}, if $g_1$ is in $\text{span}(g_2,\hdots,g_{n+1})$ then we can also write $\theta=n$ and $\beta=(n-1)$. Since $\beta=2$, we get $n-1\equiv 2 \pmod p$ which is a contradiction. Hence the code dimension is $n+1$ and the result follows.
\item The result follows from the same computation and reasoning as 1).
\item Plugging $\lambda=a$ in system (\ref{system:1}) and letting $\theta=\sqrt[q+1]{n}$, we obtain two equations in two varaibles $\beta,\alpha$ having the solutions in the desired form. If $a\theta\gamma^q=1-n$, then by raising both sides to the power $q+1$, we get that $a^{q+1}\theta^{q+1}\gamma^{q+1}=-n(1-n)=n^2-2n+1,$ that is $(1-n)\equiv 0 \pmod p$, which is a contradiction. We now prove that the code dimension is $n+1$. From Lemma \ref{lem:dependent}, if $g_1$ is in $\text{span}(g_2,\hdots,g_{n+1})$ then we can also write $\theta=n$ and $\beta=(n-1)$. Since $\theta=\sqrt[q+1]{n}$, we get $\theta^{q+1}=n=n^{q+1}=n^2$ and thus $n\equiv 0,1\pmod p$ which is a contradiction.
 \item Plugging $\lambda=a\beta$ in system (\ref{system:1}) and letting $\theta$ be arbitrary, we obtain two equations in two varaibles $\beta,\alpha$ having the solutions in the desired form. It remains to prove that the code dimension is $n+1$. First note that if $a\gamma^q=-1$ then by raising both sides to power $q$, we get $a^q\gamma=-1$ and hence $a^{q+1}\gamma^{q+1}=-(1-n)=1$, which is a contradiction to the hypothesis. If $\frac{\theta}{\beta}=\frac{n}{n-1}$ then from $\beta=\frac{\theta (1+a\gamma^q)}{2-n}$, we get

$$\frac{n}{n-1}=\frac{2-n}{1+a\gamma^q}$$ and by raising both sides to the power $q$, we also obtain
$$\frac{n}{n-1}=\frac{2-n}{1+a^q\gamma}.$$
Matching the two equations together gives $a\gamma^q=a^q\gamma$ and hence $\left(\frac{\gamma}{a}\right)^{q-1}=1$ which means that $\frac{\gamma}{a}\in \F_{q}.$ Since $\left(\frac{\gamma}{a}\right)^{q+1}=n-1,$ we get $\left(\frac{\gamma}{a}\right)^{q+1}\left(\frac{\gamma}{a}\right)^{q-1}=\left(\frac{\gamma}{a}\right)^2=n-1$ and thus $\gamma^2=(n-1)a^2.$ Now since $\gamma^{2t}=a^{2t}$ for some positive integer $t$, we obtain $a^{2t}=(n-1)^ta^{2t}$ and it implies that $n-1=1,$ which is a constradiction. Hence 
$\frac{\theta}{\beta}\not=\frac{n}{n-1}$ and the result follows by Lemma \ref{lem:dependent}.

\end{enumerate}
\end{proof}
%Parameters of $q^2-$ary Hermitian self-dual codes over $\F_{2^2}$ and $\F_{4^2}$ for Theorem \ref{thm:double2} 4) are given in Table \ref{table:F4} and Table \ref{table:F16} respectively. It should be noted that the parameters $[20,10,8]$ is hard to be reachable ( with $[i,j,k,l]=[32,0,7,32]$ see Table \ref{table:F4})  and may not be reachable by Constructions (\ref{eq:sd1}), (\ref{eq:sd2}) and (\ref{eq:recursive3}) (after some trials of searching).

\begin{rem} 
\begin{enumerate}
\item When $n\equiv 0,1 \pmod p$, the constructed codes in Theorem \ref{thm:double2} 2) are not $q^2-$ary Hermitian self-dual since the first row is in the spanned space of the other $n$ rows, more precisely $g_1=g_2+\cdots+g_{n+1}$.
\item When $n-1\equiv 0\pmod p$,  the construction  in Theorem \ref{thm:double2} 3) still make sense and by exchanging the first and $(n+2)-$th columns,  it is equivalent to that in Theorem \ref{thm:double1} 3).
\end{enumerate}
\end{rem}

\begin{cor}
Assume that $p\equiv 1 \pmod 2,$ $n\equiv 2 \pmod p$ and $n\not \equiv 1,3 \pmod p.$ Then for $a^{q+1}=-1,\delta=1,\gamma^{q+1}=1-n$, $\theta=1,\beta=2,\alpha=\frac{1-2n}{\gamma^q}, \lambda=0,$ the matrix $G_n$ generates a $q^2-$ary Hermitian self-dual $[2n+2,n+1]$ code.
\end{cor}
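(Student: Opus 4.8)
The plan is to read this statement as nothing more than the special case $\theta = 1$ of Theorem \ref{thm:double2} part 1), so that no fresh computation with the matrix $G_n$ is required; the entire task reduces to checking that, under the present hypotheses, every condition imposed in that part is satisfied when one sets $\theta = 1$.

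First I would record the global hypotheses of Theorem \ref{thm:double2}, namely $a^{q+1} = -1$, $\delta = 1$ and $\gamma^{q+1} = 1 - n$, all of which are explicitly assumed here, and I would check the standing nondegeneracy $\gamma^{q+1} = 1 - n \not\equiv 0 \pmod p$: this holds because $n \equiv 2 \pmod p$ forces $n \not\equiv 1 \pmod p$. The structural data of part 1) also match verbatim: $p$ is odd, $\theta = 1 \in \F_p$, $\beta = 2$, and $\lambda = 0$. I would point out in passing that the two side conditions $n \not\equiv 1, 3 \pmod p$ listed in the statement are automatic for $p \ge 3$, since $2 \equiv 1$ or $2 \equiv 3 \pmod p$ would force $p \mid 1$; they appear only to align the statement with the framework of Theorem \ref{thm:double2}, whose part 1) invokes $n \not\equiv 3 \pmod p$ in its rank argument.

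Next I would substitute $\theta = 1$ into the two remaining data of part 1) and confirm they coincide with the corollary's. The prescribed value $\alpha = \frac{2(1-n) - \theta}{\gamma^q}$ becomes $\frac{2(1-n) - 1}{\gamma^q} = \frac{1 - 2n}{\gamma^q}$, which is exactly the stated $\alpha$. The governing congruence $(\theta - 2)^2 n - (2\theta^2 - 4\theta + 4) \equiv 0 \pmod p$ collapses, at $\theta = 1$, to $(1-2)^2 n - (2 - 4 + 4) = n - 2 \equiv 0 \pmod p$, i.e. precisely $n \equiv 2 \pmod p$, which is our hypothesis. Hence all conditions of Theorem \ref{thm:double2} part 1) are in force, and its conclusion that $G_n$ generates a $q^2$-ary Hermitian self-dual $[2n+2, n+1]$ code applies directly.

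I expect no genuine obstacle: the argument is a pure specialization, and the only point needing a moment's care is the arithmetic collapse of the quadratic-in-$\theta$ congruence to the linear condition $n \equiv 2 \pmod p$. The Hermitian self-orthogonality and the determination of $\alpha$ ultimately trace back to system (\ref{system:1}) of Lemma \ref{lem:double1} evaluated at these values, while the rank-$(n+1)$ (hence self-duality) statement is already delivered by part 1) via Lemma \ref{lem:dependent}; if one wished to be fully self-contained, the rank could be re-verified by noting that any dependence $g_1 = \epsilon(g_2 + \cdots + g_{n+1})$ would, upon inspecting the column of $\gamma$'s, force $\alpha = \theta\gamma$ and thus $\gamma^{q+1} = 1 - 2n$, contradicting $\gamma^{q+1} = 1 - n$ since $n \not\equiv 0 \pmod p$.
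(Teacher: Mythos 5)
Your proposal is correct and follows exactly the paper's own route: the paper proves this corollary by plugging $\theta=1$ into Theorem \ref{thm:double2} 1), which is precisely your specialization, with your verification that $\alpha$ reduces to $\frac{1-2n}{\gamma^q}$ and that the congruence $(\theta-2)^2n-(2\theta^2-4\theta+4)\equiv 0\pmod p$ collapses to $n\equiv 2\pmod p$ making explicit what the paper leaves implicit.
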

\begin{proof} The result follows by plugging $\theta=1$ in Theorem \ref{thm:double2} 1).
\end{proof}
\begin{cor}
Assume that $n\equiv 5 \pmod p$ and $n\not \equiv 1,2 \pmod p.$ Then for $a^{q+1}=-1,\delta=1,\gamma^{q+1}=1-n$, $ \theta=2,\beta=1,\alpha=\frac{-1-n}{\gamma^q},\lambda=0,$ the matrix $G_n$ generates a $q^2-$ary Hermitian self-dual $[2n+2,n+1]$ code.
\end{cor}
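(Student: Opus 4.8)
The plan is to derive this corollary as the $\theta=2$ specialization of Theorem \ref{thm:double2} part 2), exactly paralleling the preceding corollary (which was the $\theta=1$ slice of part 1)). Part 2) already fixes $\beta=1$, $\lambda=0$ and $\alpha=\frac{(1-n)-\theta}{\gamma^q}$ as functions of a free parameter $\theta\in\F_p$, subject only to the scalar congruence $(\theta-1)^2 n-(2\theta^2-2\theta+1)\equiv 0\pmod p$. So the first thing I would do is put $\theta=2$ and read off the listed values: $\beta=1$ and $\lambda=0$ are immediate, and $\alpha=\frac{(1-n)-2}{\gamma^q}=\frac{-1-n}{\gamma^q}$ matches the stated $\alpha$.

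The crux is then a single-line check that this choice is admissible precisely under the corollary's hypothesis on $n$. Substituting $\theta=2$ into the side condition gives $(2-1)^2 n-(2\cdot 4-2\cdot 2+1)=n-5$, so the congruence becomes $n-5\equiv 0\pmod p$, that is $n\equiv 5\pmod p$. This is exactly the assumption made, so $\theta=2$ is a legitimate parameter value.

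It remains to confirm that the two side hypotheses $n\not\equiv 1,2\pmod p$ play their expected roles. The assumption $n\not\equiv 1$ ensures $\gamma^{q+1}=1-n\neq 0$, hence $\gamma\neq 0$ and $\alpha$ is well defined, so we are genuinely inside the hypotheses of Theorem \ref{thm:double2} part 2). The assumption $n\not\equiv 2$ is what forces full rank $n+1$: by Lemma \ref{lem:dependent}, were $g_1$ in the span of $g_2,\dots,g_{n+1}$ we would have $\theta=\epsilon n$, $\beta=\epsilon(n-1)$ for some nonzero $\epsilon$, hence $\theta/\beta=n/(n-1)$; with $\theta=2$, $\beta=1$ this gives $n/(n-1)=2$, i.e.\ $n\equiv 2\pmod p$, which is excluded. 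Thus $g_1$ is independent of the remaining rows, the dimension is $n+1$, and Theorem \ref{thm:double2} part 2) yields a $q^2$-ary Hermitian self-dual code. I expect no genuine obstacle, the only point worth care being that for the specific value $\theta=2$ the degeneracy obstruction is $n\equiv 2$ rather than the $n\equiv 0,1$ flagged in the remark after Theorem \ref{thm:double2}; the remark concerns the generic degeneracy of part 2), while here the corollary's hypotheses are tailored to this slice.
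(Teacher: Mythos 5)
Your proof is correct and takes essentially the same route as the paper, which also obtains this corollary by specializing $\theta=2$ in Theorem \ref{thm:double2}; note that the paper's one-line proof cites part 1), evidently a typo for part 2), since $\beta=1$ and $\alpha=\frac{(1-n)-\theta}{\gamma^q}$ are the part 2) parameters --- your identification of part 2) is the right one. Your checks that the side congruence collapses to $n\equiv 5\pmod p$ and that $n\not\equiv 1,2\pmod p$ respectively guarantee $\gamma^{q+1}\neq 0$ and full rank (via Lemma \ref{lem:dependent}) are exactly the details the paper leaves implicit.
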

\begin{proof} The result follows by plugging $\theta=2$ in Theorem \ref{thm:double2} 1).
\end{proof}

%********************************************************************************

Similar to Lemma \ref{lem:double1} we have the following.
\begin{lem}\label{lem:double2} Let $q=p^m$ and $a\in \F_{q^2}$ such that $a^{q+1}=-1.$ Let $L\in {\cal U}_n(q^2)$ and $L_i$ denote its $i-$th row. Assume there exist $\alpha,\beta,\gamma,\delta,\lambda,\theta$ satisfying 
%\begin{equation}\label{eq:double2}
\begin{equation}\label{eq:system2}
\begin{cases}
\delta^{q+1}+(n+2)+\gamma^{q+1}=0\\
\theta^{q+1}+n\beta^{q+1}+\alpha^{q+1}+n\lambda^{q+1}=0\\
\theta\delta^{q}+(n+1)\beta+\alpha\gamma^{q}+\lambda a^{q}=0.\\
\end{cases}
\end{equation}
Then the code with the following generator matrix
\begin{equation} 
G_{n}=\left(\begin{array}{c|ccc|c|ccc}
\theta&\beta&\cdots&\beta&\alpha&&\lambda(L_1+\cdots+L_n)&\\
\hline
\delta&&&&\gamma&&&\\
\vdots&&J_n+I_n&&\vdots&&aL&\\
\delta&&&&\gamma&&&\\
\end{array}\right)
\label{eq:double2}
\end{equation}
is a $q^2-$ary Hermitian self-orthogonal $[2n+2,\ge n]$ code.  Moreover if $\delta=0$ and $\theta\not =0$ then $G_n$ generates a $q^2-$ary Hermitian self-dual $[2n+2, n+1]$ code.
\end{lem}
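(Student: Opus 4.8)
The plan is to mimic the proof of Lemma~\ref{lem:double1} almost verbatim, replacing the block $J_n-I_n$ by $J_n+I_n$ and recomputing the three combinatorial constants that enter the self-orthogonality conditions. Write $g_1,\dots,g_{n+1}$ for the rows of $G_n$, so that $g_1=(\theta,\beta,\dots,\beta,\alpha,\lambda(L_1+\cdots+L_n))$ and, for $2\le i\le n+1$, the row $g_i=(\delta,\,\mathrm{row}_{i-1}(J_n+I_n),\,\gamma,\,aL_{i-1})$. I would evaluate the Hermitian products $g_i*g_j$ in the four cases $i=j\ge 2$; $i\ne j$ with $i,j\ge 2$; $i=j=1$; and $i=1<j$, and check that each vanishes exactly when the system~(\ref{eq:system2}) together with $a^{q+1}=-1$ holds.

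The two facts driving every case are that the nonzero entries of $J_n+I_n$ lie in the prime field $\F_p$ (so the $q$-th power fixes them) and that $L\in{\cal U}_n(q^2)$ forces $L_k*L_l=\delta_{kl}$. Row $i-1$ of $J_n+I_n$ has one entry equal to $2$ and $n-1$ entries equal to $1$; hence for $i\ne j$ the middle block contributes $2+2+(n-2)=(n+2)$ while the $aL$ block contributes $a^{q+1}L_{i-1}*L_{j-1}=0$, giving $g_i*g_j=\delta^{q+1}+(n+2)+\gamma^{q+1}$, which is the first equation of~(\ref{eq:system2}). For $i=j\ge 2$ the middle block contributes $(n-1)+2^{q+1}=(n+3)$ and the $aL$ block contributes $a^{q+1}=-1$; since $-1+(n+3)=(n+2)$, the diagonal collapses onto precisely the same expression, so the single first equation covers both subcases. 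This use of $a^{q+1}=-1$ to merge the diagonal and off-diagonal entries is the one genuinely load-bearing step. For $g_1*g_1$ I would use $(L_1+\cdots+L_n)*(L_1+\cdots+L_n)=\sum_{k,l}L_k*L_l=n$, producing the second equation, and for $g_1*g_i$ I would use $(L_1+\cdots+L_n)*(aL_{i-1})=a^q$ together with the middle-block sum $(n+1)\beta$, producing the third equation. This establishes Hermitian self-orthogonality; since the last $n$ rows contain the invertible block $aL$, the row rank is at least $n$ and the code is $[2n+2,\ge n]$.

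For the self-dual refinement, when $\delta=0$ the first column vanishes on the rows $g_2,\dots,g_{n+1}$, so $G_n$ has a block shape with $\theta$ alone in the top-left corner above a zero column. Because $\theta\ne 0$ and the lower-right block $aL$ already has rank $n$, the full row rank is exactly $n+1$; equivalently, after swapping the first and $(n+2)$-th columns one reads off the rank directly. A Hermitian self-orthogonal code of dimension $n+1$ and length $2n+2$ must be Hermitian self-dual, which finishes the argument.

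I expect the main obstacle to be purely bookkeeping: correctly producing the three field constants $(n+2)$, $(n+3)$, $(n+1)$ generated by $J_n+I_n$ and confirming that the identity $a^{q+1}=-1$ collapses the diagonal case onto the off-diagonal one, so that exactly the three stated equations suffice. No idea beyond that of Lemma~\ref{lem:double1} is required, and the computations remain valid in every characteristic since all the relevant entries are interpreted as elements of $\F_p$.
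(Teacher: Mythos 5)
Your proposal is correct and takes essentially the same route as the paper: compute the four Hermitian products (obtaining the constants $n+2$, $n+3$, and $n+1$ from $J_n+I_n$ together with the unitarity relations $L_k*L_l=\delta_{kl}$), use $a^{q+1}=-1$ to collapse the diagonal case $g_i*g_i$ onto the off-diagonal expression, and then get dimension $\ge n$ from the invertible block $aL$ and exact dimension $n+1$ (hence self-duality) from $\delta=0$, $\theta\neq 0$. The paper's own proof lists exactly these four products and defers the remaining reasoning to Lemma~\ref{lem:double1}; your write-up simply fills in the identical details.
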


\begin{proof} Let $C$ be the code generated by $G_n$. For $1\le i \le n+1$, let $g_{i}$ be the $i-$th row of $G_n.$ A simple calculation implies that
\begin{equation*}
\begin{cases}
g_i*g_i=\delta^{q+1}+(n+3)+\gamma^{q+1}+a^{q+1}\text{ for } i\le 2 \le n+1\\
g_i*g_j=\delta^{q+1}+(n+2)+\gamma^{q+1}\text{ for } 2\le i\not= j\le n+1\\
g_1*g_1=\theta^{q+1}+n\beta^{q+1}+\alpha^{q+1}+n\lambda^{q+1}\\
g_1*g_i=\theta\delta^{q}+(n+1)\beta+\alpha\gamma^{q}+\lambda a^{q}\text{ for } i\le 2 \le n+1.\\
\end{cases}
\end{equation*}
The rest follows from the same reasoning as that in Lemma \ref{lem:double1}
\end{proof}
%\begin{proof} The proof follows from the same comptutation and reasoning as those in Lemma \ref{lem:double1}.
%\end{proof}

We deduce the constructions of $q^2-$ary Hermitian self-dual codes from the matrix $G_n$ in Eq. (\ref{eq:double2}) as follows.

%*******************************************************************************************
\begin{thm}\label{thm:double3} Assume that $q=p^m,a^{q+1}=-1,\delta=0,\gamma^{q+1}=-2-n\not \equiv 0 \pmod p .$ Then
\begin{enumerate}
%\item for $n+1\not \equiv 0\pmod p,\lambda=0,\theta\in \F_{q^2},\alpha=a\theta(1+n),\beta=\frac{a\gamma^q}{1+n},$ the matrix $G_n$ generates a $q^2-$ary Hermitian self-dual $[2n+2,n+1]$ code.
\item for $0\not=\theta\in \F_{q^2},\beta=a\theta\gamma,\alpha=\frac{-(n+1)\beta}{\gamma^q},\lambda=0,$ the matrix $G_n$ generates a $q^2-$ary Hermitian self-dual $[2n+2,n+1]$ code.
\item for $\theta=\sqrt[q+1]{n},\beta=\frac{1}{1+n+a\theta\gamma^q},\alpha=a\theta\beta,\lambda=a,$ the matrix $G_n$ generates a $q^2-$ary Hermitian self-dual $[2n+2,n+1]$ code.
\item for $n\not \equiv 0\pmod p,0\not=\theta\in \F_{q^2},\beta=\frac{-a\theta \gamma^q}{n},\alpha=a\theta,\lambda=\beta a,$ the matrix $G_n$ generates a $q^2-$ary Hermitian self-dual $[2n+2,n+1]$ code.

\end{enumerate}
\end{thm}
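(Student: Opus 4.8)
The plan is to invoke Lemma \ref{lem:double2} directly: once I exhibit parameters $\alpha,\beta,\gamma,\delta,\lambda,\theta$ that satisfy system (\ref{eq:system2}) with $\delta=0$ and $\theta\neq 0$, that lemma immediately yields a $q^2$-ary Hermitian self-dual $[2n+2,n+1]$ code. So the entire argument reduces to checking, case by case, that the prescribed values solve the three equations of system (\ref{eq:system2}), mirroring the structure of the proof of Theorem \ref{thm:double1}.

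First I note that $\delta=0$ together with $\gamma^{q+1}=-2-n$ makes the first equation $\delta^{q+1}+(n+2)+\gamma^{q+1}=0$ hold automatically. With $\delta=0$ the third equation simplifies to $(n+1)\beta+\alpha\gamma^{q}+\lambda a^{q}=0$. In each of the three cases I would fix the stated value of $\lambda$, substitute it into this simplified third equation, and use $a^{q+1}=-1$ to solve for the remaining unknown; this is precisely how the displayed formula for $\beta$ (or $\alpha$) arises. For instance, in case 3 the choice $\lambda=a\beta$ gives $\lambda a^{q}=a^{q+1}\beta=-\beta$, so the third equation collapses to $n\beta+\alpha\gamma^{q}=0$, which with $\alpha=a\theta$ yields $\beta=-a\theta\gamma^{q}/n$ (legitimate since $n\not\equiv 0$).

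The remaining task is to verify the second equation $\theta^{q+1}+n\beta^{q+1}+\alpha^{q+1}+n\lambda^{q+1}=0$ in each case. Here the key simplifications are $a^{q+1}=-1$, $\gamma^{q+1}=-2-n$, and the fact that $\gamma^{q+1}\in\F_p$ so $(\gamma^{q})^{q+1}=\gamma^{q+1}$; in case 2 one also uses $\theta^{q+1}=n$. Computing each norm, the second equation telescopes to an identity: for example in case 1 it reduces to $\theta^{q+1}\bigl[1+n(n+2)-(n+1)^2\bigr]=0$, and the bracket vanishes identically, while the factor $\theta^{q+1}\neq 0$ supplies the hypothesis $\theta\neq 0$ needed by Lemma \ref{lem:double2}. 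Since these are routine field computations, I expect no genuine difficulty there.

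The one point requiring a separate argument is in case 2, where $\beta=1/(1+n+a\theta\gamma^{q})$ must have nonzero denominator. I would handle this by contradiction: if $a\theta\gamma^{q}=-(n+1)$, then raising both sides to the power $q+1$ and using $a^{q+1}=-1$, $\theta^{q+1}=n$, and $(\gamma^{q})^{q+1}=-2-n$ gives $n(n+2)$ on the left and $(n+1)^2$ on the right, forcing $0=1$, a contradiction. This is the exact analogue of the non-degeneracy step already carried out in Theorem \ref{thm:double1} case 2, and it is the only place where anything beyond mechanical substitution is needed.
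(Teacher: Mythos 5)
Your proposal is correct and follows essentially the same route as the paper: the paper's proof of Theorem \ref{thm:double3} simply says it "follows from the same reasoning as that in Theorem \ref{thm:double1}," and your argument is exactly that reasoning carried out explicitly — verify the first equation of system (\ref{eq:system2}) with $\delta=0$, $\gamma^{q+1}=-2-n$, solve the third equation case by case for the stated $\beta,\alpha$, check the norm equation, handle the nonzero denominator in case 2 by the same raise-to-the-$(q+1)$-power contradiction, and invoke Lemma \ref{lem:double2} for self-duality. Your write-up is in fact more detailed than the paper's one-line proof, and the computations you sketch all check out.
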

%*********************************************************************************

\begin{proof} The proof follows from the same reasoning as that in Theorem \ref{thm:double1}.
\end{proof}

%Parameters of $q^2-$ary Hermitian self-dual codes over $\F_{3^2}$ and $\F_{5^2}$ for Theorem \ref{thm:double3} 1)  and 3) are given in Tables \ref{table:F9} and \ref{table:F25} respectively.

%**********************************************************************************

\begin{lem} \label{lem:dependent2}Assume that $\delta=1.$ Let $g_i$ denote the $i-$th row of $G_n$ of Eq. (\ref{eq:double1}). If $g_1$ is in span$(g_2,\hdots,g_{n+1})$ then $\theta=\epsilon n$ and $\beta=\epsilon (n+1)$ for some non-zero $a\epsilon \in \F_{q^2}.$
\end{lem}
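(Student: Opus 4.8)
The plan is to reproduce, essentially verbatim, the argument used for Lemma~\ref{lem:dependent}, with the single change that the block $J_n-I_n$ is replaced by $J_n+I_n$. Although the statement cites Eq.~(\ref{eq:double1}), the conclusion $\beta=\epsilon(n+1)$ shows it really concerns the matrix of Eq.~(\ref{eq:double2}), whose left block is $J_n+I_n$; switching the sign of the identity block is exactly what turns the value $n-1$ of Lemma~\ref{lem:dependent} into $n+1$. So the whole content reduces to a short linear system obtained by reading off a dependence relation on the first $n+1$ coordinates.

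First I would write the assumed dependence as $g_1=\sum_{i=1}^{n}\lambda_i\,g_{i+1}$ with scalars $\lambda_1,\dots,\lambda_n\in\F_{q^2}$, and then compare coordinates only in the first $n+1$ columns; the columns belonging to the right-hand block $aL$ play no role in deriving the stated necessary condition. Since $\delta=1$, each of the rows $g_2,\dots,g_{n+1}$ carries the entry $1$ in the first column, so comparing first coordinates gives $\theta=\sum_{i=1}^{n}\lambda_i$. I will write $S$ for this common sum.

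Next I would compare, for each $j=1,\dots,n$, the entries in column $j+1$. The $(i,j)$ entry of $J_n+I_n$ is $1+\delta_{ij}$, with $\delta_{ij}$ the Kronecker symbol, so column $j+1$ yields $\beta=\sum_{i=1}^{n}\lambda_i(1+\delta_{ij})=S+\lambda_j$. Hence $\lambda_j=\beta-S$ is the same for every $j$, i.e. $\lambda_1=\cdots=\lambda_n=:\epsilon$; substituting back gives $\theta=S=n\epsilon$ and $\beta=S+\epsilon=(n+1)\epsilon$, which is the claim, with $\epsilon\ne 0$ as soon as $g_1$ is a nonzero row (e.g. whenever $\theta\ne 0$, since $\theta=n\epsilon$). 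I do not expect any genuine obstacle here: the only mechanism at work is that the uniform off-diagonal entries of $J_n+I_n$, together with the distinguished diagonal, force all coefficients $\lambda_j$ to coincide, exactly as in Lemma~\ref{lem:dependent}. The only points requiring care are to project onto the correct $n+1$ columns and to invoke $\delta=1$ when equating first coordinates; after that the system is two lines.
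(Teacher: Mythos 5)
Your proof is correct and follows essentially the same route as the paper's: write $g_1=\sum_i\lambda_i g_{i+1}$, compare the first $n+1$ coordinates to get $\theta=\sum_i\lambda_i$ and $\beta=\lambda_j+\sum_i\lambda_i$ for every $j$, conclude all $\lambda_j$ coincide, and read off $\theta=\epsilon n$, $\beta=\epsilon(n+1)$. Your observation that the statement really concerns the matrix of Eq.~(\ref{eq:double2}) (the block $J_n+I_n$), despite the citation of Eq.~(\ref{eq:double1}), correctly identifies a typo in the paper, and your brief remark on why $\epsilon\neq 0$ is a small addition the paper's own proof omits.
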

\begin{proof} Assume that $g_1=\lambda_1g_2+\cdots+\lambda_ng_{n+1}$ for some $\lambda_1,\hdots,\lambda_n\in \F_{q}.$ Considering the first $n+1$ equations defined by this system, we get that $\theta=\sum\limits_{i=1}^n{\lambda_i},\beta=\lambda_1+\sum\limits_{i=1}^n{\lambda_i}=\cdots=\lambda_n+\sum\limits_{i=1}^{n}{\lambda_i}.$ Thus $\lambda_1=\cdots=\lambda_n$ and the result follows.
\end{proof}

\begin{thm}\label{thm:double4} Assume that $q=p^m,a^{q+1}=-1,\delta=1,\gamma^{q+1}=-3-n\not \equiv 0 \pmod p .$ Then
\begin{enumerate}
\item for $p\equiv 1\pmod 2, n\not \equiv 1 \pmod p,\theta\in \F_{p},\beta=2,\alpha=\frac{-2(1+n)-\theta}{\gamma^q},\lambda=0,(\theta-2)^2n+2\theta^2-4\theta-4\equiv 0\pmod p,$ the matrix $G_n$ generates a $q^2-$ary Hermitian self-dual $[2n+2,n+1]$ code.
 \item for $ n\not \equiv 0 \pmod p,\theta\in \F_{p},\beta=1,\alpha=\frac{-(1+n)-\theta}{\gamma^q},\lambda=0,(\theta-1)^2n+2\theta^2-2\theta-1\equiv 0\pmod p,$ the matrix $G_n$ generates a $q^2-$ary Hermitian self-dual $[2n+2,n+1]$ code.
\item for $n\not \equiv 0, 1\pmod p,\theta=\sqrt[q+1]{n},\beta=\frac{1-\theta}{n+1+a\theta\gamma^q},\alpha=a\theta\beta,\lambda=a,$ the matrix $G_n$ generates a $q^2-$ary Hermitian self-dual $[2n+2,n+1]$ code.
\item for $n\not \equiv 0,-2\pmod p, \theta\in \F_{q^2},\beta=\frac{-\theta (1+a\gamma^q)}{n},\alpha=a\theta,\lambda=\beta a,$ the matrix $G_n$ generates a $q^2-$ary Hermitian self-dual $[2n+2,n+1]$ code.

\end{enumerate}
\end{thm}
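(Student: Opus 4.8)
The plan is to run the argument of Theorem \ref{thm:double2} essentially verbatim, now anchored on the system (\ref{eq:system2}) and the matrix (\ref{eq:double2}), with Lemma \ref{lem:dependent2} replacing Lemma \ref{lem:dependent} in the rank count. The substitution $\delta=1$, $\gamma^{q+1}=-3-n$ satisfies the first equation of (\ref{eq:system2}) at once, since $1+(n+2)+(-3-n)=0$; as in the earlier theorems this is the sole purpose of that equation, so in every item the task reduces to fulfilling the remaining linear and norm equations and then checking that the $(n+1)\times(2n+2)$ matrix has full row rank $n+1$.

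For the existence part of each item I would proceed in two steps. First, insert the prescribed value of $\lambda$ ($\lambda=0$ in items 1--2, $\lambda=a$ in item 3, $\lambda=a\beta$ in item 4) together with $\delta=1$ into the third equation of (\ref{eq:system2}); using $a^{q+1}=-1$ to simplify the term $\lambda a^q$, this linear relation solves uniquely for $\alpha$ (and, in items 3--4, pins $\beta$), reproducing the displayed closed forms. Second, substitute into the norm equation. In items 1 and 2 all of $\theta,\beta$ and the numerator of $\alpha$ lie in $\F_p$, so Frobenius acts trivially and $\alpha^{q+1}=(\text{numerator})^2/\gamma^{q+1}=(\text{numerator})^2/(-3-n)$; clearing $\gamma^{q+1}$ and collecting powers of $\theta$ then produces exactly the stated quadratic congruences, e.g.\ $(n+2)\theta^2-4(n+1)\theta+4(n-1)\equiv 0$, which is $(\theta-2)^2 n+2\theta^2-4\theta-4\equiv 0\pmod p$ for item 1. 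In items 3 and 4 the norm equation holds identically after substitution: with $\alpha=a\theta$ and $\lambda=a\beta$ one has $\alpha^{q+1}=-\theta^{q+1}$ and $\lambda^{q+1}=-\beta^{q+1}$, so the left-hand side telescopes to $0$ for every $\theta$ (item 4), while item 3 uses $\theta^{q+1}=n$ and $\lambda^{q+1}=-1$ for the same cancellation. The nondegeneracy provisos are cleared by the familiar norm trick: assuming a denominator such as $n+1+a\theta\gamma^q$ vanishes and raising to the $(q+1)$-th power yields $(-1)n(-3-n)=(n+1)^2$, i.e.\ $n\equiv 1$, which is excluded; the other denominators are treated the same way, each forbidden residue class of $n$ matching a forbidden vanishing.

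For the rank claim I would analyse a hypothetical dependence $g_1=\sum_{i=1}^n\lambda_i g_{i+1}$ coordinate by coordinate. The first $n+1$ coordinates give Lemma \ref{lem:dependent2}: all $\lambda_i$ equal a common $\epsilon$, with $\theta=\epsilon n$ and $\beta=\epsilon(n+1)$. I would then also read off the last $n$ coordinates, where the top block $\lambda(L_1+\cdots+L_n)$ must equal $a\epsilon(L_1+\cdots+L_n)$; since $L$ is invertible its row sum is nonzero, forcing $\lambda=a\epsilon$. This one extra relation settles every item cleanly: in items 1--2, $\lambda=0$ gives $\epsilon=0$, hence $\beta=0$, contradicting $\beta\in\{1,2\}$; in item 3, $\lambda=a$ gives $\epsilon=1$, hence $\theta=n$, so $\theta^{q+1}=n^2=n$ forces $n\equiv 0,1\pmod p$, excluded; in item 4, $\lambda=a\beta$ gives $\epsilon=\beta$, whence $\beta=\epsilon(n+1)=(n+1)\beta$, i.e.\ $n\beta=0$, so $\beta=0$ since $n\not\equiv0$, while $\beta\neq 0$ because we take $\theta\neq0$ and $1+a\gamma^q\neq0$ (the latter fails only for $n\equiv-2$, also excluded).

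The main obstacle is precisely this rank argument for item 4, and to a lesser degree item 3: the existence computations are routine cancellations once $a^{q+1}=-1$ and $\gamma^{q+1}=-3-n$ are in hand, but proving $g_1\notin\mathrm{span}(g_2,\dots,g_{n+1})$ requires the supplementary relation $\lambda=a\epsilon$ from the final block, which is what turns the hypotheses $n\not\equiv 0,-2\pmod p$ into genuine contradictions. I would organise the write-up so that each forbidden congruence on $n$ is invoked exactly at the point where a denominator could vanish or a dependence could survive, mirroring the bookkeeping of Theorem \ref{thm:double2} while replacing its more delicate Frobenius-descent step by the cleaner row-sum relation above.
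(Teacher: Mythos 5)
Your proposal is correct, and its existence half coincides with the paper's: substitute the prescribed $\lambda$ into system (\ref{eq:system2}), solve the linear (third) equation for $\alpha$ (and $\beta$), check that the norm equation reduces to the stated quadratic congruence in items 1--2 and cancels identically in items 3--4 using $a^{q+1}=-1$, $\theta^{q+1}=n$, and handle the denominators by the raise-to-the-$(q+1)$-st-power trick. Where you genuinely diverge is the full-rank step. The paper works only with the first $n+2$ coordinates of a hypothetical dependence $g_1=\sum_i\lambda_i g_{i+1}$: it invokes Lemma \ref{lem:dependent2} to get $(\theta,\beta)$ proportional to $(n,n+1)$, then in items 1--3 tacitly normalizes $\epsilon=1$ (writing $\theta=n$, $\beta=n+1$), and in item 4 runs a Frobenius descent, matching $\beta/\theta=(n+1)/n$ against $\beta=-\theta(1+a\gamma^q)/n$ to force $a\gamma^q=a^q\gamma$, hence $\gamma/a\in\F_q$, $(\gamma/a)^2=n+3$, and finally a contradiction via $\gamma^{2t}=a^{2t}$. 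You instead also read the dependence on the \emph{last} $n$ coordinates: the combination $\sum_i \epsilon\, aL_i=a\epsilon(L_1+\cdots+L_n)$ must equal $\lambda(L_1+\cdots+L_n)$, and since the row sum of an invertible matrix is nonzero this forces $\lambda=a\epsilon$. That single extra relation settles all four items uniformly ($\epsilon=0$, hence $\beta=0$, in items 1--2; $\epsilon=1$, hence $n=n^2$, in item 3; $\epsilon=\beta$, hence $n\beta=0$, in item 4), and it buys real rigor that the paper's write-up lacks: Lemma \ref{lem:dependent2} only yields proportionality, so the paper's normalization $\epsilon=1$ is unjustified as stated, and its item-4 endgame (deducing $n+3=1$ from $(n+3)^t a^{2t}=a^{2t}$) is shaky, whereas your row-sum relation replaces both with an elementary, uniform argument at the cost of one easy observation about $L$. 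The only point to make explicit in a final write-up is that $\theta\neq 0$ is implicitly assumed in item 4 (as in the paper's own remark after Theorem \ref{thm:double1}, since $\theta=0$ forces $g_1=0$), which you use when asserting $\beta\neq 0$.
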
 
%\begin{proof} The proof follows from the same reasoning as that in Theorem \ref{thm:double2}.
%\end{proof}

\begin{proof} It is obvious that $\delta=1$ and $\gamma^{q+1}=-3-n$ satisfy the first equation of system (\ref{eq:system2}).
\begin{enumerate}
\item Plugging $\lambda=0$ in system (\ref{eq:system2}) and letting $\theta$ be arbitrary, we obtain two equations in two varaibles $\beta,\alpha$ having the solutions in the desired form.
 From Lemma \ref{lem:dependent2}, if $g_1$ is in $\text{span}(g_2,\hdots,g_{n+1})$ then we can also write $\theta=n$ and $\beta=(n+1)$. Since $\beta=2$, we get $n+1\equiv 2 \pmod p$ which is a contradiction.
\item The result follows from the same computation and reasoning as 1).
\item Plugging $\lambda=a$ in system (\ref{eq:system2}) and letting $\theta=\sqrt[q+1]{n}$, we obtain two equations in two varaibles $\beta,\alpha$ having the solutions in the desired form. If $a\theta\gamma^q=1-n$, then by raising both sides to the power $q+1$, we get that $a^{q+1}\theta^{q+1}\gamma^{q+1}=-n(-3-n)=n^2+2n+1,$ that is $(n-1)\equiv 0 \pmod p$, which is a contradiction. From Lemma \ref{lem:dependent2}, if $g_1$ is in $\text{span}(g_2,\hdots,g_{n+1})$ then we can also write $\theta=n$ and $\beta=(n+1)$. Since $\theta=\sqrt[q+1]{n}$, we get $\theta^{q+1}=n=n^{q+1}=n^2$ and thus $n\equiv 0,1\pmod p$ which is a contradiction.
 \item Plugging $\lambda=a\beta$ in system (\ref{eq:system2}) and letting $\theta$ be arbitrary, we obtain two equations in two varaibles $\beta,\alpha$ having the solutions in the desired form. If $\frac{\beta}{\theta}=\frac{n+1}{n}$ then from $\beta=\frac{-\theta (1+a\gamma^q)}{n}$, we get

$$n+1=-(1+a\gamma^q),(n+1)^q=-({1+a^q\gamma}).$$
Matching the two equations together gives $a\gamma^q=a^q\gamma$ and hence $\left(\frac{\gamma}{a}\right)^{q-1}=1.$ Since $\left(\frac{\gamma}{a}\right)^{q+1}=n+3,$ we get $\left(\frac{\gamma}{a}\right)^{q+1}\left(\frac{\gamma}{a}\right)^{q-1}=\left(\frac{\gamma}{a}\right)^2=n+3$ and thus $\gamma^2=(n+3)a^2.$ Now since $\gamma^{2t}=a^{2t}$ for some positive integer $t$, we obtain $a^{2t}=(n+3)^ta^{2t}$ and it implies that $n+3=1,$ which is a constradiction. Hence 
$\frac{\beta}{\theta}\not=\frac{n+1}{n}$ and the result follows by Lemma \ref{lem:dependent2}.

\end{enumerate}
\end{proof}
%*****************************************************************************

%Parameters of $q^2-$ary Hermitian self-dual codes over $\F_{17^2}$ and $\F_{19^2}$ for Theorem \ref{thm:double4} 4) are given in Table \ref{table:new}.
\begin{cor}
Assume that $p\equiv 1 \pmod 2,$ $n\equiv 6 \pmod p$ and $n\not \equiv 1,-3 \pmod p.$ Then for $a^{q+1}=-1,\delta=1,\gamma^{q+1}=1-n$, $\theta=1,\beta=2,\alpha=\frac{-3-2n}{\gamma^q}, \lambda=0,$ the matrix $G_n$ generates a $q^2-$ary Hermitian self-dual $[2n+2,n+1]$ code.
\end{cor}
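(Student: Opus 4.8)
The plan is to derive this corollary as the $\theta=1$ specialization of Theorem \ref{thm:double4} 1), exactly paralleling how the two preceding corollaries specialize Theorem \ref{thm:double2} 1). The first thing I would determine is which parent theorem applies: although the displayed condition reads $\gamma^{q+1}=1-n$ (carried over from the two earlier corollaries), the value $\alpha=\frac{-3-2n}{\gamma^q}$ together with the congruence $n\equiv 6\pmod p$ identify this as the family of Theorem \ref{thm:double4}, where $\gamma^{q+1}=-3-n$. I would therefore read the stated $\gamma^{q+1}=1-n$ as a typographical slip for $\gamma^{q+1}=-3-n$; with that correction the remaining data are fully consistent with Theorem \ref{thm:double4} 1).

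Next I would substitute $\theta=1$ into the formulas of Theorem \ref{thm:double4} 1) and check each against the corollary. The expression $\alpha=\frac{-2(1+n)-\theta}{\gamma^q}$ collapses to $\alpha=\frac{-3-2n}{\gamma^q}$, while $\beta=2$, $\lambda=0$, $\delta=1$, and $a^{q+1}=-1$ are already in the required form. The decisive verification is the scalar congruence $(\theta-2)^2 n + 2\theta^2 - 4\theta - 4\equiv 0\pmod p$ appearing in Theorem \ref{thm:double4} 1): evaluating at $\theta=1$ gives $n + 2 - 4 - 4\equiv 0$, i.e. $n\equiv 6\pmod p$, which is precisely the main hypothesis of the corollary.

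Finally I would confirm the side conditions. The assumptions $p\equiv 1\pmod 2$ and $n\not\equiv 1\pmod p$ are exactly those of Theorem \ref{thm:double4} 1), and the requirement $n\not\equiv -3\pmod p$ guarantees $\gamma^{q+1}=-3-n\not\equiv 0\pmod p$, so that a suitable $\gamma\neq 0$ exists and the construction is nondegenerate. Once these hypotheses are matched, the conclusion that $G_n$ generates a $q^2$-ary Hermitian self-dual $[2n+2,n+1]$ code is immediate from Theorem \ref{thm:double4} 1). I expect no genuine obstacle here: the argument is a one-line invocation of that theorem at $\theta=1$, and the only care required is the bookkeeping of matching parameters together with recognizing the apparent $\gamma^{q+1}$ typo.
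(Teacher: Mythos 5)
Your proposal is correct and matches the paper's own proof, which is simply ``The result follows by plugging $\theta=1$ in Theorem \ref{thm:double4} 1).'' Your additional observations --- that the stated $\gamma^{q+1}=1-n$ is a typographical slip for $\gamma^{q+1}=-3-n$, and that the congruence $(\theta-2)^2n+2\theta^2-4\theta-4\equiv 0\pmod p$ at $\theta=1$ reduces exactly to $n\equiv 6\pmod p$ --- are accurate and make the verification more careful than the paper's one-line argument.
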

\begin{proof} The result follows by plugging $\theta=1$ in Theorem \ref{thm:double4} 1).
\end{proof}
\begin{cor}
Assume that $n\equiv 3 \pmod p$ and $n\not \equiv 0,-3 \pmod p.$ Then for $a^{q+1}=-1,\delta=1,\gamma^{q+1}=1-n$, $ \theta=2,\beta=1,\alpha=\frac{-3-n}{\gamma^q},\lambda=0,$ the matrix $G_n$ generates a $q^2-$ary Hermitian self-dual $[2n+2,n+1]$ code.
\end{cor}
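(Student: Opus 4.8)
The plan is to obtain this corollary as the single-parameter specialization $\theta = 2$ of Theorem \ref{thm:double4}, part 2), following the same recipe by which the preceding corollaries are read off from fixed values of $\theta$ in these theorems. All the structural content is already in place: the Hermitian self-orthogonality of the rows of $G_n$ and the fact that the resulting code has dimension exactly $n+1$ are established in Lemma \ref{lem:double2} and in the proof of Theorem \ref{thm:double4}. Consequently the whole task reduces to checking that the constants named in the corollary are precisely those that the theorem produces once $\theta$ is set equal to $2$, so that the corollary is a true instance of the theorem rather than a new result requiring its own self-duality argument.

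First I would carry out the parameter substitution. Inserting $\theta = 2$ into the generic formula $\alpha = \frac{-(1+n)-\theta}{\gamma^q}$ of Theorem \ref{thm:double4}, part 2), gives $\alpha = \frac{-(1+n)-2}{\gamma^q} = \frac{-3-n}{\gamma^q}$, which is exactly the value stated, while $\beta = 1$ and $\lambda = 0$ are inherited verbatim. This step is purely mechanical and introduces no new computation; it only confirms that the corollary is pointing at the right branch of the theorem (the branch with $\beta = 1$, as opposed to part 1) where $\beta = 2$).

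The key step is to verify that the self-duality congruence attached to part 2), namely $(\theta-1)^2 n + 2\theta^2 - 2\theta - 1 \equiv 0 \pmod p$, collapses at $\theta = 2$ to the linear congruence on $n$ asserted in the corollary; this is the one genuine arithmetic check. I would then confirm that the two excluded residues play the expected roles: $n \not\equiv 0 \pmod p$ is inherited directly from the hypotheses of the theorem, and the remaining exclusion is what keeps $\gamma^{q+1} = -3-n \not\equiv 0 \pmod p$, so that $\gamma$ is nonzero and the entry $\alpha = \frac{-3-n}{\gamma^q}$ is well defined rather than degenerate. With the hypotheses of the theorem thus in force, the statement that the code has full dimension $n+1$ follows immediately from Lemma \ref{lem:dependent2}, exactly as in the proof of part 2).

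The main obstacle I anticipate is bookkeeping rather than anything conceptual: one must make sure that the residue class forced by the collapsed congruence is actually compatible with the non-vanishing requirement $\gamma^{q+1} = -3-n \not\equiv 0$ and with $n \not\equiv 0$, since these constraints interact. In particular I would recompute the sign and the constant term of the specialized congruence and cross-check them against the non-degeneracy condition, because this is precisely the point at which an off-by-a-sign slip would leave the asserted residue class incompatible with $\gamma \neq 0$. Once that consistency is verified, everything else is inherited from Theorem \ref{thm:double4} and the corollary follows at once.
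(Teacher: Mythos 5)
Your strategy is the same as the paper's (whose entire proof is the phrase ``plug $\theta=2$ into Theorem~\ref{thm:double4}~2)''), but the one arithmetic check you defer --- that the congruence $(\theta-1)^2n+2\theta^2-2\theta-1\equiv 0\pmod p$ of part 2) collapses at $\theta=2$ to the congruence asserted in the corollary --- is precisely the step that fails, and it cannot be repaired. At $\theta=2$ the left-hand side is $n+8-4-1=n+3$, so part 2) forces $n\equiv -3\pmod p$, not $n\equiv 3\pmod p$ as the corollary states; and $n\equiv -3$ is exactly the residue excluded both by the corollary's own hypothesis $n\not\equiv -3\pmod p$ and by the standing hypothesis of Theorem~\ref{thm:double4} that $\gamma^{q+1}=-3-n\not\equiv 0\pmod p$. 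In other words, the $\theta=2$ specialization of Theorem~\ref{thm:double4}~2) is vacuous: no admissible $n$ exists, so there is nothing to inherit from the theorem. The interaction you flagged as ``bookkeeping'' (compatibility of the collapsed congruence with $\gamma\neq 0$) is a genuine contradiction, not a sign convention one can tune away.

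The defect lies in the statement itself, so no execution of your plan (or of the paper's one-line proof) can succeed. A direct check makes this concrete. Under the reading consistent with Theorem~\ref{thm:double4} --- namely $\gamma^{q+1}=-3-n$, which the first equation of system (\ref{eq:system2}) forces when $\delta=1$; the corollary's printed value $\gamma^{q+1}=1-n$ is carried over from the corollaries to Theorem~\ref{thm:double2} and already makes the last $n$ rows non-isotropic, since $1+(n+2)+(1-n)=4\not\equiv 0$ for $p\neq 2$ --- the parameters $\theta=2$, $\beta=1$, $\lambda=0$, $\alpha=\frac{-3-n}{\gamma^q}$ give $g_1*g_1=\theta^{q+1}+n\beta^{q+1}+\alpha^{q+1}=4+n+\frac{(-3-n)^2}{-3-n}=4+n-3-n=1\neq 0$ for \emph{every} $n$ with $-3-n\not\equiv 0$, so the first row of $G_n$ is never Hermitian self-orthogonal and the code generated is never self-dual. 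Contrast this with the analogous corollary after Theorem~\ref{thm:double2}, where $\theta=2$ yields the consistent class $n\equiv 5\pmod p$: there your recipe works, but here the correct conclusion of a careful referee is that the corollary is misstated and a proof of it as written does not exist.
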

\begin{proof} The result follows by plugging $\theta=2$ in Theorem \ref{thm:double4} 2).
\end{proof}
\begin{rem} When the characteristic of the field is even, Construction (\ref{eq:double1}) and Construction (\ref{eq:double2}) coincide. Moreover the construction exists only for $n$ being odd.
\end{rem}

\section{Hermitian self-dual matrix product codes}\label{section:matrix-product}
For linear codes $(C_i)_{1\le i \le l}$, the matrix-product (MP) code $C = [C_1,\hdots,C_l]A$ is defined as a linear code whose all codewords are matrix product $[c_1,\hdots,c_l]A,$ where $c_i \in C_i$ is an $n\times 1$ column vector and $A = (a_{ij})_{l\times m}$ is an $l\times m$ matrix in $M_{m \times l} (\F_q).$ Here $l\leq  m$ and $C_i$ is an $[n,k_i ]$ code. If $C_1,\hdots,C_l$ are linear with generator matrices $G_1,\hdots,G_l,$ respectively, then $[C_1,\hdots,C_l]A$ is linear with generator matrix
$$
G=\left(
\begin{array}{cccc}
a_{11}G_1 &a_{12}G_1&\cdots&a_{1m}G_1\\
a_{21}G_2 &a_{22}G_2&\cdots&a_{2m}G_2\\
\vdots&\vdots&\cdots&\vdots\\
a_{l1}G_l &a_{l2}G_l&\cdots&a_{lm}G_l\\

\end{array}
\right).
$$

%To explore the Hermitian dual of matrix product codes, we need the following notation.
%
%For any ${\bf x}=(x_{ij}), {\bf y}=(y_{ij})\in M_{n\times m},$ we define 
%\begin{equation}
% {\bf x *y}={\bf x}\overline{\bf y}^\top.
%\label{eq:def-trace}
%\end{equation}

The lower bound on minimum distance of a matrix product code is given as follows.
%\begin{thm}[\cite{BlaNor}]
\begin{thm}[\cite{FanLingLiu}]
\label{thm:lowerbound-FanLingLiu} Let $C_i$ be an $[n,k_i]$ code for $i = 1, \hdots,l,$ and let $A = (a_{i j} )_{l \times m}$
be an full row rank (FRR) matrix. Then $C = [C_1,\hdots ,C_l]A$ is an
$[nl,k_1+\cdots+k_l]$ code with minimum distance $d(C)$ satisfying 
$$d(C)\ge \min\{d(C_i)d(U_A(i))| 1\le i \le l\},$$
$$d(C)\ge \min\{d(C_i)d(L_A(i))| 1\le i \le l\},$$
where $U_A(i)$ (resp. $L_A(i)$) is a subcode generated by $A_1,\hdots,A_i$ (resp. $A_i,\hdots,A_l$) with $(A_j)_{1\le j\le l}$ ) being the $j-$th row of $A.$
\end{thm}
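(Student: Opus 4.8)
The plan is to argue directly on the matrix shape of a codeword rather than on the explicit generator matrix $G$. Write an arbitrary codeword of $C$ as the $n\times m$ matrix $M=[c_1,\hdots,c_l]A$, where each $c_i\in C_i$ is regarded as a column of length $n$, and note that the $r$-th row of $M$ is exactly $v_rA$, where $v_r=(c_{1,r},\hdots,c_{l,r})$ gathers the $r$-th coordinates of $c_1,\hdots,c_l$. Thus every row of $M$ is the word $\sum_{i=1}^l c_{i,r}A_i$ in the row space of $A$, and the weight of the codeword is $\wt(M)=\sum_{r=1}^n\wt(v_rA)$. Before the distance estimate I would dispose of the dimension claim: the assignment $(c_1,\hdots,c_l)\mapsto[c_1,\hdots,c_l]A$ is linear, and since $A$ has full row rank one has $v_rA=0\iff v_r=0$; hence $M=0$ forces all $c_i=0$, the map is injective, and $\dim C=k_1+\cdots+k_l$.

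For the first inequality, fix a nonzero codeword $M$ and let $t$ be the largest index with $c_t\neq 0$, so that $c_{t+1}=\cdots=c_l=0$. For every coordinate $r$ in the support of $c_t$ the vector $v_r$ is then supported on $\{1,\hdots,t\}$ with nonzero $t$-th entry, so $v_rA=\sum_{i=1}^t c_{i,r}A_i$ lies in $U_A(t)$ and is nonzero by the linear independence of $A_1,\hdots,A_t$. Each such row of $M$ therefore has weight at least $d(U_A(t))$, and there are $\wt(c_t)$ of them, so discarding all other rows gives
$$\wt(M)\ge \wt(c_t)\,d(U_A(t))\ge d(C_t)\,d(U_A(t))\ge \min_{1\le i\le l}d(C_i)\,d(U_A(i)).$$

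The second inequality is the mirror image: for a nonzero $M$ I would instead take the smallest index $s$ with $c_s\neq 0$, so that $c_1=\cdots=c_{s-1}=0$. Now for $r$ in the support of $c_s$ the vector $v_r$ is supported on $\{s,\hdots,l\}$ with nonzero $s$-th entry, whence $v_rA\in L_A(s)$ is nonzero; the identical counting yields $\wt(M)\ge d(C_s)\,d(L_A(s))\ge\min_{1\le i\le l}d(C_i)\,d(L_A(i))$.

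The only point requiring care is the nonvanishing of the selected rows together with the claim that they land in the smaller codes $U_A(t)$ and $L_A(s)$ rather than in the full row space of $A$. Both rest on isolating the extremal block index ($t$ for the upper bound, $s$ for the lower one) and on the full-row-rank hypothesis, which is precisely what prevents a nontrivial combination of $A_1,\hdots,A_t$ (resp. $A_s,\hdots,A_l$) from collapsing to zero. The remainder is bookkeeping: writing $\wt(M)$ as the sum of its row weights and keeping only the rows indexed by $\mathrm{supp}(c_t)$ (resp. $\mathrm{supp}(c_s)$).
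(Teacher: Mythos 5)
Your proof is correct and complete. One thing you could not have known: the paper itself gives no proof of this statement --- it is quoted from \cite{FanLingLiu} and used as a black box in the rest of Section \ref{section:matrix-product} --- so there is no internal proof to compare against; your write-up supplies the missing argument. What you give is the standard proof of this bound, and every step checks out: the dimension claim follows from injectivity of $v\mapsto vA$ on $\F_{q}^l$ (full row rank), and both distance bounds follow by decomposing the codeword matrix $M=[c_1,\hdots,c_l]A$ into its rows $v_rA$, isolating the extremal nonzero block index ($t$ maximal for the $U_A$ bound, $s$ minimal for the $L_A$ bound), and keeping only the rows indexed by $\mathrm{supp}(c_t)$ (resp. $\mathrm{supp}(c_s)$). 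Each such row is a nonzero codeword of $U_A(t)$ (resp. $L_A(s)$), because the rows of $A$ are linearly independent and the coefficient of $A_t$ (resp. $A_s$) in $v_rA$ is nonzero, and there are $\wt(c_t)\ge d(C_t)$ (resp. $\wt(c_s)\ge d(C_s)$) such rows, which closes the count. A side remark: the code length is $nm$, consistent with your $n\times m$ matrix picture; the ``$nl$'' in the statement is a typo of the paper, which itself uses the parameters $[nm,k_1+\cdots+k_l]$ in the proof of Lemma \ref{lem:hermitian-dual-MP}.
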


The Hermitian dual of a matrix product code can be determined as follow.
\begin{lem}\label{lem:hermitian-dual-MP}
Let $(C_i)_{1\le i \le l}$ be linear codes  of length $n$, and let $A\in M_{l\times m} (\F_{q^2})$ be FRR.
Assume that $B\in M_{m \times l} (\F_{q^2})$ is a right conjugate inverse of $A$, that is $A \overline{B}=I_l,$ and $H\in M_{m-l\times m} (\F_{q^2})$ is a generator
matrix of the Hermitian dual code $ L_A(1)^{\perp_H}$ of $L_A(1).$ Then the Hermitian dual of
$C = [C_1, \hdots,C_l]A$ is $$C^{\perp_H}=\left[C_1^{\perp_H},\hdots,C_l^{\perp_H},\underbrace{{\F_{q^2}^n,\hdots,\F_{q^2}^n}}_{m-l}\right]\left(\begin{array}{c}
{B}^\top\\H \end{array}\right).$$
\end{lem}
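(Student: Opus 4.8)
The plan is the standard two-step argument for identifying a dual code: exhibit the right-hand side, which I will call $\widetilde{C}$, as a subcode of $C^{\perp_H}$, and then check that $\dim\widetilde{C}=\dim C^{\perp_H}$, so that equality is forced. Throughout I would use the block description of matrix-product codewords: a generic word of $C=[C_1,\dots,C_l]A$ is $\mathbf c=(u_1,\dots,u_m)$ with $u_j=\sum_{i=1}^{l}a_{ij}c_i$ and $c_i\in C_i$, while a generic word of $\widetilde{C}$ is $\mathbf c'=(v_1,\dots,v_m)$ with $v_j=\sum_{k=1}^{m}M_{kj}d_k$, where $M=\left(\begin{smallmatrix}B^\top\\ H\end{smallmatrix}\right)$, $d_i\in C_i^{\perp_H}$ for $1\le i\le l$ and $d_k\in\F_{q^2}^n$ for $l< k\le m$. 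I also record the elementary identity $\mathbf x*\mathbf y=(\mathbf y*\mathbf x)^q$ on $\F_{q^2}^n$, which makes Hermitian orthogonality insensitive to the order of the two vectors.

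For the inclusion I would expand the Hermitian product of $\mathbf c$ and $\mathbf c'$ one coordinate block at a time and collect terms, which yields
\begin{equation*}
\mathbf c*\mathbf c'=\sum_{i=1}^{l}\sum_{k=1}^{m}\bigl(A\overline{M}^{\top}\bigr)_{ik}\,(c_i*d_k).
\end{equation*}
The point is then to simplify $A\overline{M}^{\top}$. Since $\overline{M}^{\top}=(\overline B\mid\overline H^{\top})$, the hypothesis $A\overline B=I_l$ handles the first $l$ columns, while the fact that the rows of $H$ lie in $L_A(1)^{\perp_H}$ — the Hermitian dual of the row space of $A$ — translates into $H\overline A^{\top}=0$, equivalently $A\overline H^{\top}=0$, and handles the remaining columns. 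Hence $A\overline{M}^{\top}=(I_l\mid 0)$, and the double sum collapses to $\sum_{i=1}^{l}c_i*d_i$. Each term vanishes because $d_i\in C_i^{\perp_H}$, so $\mathbf c*\mathbf c'=0$; by the symmetry identity this gives $\widetilde C\subseteq C^{\perp_H}$.

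For the dimension count I would first argue that $M$ is invertible, which is the one step needing a genuine (if short) argument rather than bookkeeping. Writing a column dependence $\overline B\,x+\overline H^{\top}y=0$ and multiplying on the left by $A$ kills the second summand and yields $x=0$; then $\overline H^{\top}y=0$ forces $y=0$ since $H$, being a generator matrix, has full row rank. Thus the $m$ columns of $\overline M^{\top}=(\overline B\mid\overline H^{\top})$ are independent, so $\overline M^{\top}$, and hence $M$, is invertible. Because $M$ is an invertible $m\times m$ matrix, the matrix-product map $(d_1,\dots,d_m)\mapsto[d_1,\dots,d_m]M$ is injective, giving
\begin{equation*}
\dim\widetilde C=\sum_{i=1}^{l}(n-k_i)+(m-l)n=mn-\sum_{i=1}^{l}k_i.
\end{equation*}
On the other hand $\dim C=\sum_{i=1}^l k_i$ (the map defining $C$ is injective because $A$ is FRR, in agreement with Theorem \ref{thm:lowerbound-FanLingLiu}), so $\dim C^{\perp_H}=mn-\sum_{i=1}^l k_i=\dim\widetilde C$. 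Combined with $\widetilde C\subseteq C^{\perp_H}$ this proves $C^{\perp_H}=\widetilde C$.

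The main obstacle is not any single hard estimate but keeping the conjugations and transpositions consistent: the whole argument hinges on converting the statement ``the rows of $H$ lie in $L_A(1)^{\perp_H}$'' into the clean matrix identity $A\overline H^{\top}=0$, and on reading $A\overline B=I_l$ off the correct side. Once the identity $A\overline M^{\top}=(I_l\mid 0)$ is in place, both the orthogonality and the invertibility of $M$ follow from it almost immediately, and the remaining dimension bookkeeping is routine.
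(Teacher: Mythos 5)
Your proof is correct and follows essentially the same route as the paper's: establish the inclusion by expanding the Hermitian inner product block-wise, reducing everything to the identities $A\overline{B}=I_l$ and $A\overline{H}^{\top}=0$, and then force equality with a dimension count. The only real difference is organizational --- the paper pads $C$ with zero codes and completes the pair to square matrices $A''$ and $B''=(B\mid H^{\top})$ satisfying $A''\overline{B''}=I_m$, whereas you work directly with the rectangular identity $A\overline{M}^{\top}=(I_l\mid 0)$ and, usefully, make explicit the invertibility of $M$, a fact the paper relies on only tacitly (both when it asserts the existence of the completion $A'$ and when it counts the codewords of the right-hand side).
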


\begin{proof} With  $B$ and $H$ given, put  $B''=\left(\begin{array}{cc}
 {B}&H^\top \end{array}\right).$ Choose $A'\in M_{{(m-l)\times m}}$ such that $A' \overline{B''}=\left(\begin{array}{cc}
 A'\overline{B}&A'\overline{H^\top} \end{array}\right)=\left(\begin{array}{cc}
0&I_{m-l} \end{array}\right).$ 
Put $A''=\left(\begin{array}{c}
{A}\\A' \end{array}\right).$  
Since $A\overline{B''} =\left(\begin{array}{cc}
 A\overline{B}&A\overline{H^\top} \end{array}\right)=\left(\begin{array}{cc}
I_l&0 \end{array}\right)$, we get 
\begin{equation}
A''\overline{B''}=\left(\begin{array}{c}
{A}\\A' \end{array}\right)
\left(\begin{array}{cc}
 \overline{B}&\overline{H^\top} \end{array}\right)
=I_m.
\end{equation}

First note that $C = [C_1,\hdots ,C_l]A$ can also be written as 

$$C = [C_1,\hdots ,C_l,\underbrace{0,\hdots,0}_{m-l}]\left(\begin{array}{c}
{A}\\A' \end{array}\right)$$
We want to show that $$C^{\perp_H}=\left[C_1^{\perp_H},\hdots,C_l^{\perp_H},\underbrace{{\F_{q^2}^n,\hdots,\F_{q^2}^n}}_{m-l}\right]{B''}^\top .$$
Let ${\bf c}=(c_1\hdots,c_l,0\hdots,0)A''\in C$ and ${\bf x}=(x_1,\hdots,x_l,x_{l+1},\hdots,x_m){B''}^\top,$ where $c_i \in C_i $ for $1\le i \le l$, $x_i \in C_i^{\perp_H} $ for $1\le i \le l$ and $x_i \in \F_{q^2}^n $ for $l+1\le i \le m.$  Then we have
$$
\begin{array}{ll}
{\bf c*x}&=(c_1\hdots,c_l,0\hdots,0)A'' \left((x_1^q,\hdots,x_l^q,x_{l+1}^q,\hdots,x_m^q,)\overline{B''}^\top\right)^\top\\
&= (c_1\hdots,c_l,0\hdots,0)A''\overline{B''} (x_1^q,\hdots,x_l^q,x_{l+1}^q,\hdots,x_m^q)^\top\\
&=(c_1\hdots,c_l) (x_1^q,\hdots,x_l^q)^\top~(\text{since }A''\overline{B''}=I_m)\\
%&=(c_1x_1^\top+\cdots+c_lx_l^\top)\\
%&=\left( c_1x_1^\top\right)+\cdots+\left( c_lx_l^\top\right)~(\text{by linearity of the trace function})\\
&= c_1*x_1+\cdots+ c_l*x_l=0~(\text{since } c_i\in C_i,x_i\in C_i^{\perp_H}).\\
\end{array}
$$
Thus we have shown that 

\begin{equation}
\left[C_1^{\perp_H},\hdots,C_l^{\perp_H},\underbrace{{\F_{q^2}^n,\hdots,\F_{q^2}^n}}_{m-l}\right]\left(\begin{array}{c}
 {B}^\top\\H \end{array}\right)\subset C^{\perp_H}.
\label{eq:dual-inclusion}
\end{equation}
The size of the left part of Eq. (\ref{eq:dual-inclusion}) is $q^{2(n-k_1)}\cdots q^{2(n-k_l)}q^{2n(m-l)}=q^{2\left((nm)-(k_1+\cdots+k_l)\right)}.$ Since $C$ has parameters $[nm,k_1+\cdots+k_l]$, the size of the right part of Eq. (\ref{eq:dual-inclusion}) is $q^{2\left(nm-(k_1+\cdots+k_l)\right)}.$ Thus equality holds for Eq. (\ref{eq:dual-inclusion}) and this completes the proof.
\end{proof}

 Hermitian self-dual MP codes can be characterized as follows.
\begin{thm}\label{thm:matrixproduct} Let $C_1,C_2,\hdots ,C_m$ be linear codes of length $n,$ $A\in {\cal U}_m(q^2)$ and  
$$
C=[C_1,C_2,\hdots,C_m]A.
$$
Then $C$ is a $q^2-$ary Hermitian self-dual MP code if $C_1,C_2,\hdots,C_l$ are all $q^2-$ary Hermitian self-dual codes.
\end{thm}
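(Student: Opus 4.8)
The plan is to apply Lemma \ref{lem:hermitian-dual-MP} directly, exploiting the fact that a unitary matrix supplies its own right conjugate inverse. First I would observe that since $A\in{\cal U}_m(q^2)$, the matrix $A$ is invertible and hence of full row rank, so Lemma \ref{lem:hermitian-dual-MP} applies with the square matrix $A$; that is, in the notation of that lemma we are in the case $l=m$, and the hypothesis that $C_1,\dots,C_l$ are Hermitian self-dual amounts to all of $C_1,\dots,C_m$ being Hermitian self-dual. Since $m-l=0$, the block of full spaces $\F_{q^2}^n$ and the auxiliary matrix $H$ generating $L_A(1)^{\perp_H}$ both disappear, so the dual formula collapses to $C^{\perp_H}=[C_1^{\perp_H},\dots,C_m^{\perp_H}]B^\top$ for any right conjugate inverse $B$ of $A$, i.e. any $B$ with $A\overline{B}=I_m$.

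The key step is then to exhibit a convenient $B$. I claim $B=A^\top$ works: because $A$ is unitary we have $A\overline{A}^\top=I_m$, and since entrywise conjugation commutes with transposition we have $\overline{A^\top}=\overline{A}^\top$; hence $A\overline{B}=A\overline{A^\top}=A\overline{A}^\top=I_m$, so $A^\top$ is indeed a right conjugate inverse of $A$. Substituting $B^\top=(A^\top)^\top=A$ into the collapsed dual formula yields $C^{\perp_H}=[C_1^{\perp_H},\dots,C_m^{\perp_H}]A$.

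Finally I would invoke the self-duality hypothesis. If every constituent satisfies $C_i^{\perp_H}=C_i$, then the previous identity becomes $C^{\perp_H}=[C_1,\dots,C_m]A=C$, so $C$ is Hermitian self-dual. As a consistency check, each self-dual $C_i$ forces $n$ even with $\dim C_i=n/2$, whence $\dim C=m\cdot n/2=nm/2$, exactly half of the length $nm$, as Hermitian self-duality requires.

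I do not anticipate a genuine obstacle: the entire argument is a matching of the present hypotheses against the notation of Lemma \ref{lem:hermitian-dual-MP}. The only point demanding care is the transpose–conjugate bookkeeping, namely verifying that $A^\top$ is a right conjugate inverse and that $(A^\top)^\top=A$, together with the observation that the square case $l=m$ annihilates the $H$-block so that no computation of $L_A(1)^{\perp_H}$ is ever required.
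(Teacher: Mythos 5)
Your proposal is correct and follows essentially the same route as the paper: apply Lemma \ref{lem:hermitian-dual-MP} with $l=m$ to get $C^{\perp_H}=[C_1^{\perp_H},\hdots,C_m^{\perp_H}]A$, then substitute $C_i^{\perp_H}=C_i$. The only difference is that you make explicit the bookkeeping the paper leaves implicit --- namely that $B=A^\top$ serves as the right conjugate inverse (so $B^\top=A$) and that the $H$-block vanishes in the square case --- which is a welcome clarification rather than a deviation.
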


\begin{proof}
 We have $A\overline{A^\top}=I_m$ and from Lemma \ref{lem:hermitian-dual-MP}, we get $C^{\perp_H}=[C_1^{\perp_H},\hdots,C_m^{\perp_H}] A.$ Since $C_i$ is a Hermitian self-dual code for $1\le i\le m,$ the equality  $C_i= C_i^{\perp_H}\text{ holds for }1\le i \le m.$ Thus $C=[C_1,\hdots,C_m]A= [C_1^{\perp_H},\hdots,C_m^{\perp_H}] A=C^{\perp_H}$ and the result follows.
\end{proof}

\begin{thm}Let $C_1,C_2,\hdots ,C_m$ be linear codes of length $n, A\in M_{m\times m}(\F_{q^2})$  such that $A\overline{A}^\top =\text{diag}(a_1^q\hdots,a_m^q)$ with $a_i\in \F_{q^2}^*, {1\le i\le m}$ and  
$$
C=[C_1,C_2,\hdots,C_m]A.
$$
 Then $C$ is a Hermitian self-dual MP code if $C_1,C_2,\hdots,C_l$ are all Hermitian self-dual codes.
\end{thm}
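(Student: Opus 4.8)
The plan is to follow the same strategy as the proof of Theorem~\ref{thm:matrixproduct}: compute the Hermitian dual of $C$ through Lemma~\ref{lem:hermitian-dual-MP} and then show it coincides with $C$ itself. The only new feature here is that $A$ is ``unitary up to a diagonal scaling'', so the right conjugate inverse required in Lemma~\ref{lem:hermitian-dual-MP} will carry a harmless diagonal factor that I will need to absorb back into the constituent codes.

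First I would verify that Lemma~\ref{lem:hermitian-dual-MP} applies, i.e. that $A$ is FRR. Since each $a_i\neq 0$, the matrix $D:=\text{diag}(a_1^q,\dots,a_m^q)$ is invertible, and taking determinants in $A\overline{A}^\top=D$ gives $\det(A)\,(\det A)^q=\det D\neq 0$, so $\det A\neq 0$ and $A\in M_{m\times m}(\F_{q^2})$ is FRR with $l=m$. Consequently the $H$-block in Lemma~\ref{lem:hermitian-dual-MP} is empty and $C^{\perp_H}=[C_1^{\perp_H},\dots,C_m^{\perp_H}]B^\top$, where $B$ is any right conjugate inverse of $A$, that is $A\overline{B}=I_m$.

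The key computation is to exhibit such a $B$. From $A\overline{A}^\top=D$ I obtain $A\bigl(\overline{A}^\top D^{-1}\bigr)=I_m$, so $\overline B=\overline{A}^\top D^{-1}$ and hence $B=A^\top\,\overline{D^{-1}}$. Because each $a_i\in\F_{q^2}$ satisfies $a_i^{q^2}=a_i$, we have $\overline{D^{-1}}=\text{diag}(a_1^{-q^2},\dots,a_m^{-q^2})=\text{diag}(a_1^{-1},\dots,a_m^{-1})$, so that $B^\top=\text{diag}(a_1^{-1},\dots,a_m^{-1})\,A$. In the unitary case $D=I_m$ this recovers $B^\top=A$, exactly as in Theorem~\ref{thm:matrixproduct}.

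Finally I would absorb the diagonal factor. Writing a generic codeword of $[C_1^{\perp_H},\dots,C_m^{\perp_H}]B^\top$ as $[c_1,\dots,c_m]\,\text{diag}(a_1^{-1},\dots,a_m^{-1})\,A=[a_1^{-1}c_1,\dots,a_m^{-1}c_m]\,A$ with $c_i\in C_i^{\perp_H}$, and using that each $C_i^{\perp_H}$ is linear so that $a_i^{-1}c_i$ ranges over all of $C_i^{\perp_H}$, the scaling is invisible and $[C_1^{\perp_H},\dots,C_m^{\perp_H}]B^\top=[C_1^{\perp_H},\dots,C_m^{\perp_H}]A$. Since $C_1,\dots,C_m$ are Hermitian self-dual, $C_i^{\perp_H}=C_i$ for every $i$, whence $C^{\perp_H}=[C_1,\dots,C_m]A=C$. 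The main things to get right are the bookkeeping of the conjugation in computing $B$ (in particular the identity $a_i^{q^2}=a_i$) and the observation that the diagonal scaling can be pushed into the linear constituent codes; everything else is a routine adaptation of Theorem~\ref{thm:matrixproduct}. I would also remark in passing that $A\overline{A}^\top$ is automatically a Hermitian matrix, so the hypothesis in fact forces $a_i^q\in\F_q$, which is consistent with the form of $B^\top$ obtained above.
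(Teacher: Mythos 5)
Your proposal is correct and follows essentially the same route as the paper's own proof: apply Lemma \ref{lem:hermitian-dual-MP} with a right conjugate inverse of the form $B=A^\top\text{diag}(a_1^{-1},\dots,a_m^{-1})$, absorb the resulting diagonal factor into the linear codes $C_i^{\perp_H}$, and conclude by self-duality exactly as in Theorem \ref{thm:matrixproduct}. If anything your bookkeeping is more careful than the paper's, which writes the factor as $AD''$ before treating it as $\text{diag}(a_1^{-1},\dots,a_m^{-1})A$; your explicit computation $B^\top=\text{diag}(a_1^{-1},\dots,a_m^{-1})A$, together with the observation that the hypothesis forces $a_i\in\F_q$, is the correct form of that step.
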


\begin{proof} Assume that $A\overline{A}^\top =\text{diag}(a_1^q\hdots,a_m^q)$ with $a_i\in \F_{q^2}^*, {1\le i\le m}.$ Then we get 
$$
\begin{array}{ll}
A\overline{\left(A^\top\right)} D'
&=I_m,\text{ where }D'=\text{diag}(a_1^{-q},\hdots, a_m^{-q}).\\
A\overline{\left(D''A^\top\right)}
&=I_m,\text{ where }D''=\text{diag}(a_1^{-1},\hdots, a_m^{-1}).\\
\end{array}
$$ 

From Lemma \ref{lem:hermitian-dual-MP}, we have
$$
\begin{array}{ll}
C^{\perp_H}&=[C_1^{\perp_H},\hdots,C_m^{\perp_H}]\left(AD''\right)\\
&=[C_1^{\perp_H},\hdots,C_m^{\perp_H}]\text{diag}(a_1^{-1},\hdots,a_m^{-1})A\\
&=[a_1^{-1}C_1^{\perp_H},\hdots,a_m^{-1}C_m^{\perp_H}]A.\\
\end{array}
$$
Since $C_i^{\perp_H}$ is linear, we have $a_i^{-1}C_1^{\perp_H}=C_i^{\perp_H}$ for $1\le i \le m$ and it implies that $$C^{\perp_H}=[C_1^{\perp_H},\hdots,C_m^{\perp_H}]A.$$ The rest follows from the same reasoning as in Theorem \ref{thm:matrixproduct}.
\end{proof}

\begin{cor}\label{thm:matrixproduct-2}  Let $C_1,C_2,\hdots ,C_l$ be linear codes of length $n, A\in {\cal U}_m(q^2),A^{(l)}$ a submatrix of $A$ of order $l\times m$ and  
\begin{equation}
C=[C_1,C_2,\hdots,C_l]A^{(l)}.
\end{equation} Then $C$ is a Hermitian self-orthogonal MP code if $C_1,C_2,\hdots,C_l$ are all Hermitian self-orthogonal codes.
\end{cor}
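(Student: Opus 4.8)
The plan is to reduce the statement to a direct computation of the Hermitian inner product on an arbitrary pair of codewords of $C$, exploiting the one structural fact that the $l$ rows of $A$ making up $A^{(l)}$ are themselves Hermitian-orthonormal. First I would record this fact: if $A^{(l)}$ consists of rows $i_1,\dots,i_l$ of $A\in\mathcal{U}_m(q^2)$, then the $(s,t)$ entry of $A^{(l)}\overline{(A^{(l)})^\top}$ is $\sum_{j=1}^m a_{i_s,j}a_{i_t,j}^q=(A\overline{A^\top})_{i_s,i_t}=\delta_{i_s,i_t}$, so that
\[
A^{(l)}\,\overline{(A^{(l)})^\top}=I_l .
\]
In words: deleting rows of a unitary matrix preserves the orthonormality of the surviving rows under the Hermitian form.

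Next I would take two arbitrary codewords $\mathbf{c}=[c_1,\dots,c_l]A^{(l)}$ and $\mathbf{c}'=[c_1',\dots,c_l']A^{(l)}$ with $c_i,c_i'\in C_i$, relabel the chosen rows as $1,\dots,l$, and write each codeword in block form: the $j$-th length-$n$ block of $\mathbf{c}$ is $b_j=\sum_{i=1}^l a_{ij}c_i$ and likewise $b_j'=\sum_{i=1}^l a_{ij}c_i'$. Using that the Hermitian form is linear in the first argument and conjugate-linear in the second, I expand
\[
\mathbf{c}*\mathbf{c}'=\sum_{j=1}^m b_j*b_j'
=\sum_{i,i'=1}^l (c_i*c_{i'}')\sum_{j=1}^m a_{ij}a_{i'j}^q
=\sum_{i,i'=1}^l (c_i*c_{i'}')\,\delta_{i,i'}
=\sum_{i=1}^l c_i*c_i',
\]
where the penultimate equality is exactly the relation $A^{(l)}\overline{(A^{(l)})^\top}=I_l$ from the first step. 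Since each $C_i$ is Hermitian self-orthogonal, $c_i'\in C_i\subseteq C_i^{\perp_H}$ forces $c_i*c_i'=0$ for every $i$, so $\mathbf{c}*\mathbf{c}'=0$; as $\mathbf{c},\mathbf{c}'$ were arbitrary, $C\subseteq C^{\perp_H}$ and $C$ is Hermitian self-orthogonal.

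Alternatively I could route the argument through Lemma \ref{lem:hermitian-dual-MP}: the relation $A^{(l)}\overline{(A^{(l)})^\top}=I_l$ says precisely that $B=(A^{(l)})^\top$ is a right conjugate inverse of the FRR matrix $A^{(l)}$, so the lemma gives $C^{\perp_H}=\big[C_1^{\perp_H},\dots,C_l^{\perp_H},\F_{q^2}^n,\dots,\F_{q^2}^n\big]\binom{A^{(l)}}{H}$. Every codeword of $C$ can be written $[c_1,\dots,c_l,0,\dots,0]\binom{A^{(l)}}{H}$ with $c_i\in C_i\subseteq C_i^{\perp_H}$ and the zero vectors lying in the free blocks $\F_{q^2}^n$, hence it belongs to $C^{\perp_H}$, yielding $C\subseteq C^{\perp_H}$ again.

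The computation is elementary, so the only point I would flag as the crux is the very first step: verifying that passing from $A$ to the row-submatrix $A^{(l)}$ preserves $A^{(l)}\overline{(A^{(l)})^\top}=I_l$. This is immediate once one observes that both the diagonal Gram entries (norm $1$) and the off-diagonal ones (orthogonality $0$) of the selected rows are simply inherited from the corresponding entries of $A\overline{A^\top}=I_m$, with no new computation involving the deleted rows or columns. Everything else is bookkeeping with the sesquilinearity of the Hermitian form.
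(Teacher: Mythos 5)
Your proposal is correct, and your primary argument takes a genuinely more elementary route than the paper, while your ``alternative'' paragraph is in fact exactly the paper's proof. The paper proceeds by padding $C$ as $[C_1,\hdots,C_l,0,\hdots,0]\left(\begin{array}{c}A^{(l)}\\H\end{array}\right)$ and invoking Lemma \ref{lem:hermitian-dual-MP} with $B=(A^{(l)})^\top$ (legitimate precisely because of the Gram identity $A^{(l)}\overline{(A^{(l)})^\top}=I_l$ that you isolate), so that $C^{\perp_H}=[C_1^{\perp_H},\hdots,C_l^{\perp_H},\F_{q^2}^n,\hdots,\F_{q^2}^n]\left(\begin{array}{c}A^{(l)}\\H\end{array}\right)$, and then the containments $C_i\subset C_i^{\perp_H}$ immediately give $C\subset C^{\perp_H}$. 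Your main proof instead expands $\mathbf{c}*\mathbf{c}'$ blockwise by sesquilinearity and collapses the double sum via the same Gram identity to obtain $\mathbf{c}*\mathbf{c}'=\sum_i c_i*c_i'=0$ directly. What each buys: your computation is self-contained and avoids the machinery of Lemma \ref{lem:hermitian-dual-MP} (whose proof requires constructing the complement $H$ and a dimension count), making the self-orthogonality statement transparent; the paper's route is shorter once the lemma is in hand and yields the exact description of $C^{\perp_H}$ rather than only the containment, which is what the surrounding results on self-dual matrix product codes also use. Both arguments hinge on the single structural observation you correctly flag as the crux: the selected rows of a unitary matrix remain Hermitian-orthonormal, i.e., the Gram matrix of $A^{(l)}$ is the corresponding $l\times l$ submatrix of $I_m$.
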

\begin{proof} Let $H$ be a generator matrix of the dual code of $L_{A^{(l)}}(1)$. First note that $C$ can also be written as 
$$C=[C_1,\hdots,C_l,\underbrace{0\hdots,0}_{m-l}]\left(\begin{array}{c}A^{(l)}\\H \end{array}\right).$$ 
 From Lemma \ref{lem:hermitian-dual-MP}, $C^{\perp_H}=[C_1^{\perp_H},\hdots,C_l^{\perp_H},\underbrace{\F_{q^2}^n,\hdots,\F_{q^2}^n}_{m-l}] \left(\begin{array}{c}A^{(l)}\\H \end{array}\right).$ Since $C_i\subset C_i^{\perp_H}$ for $1\le i \le l$, we get $C\subset C^{\perp_H},$ and $C$ is a Hermitan self-orthogonal MP code.
\end{proof}

Similarly we have the following characterization.
\begin{cor} \label{cor:Hermitian self-dual MP code2}Let $C_1,C_2,\hdots ,C_l$ be linear codes. Let $A\in M_{l\times m}(\F_{q^2})$ such that $A\overline{A}^\top=\text{diag}(a_1^q,\hdots,a_l^q)$ with $a_i\not= 0$ for all $1\le i \le l$. Then $C=[C_1,C_2,\hdots,C_l]A$ is a Hermitian self-dual MP code if $C_1,C_2,\hdots,C_l$ are all Hermitian self-dual codes.
\end{cor}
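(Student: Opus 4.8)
The plan is to reduce the statement to a single application of Lemma~\ref{lem:hermitian-dual-MP}, exactly as in the proof of the preceding theorem for the square diagonal case. The first observation I would record is that the hypothesis $A\overline{A}^\top=\text{diag}(a_1^q,\hdots,a_l^q)$ with every $a_i\neq 0$ forces the Gram matrix $A\overline{A}^\top$ to be invertible, so $A$ has full row rank $l$. Since a Hermitian self-dual code of length $nm$ must have dimension $nm/2$, while $C=[C_1,\hdots,C_l]A$ has dimension $k_1+\cdots+k_l=ln/2$ (each $C_i$ being self-dual of length $n$), the self-dual requirement pins down $l=m$, so $A$ is square and invertible. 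Stating this at the outset lets me apply Lemma~\ref{lem:hermitian-dual-MP} with no trailing $\F_{q^2}^n$ blocks, i.e. with $H$ empty.

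Next I would exhibit an explicit right conjugate inverse $B$ of $A$, namely a matrix with $A\overline{B}=I_l$, as demanded by Lemma~\ref{lem:hermitian-dual-MP}. From $A\overline{A}^\top=\text{diag}(a_i^q)$ one solves $A^{-1}=\overline{A^\top}\,\text{diag}(a_i^{-q})$, and conjugating (using $a_i^{q^2}=a_i$ in $\F_{q^2}$) gives $B=\overline{A^{-1}}=A^\top\,\text{diag}(a_i^{-1})$; a direct check confirms $A\overline{B}=A\overline{A}^\top\,\text{diag}(a_i^{-q})=\text{diag}(a_i^q)\,\text{diag}(a_i^{-q})=I_l$. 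Hence $B^\top=\text{diag}(a_i^{-1})\,A$, and Lemma~\ref{lem:hermitian-dual-MP} yields
\begin{equation*}
C^{\perp_H}=[C_1^{\perp_H},\hdots,C_l^{\perp_H}]\,B^\top=[C_1^{\perp_H},\hdots,C_l^{\perp_H}]\,\text{diag}(a_1^{-1},\hdots,a_l^{-1})\,A.
\end{equation*}

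The final step is to absorb the diagonal factor. Reading off the block generator matrix, $[C_1^{\perp_H},\hdots,C_l^{\perp_H}]\,\text{diag}(a_i^{-1})\,A$ has exactly the same rows as $[a_1^{-1}C_1^{\perp_H},\hdots,a_l^{-1}C_l^{\perp_H}]\,A$, since the scalar $a_i^{-1}$ multiplies the whole $i$-th block $a_{ij}G_i^{\perp_H}$. Because each $C_i^{\perp_H}$ is linear and $a_i\neq 0$, scaling leaves it fixed, $a_i^{-1}C_i^{\perp_H}=C_i^{\perp_H}$; invoking the self-duality hypothesis $C_i^{\perp_H}=C_i$ then gives $C^{\perp_H}=[C_1,\hdots,C_l]\,A=C$, which is the claim.

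I expect the only real obstacle to be the bookkeeping in the construction of $B$: because the scalars $a_i^{-1}$ do not commute past $A$, one must verify that the diagonal factor lands on the \emph{left} of $A$ in $B^\top$ rather than on the right, and one must track the Frobenius conjugation carefully (in particular $\overline{\text{diag}(a_i^{-q})}=\text{diag}(a_i^{-1})$). Once $B$ is pinned down correctly, the invariance of a linear code under nonzero scalar multiplication makes the remaining identification immediate, mirroring the diagonal-Gram argument already carried out above.
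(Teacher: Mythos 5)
Your proof is correct and takes essentially the same route as the paper: both apply Lemma~\ref{lem:hermitian-dual-MP} with a right conjugate inverse built from the Gram hypothesis (your $B=A^\top\,\text{diag}(a_1^{-1},\hdots,a_l^{-1})$, hence $B^\top=\text{diag}(a_1^{-1},\hdots,a_l^{-1})\,A$) and then absorb the diagonal factor using linearity of the $C_i^{\perp_H}$ together with $C_i^{\perp_H}=C_i$. Your two refinements in fact repair small defects in the paper's version: the reduction to $l=m$, forced by dimension count and without which the stated corollary fails (for $l<m$ one has $\dim C = ln/2 < nm/2$), lets you drop the $H$-block that the paper's proof carries along, and your care in placing the diagonal on the \emph{left} of $A$ corrects the paper's top block $AD''$, which should read $D''A$.
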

\begin{proof} Let $H$ be a generator matrix of the dual code of $L_A(1)$. Assume that $A\in M_{l\times m}(\F_{q^2})$ satisfying $A\overline{A}^\top=\text{diag}(a_1^q,\hdots,a_l^q)$ with $a_i\not= 0$ for all $1\le i \le l$. Then from Lemma \ref{lem:hermitian-dual-MP}, we get 
$$C^{\perp_H}=[C_1^{\perp_H},\hdots,C_l^{\perp_H},\underbrace{\F_{q^2}^n,\hdots,\F_{q^2}^n}_{m-l}]\left(\begin{array}{c}AD''\\H \end{array}\right),$$
where $D''=\text{diag}(a_1^{-1},\hdots, a_m^{-1}).$
The rest follows with the same reasoning as that in Theorem \ref{thm:matrixproduct-2} .
\end{proof}

%*************************************************************************************************

%****************************************************************************************************
\section{Embedding Hermitian self-orthogonal codes}\label{section:embedding}
%Let $K$ be an arbitrary field and let $V$ be a vector space over $K.$ Let $s:V\times V\longrightarrow K$ be a symmetric bilinear form, and let $W \subset V$ be a subspace of $V.$ An injective linear map $\phi : W\longrightarrow V$ is called an isometry from $W$ to $V$ if $\phi(w_1)\cdot \phi(w_2)=w_1\cdot w_2$ holds for any $w_1,w_2\in W$ . The Witt's Theorem can be stated as follows

\begin{lem}[Witt\cite{Serre}]\label{thm:witt}
 Let $V$ be a finite dimensional vector space over $K$, where $K$ is a field of characteristic not equal to $2$. Let $s$ be a non-degenerate symmetric bilinear form on $V$ and let  $W \subset V$ be a subspace of $V.$ Assume that $\phi : W\longrightarrow V$ is an isometry, that is for any ${\bf x,y} \in W, s(\phi({\bf x},\phi({\bf y}))=s(x,y)$. Then $\phi$ can be extended to an isometry $\phi' : V\longrightarrow V.$
\end{lem}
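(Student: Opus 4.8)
The plan is to prove the statement by induction on $\dim W$, building the extended isometry out of reflections. Recall that for any anisotropic vector $v \in V$ (one with $s(v,v) \neq 0$) the map $\tau_v(x) = x - 2\frac{s(x,v)}{s(v,v)}v$ is an isometry of $(V,s)$; it is well defined precisely because the characteristic of $K$ is not $2$, and this is the only place that hypothesis is needed at the level of reflections. These $\tau_v$ will be the elementary building blocks of $\phi'$.

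The base and key case is $\dim W = 1$, say $W = Kx$ with $y := \phi(x)$, so that $s(y,y) = s(x,x)$. When $s(x,x) \neq 0$ I would note that $s(x+y,x+y) + s(x-y,x-y) = 4\,s(x,x) \neq 0$, so at least one of $x \pm y$ is anisotropic. A short computation shows that if $x-y$ is anisotropic then $\tau_{x-y}(x) = y$, while if $x+y$ is anisotropic then $\tau_{x+y}(x) = -y$, so that $\tau_y \circ \tau_{x+y}$ sends $x$ to $y$. In either situation I obtain an element of $O(V)$ agreeing with $\phi$ on $W$.

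For the inductive step when $s|_W$ is non-degenerate, I would pick an anisotropic $x \in W$ (one exists since a non-degenerate space in characteristic not $2$ has an orthogonal basis) and split $W = Kx \perp W'$ with $W' = W \cap x^{\perp}$ of dimension $\dim W - 1$. Using the one-dimensional case I can precompose $\phi$ with a suitable element of $O(V)$ so as to assume $\phi(x) = x$; then for $w' \in W'$ one has $s(\phi(w'),x) = s(w',x) = 0$, so $\phi$ carries $W'$ isometrically into the non-degenerate space $x^{\perp}$. The induction hypothesis, applied with ambient space $x^{\perp}$, extends $\phi|_{W'}$ to an isometry $\psi$ of $x^{\perp}$, and $\mathrm{id}_{Kx} \perp \psi$ is the desired extension on $V = Kx \perp x^{\perp}$.

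The main obstacle is the case where $s|_W$ is degenerate, i.e. the radical $N = W \cap W^{\perp}$ is nonzero; here reflections alone are not enough. Note first that $N$ is totally isotropic, since every $z \in N$ lies in $W \cap W^{\perp}$ and hence satisfies $s(z,z) = 0$. The plan is to reduce to the non-degenerate case by enlarging $W$. Writing $W = N \perp W_1$ with $W_1$ non-degenerate, I would use the non-degeneracy of $s$ on all of $V$ to complete the totally isotropic $N$, inside $W_1^{\perp}$, to a hyperbolic subspace $N \oplus N^{*}$; then $\hat W := (N \oplus N^{*}) \perp W_1$ is non-degenerate and contains $W$. Doing the same on the target side for the radical $\phi(N)$ of $\phi(W)$, I extend $\phi$ to an isometry $\hat\phi$ on $\hat W$ by mapping a hyperbolic dual basis of $N$ to one of $\phi(N)$, and then apply the non-degenerate case to $\hat\phi$. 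The delicate point, and the reason this step carries the weight of the proof, is to choose the hyperbolic partners on the domain and target sides compatibly so that the pairing relations match and $\hat\phi$ really is an isometry; this is exactly where non-degeneracy of $s$ on $V$, rather than merely on $W$, is essential.
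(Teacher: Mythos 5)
The paper gives no proof of this lemma: it is Witt's extension theorem, quoted with a citation to Serre, so the only comparison available is with the classical argument in the cited reference. Your proof is essentially that classical argument --- reflections $\tau_v$, the two-reflection trick sending $x$ to $y$ when $s(x,x)=s(y,y)\neq 0$, induction on $\dim W$ when $s|_W$ is non-degenerate, and hyperbolic completion of the radical $N=W\cap W^{\perp}$ to reduce the degenerate case --- and it is correct in outline. Two minor points of care. First, the normalization step must compose on the left: replace $\phi$ by $\sigma\circ\phi$ where $\sigma$ is an isometry of $V$ carrying $\phi(x)$ to $x$; ``precomposing'' $\phi$ with an element of the orthogonal group of $V$ does not make sense for a map defined only on $W$. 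Second, your degenerate-case step (transporting the radical and its hyperbolic partner to the target side) silently uses that $\phi$ is injective, and this hypothesis is genuinely needed: with the paper's literal definition of isometry (merely form-preserving), the statement is false --- take $\phi=0$ on an isotropic line of a hyperbolic plane --- whereas injectivity is automatic only when $s|_W$ is non-degenerate. This lapse originates in the paper's formulation of the lemma, but your write-up should state the injectivity assumption explicitly where the degenerate case invokes it.
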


Pless showed that under certain conditions, an analog of Witt's Theorem holds in characteristic 2. In the particular case, where $K$ a finite field of characteristic 2, it gives the following:
\begin{lem}[Pless\cite{Pless}]  Let $V$ be a a finite dimensional vector space over $K$, where $K$ is a field of characteristic equal to $2$. Let $s$ be a non-degenerate symmetric bilinear form on $V$ and let  $W \subset V$ be a subspace of $V.$ Assume that $\phi : W\longrightarrow V$ is an isometry. Assume moreover that the following holds: If $1=(1,\hdots,1) \in W,$ then $\phi(1)=1$ Otherwise, if $1\not\in W,$ then $1\not\in \phi(W).$  Then $\phi$ can be extended to an isometry $\phi' : V\longrightarrow V.$
\end{lem}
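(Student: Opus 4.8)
The plan is to mimic the classical proof of Witt's extension theorem (Lemma~\ref{thm:witt}) while keeping track of the single obstruction that is special to characteristic $2$, namely the distinguished role of the all-ones vector $1=(1,\hdots,1)$.

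First I would record why $1$ enters the hypotheses at all. With $s$ the standard symmetric form on $V=K^n$ and $\mathrm{char}\,K=2$, the map $x\mapsto s(x,x)=\sum_i x_i^2=\big(\sum_i x_i\big)^2=s(x,1)^2$ is additive and satisfies $s(\lambda x,\lambda x)=\lambda^2 s(x,x)$; since $K$ is perfect, taking square roots shows $x\mapsto s(x,x)^{1/2}=s(x,1)$ is a linear functional, and non-degeneracy of $s$ forces $1$ to be the unique vector representing it. Consequently every global isometry $\phi'$ of $V$ satisfies $s(\phi'x,1)^2=s(\phi'x,\phi'x)=s(x,x)=s(x,1)^2$, hence $s(\phi'x,1)=s(x,1)$ for all $x$, from which $s(\phi'x,1-\phi'(1))=0$ for all $x$ and therefore $\phi'(1)=1$. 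Thus the two alternatives in the statement are precisely the necessary conditions for an extension to exist, and the task is to prove they are sufficient.

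Next I would normalise so that $1\in W$ and $\phi(1)=1$. If $1\notin W$, I would enlarge $W$ to $W\oplus\langle 1\rangle$ and set $\phi(1)=1$. The identity $s(\phi w,1)=s(w,1)$ (which follows as above from $\phi$ being an isometry on $W$) shows this enlargement is again an isometry, it is injective exactly because the hypothesis supplies $1\notin\phi(W)$, and the hypotheses survive. Hence one may assume $1\in W$ and $\phi(1)=1$, and then proceed by induction on $\dim W$; the base case $W=\langle 1\rangle$ is immediate since there $\phi=\mathrm{id}_W$. For the inductive step I would choose a codimension-one subspace $W_0\subset W$ with $1\in W_0$, extend $\phi|_{W_0}$ to a global isometry $\psi$ by induction, and replace $\phi$ by $\psi^{-1}\phi$. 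This reduces everything to the key one-dimensional problem: given $w\notin W_0$ and $w'=\psi^{-1}\phi(w)$ with $s(w',x)=s(w,x)$ for all $x\in W_0$ and $s(w',w')=s(w,w)$, construct a global isometry fixing $W_0$ pointwise and sending $w\mapsto w'$.

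The crux, and the place where characteristic $2$ bites, is this one-dimensional step. Writing $u=w-w'=w+w'$, the matching conditions give $u\perp W_0$, so $s(u,1)=0$ because $1\in W_0$; moreover $s(u,u)=s(w,w)+s(w',w')=0$, so $u$ is \emph{always} isotropic, which is exactly why the classical reflection in $u$ is unavailable. Instead I would use the isotropic transvection
\begin{equation*}
\rho_c(x)=x+c\,s(x,u)\,u,
\end{equation*}
which is an isometry for every scalar $c$ (the cross terms cancel in characteristic $2$ since $s(u,u)=0$), fixes $W_0$ pointwise because $W_0\subseteq u^\perp$, and fixes $1$ because $s(1,u)=0$. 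When $s(w,u)\neq 0$, choosing $c=s(w,u)^{-1}$ gives $\rho_c(w)=w+u=w'$ and we are done. The main obstacle is the residual case $s(w,u)=0$, in which $\rho_c$ no longer moves $w$: here the single transvection fails and one must route $w\mapsto w''\mapsto w'$ through an intermediate vector $w''$ that agrees with $w,w'$ on $W_0$, has the same norm, fixes $1$, and satisfies $s(w,w+w'')\neq 0$ and $s(w'',w''+w')\neq 0$. Producing such a $w''$ is where the non-alternating structure of $s$ (equivalently, the abundance of vectors with $s(x,1)\neq 0$) is essential, and checking that the resulting composite of two transvections still fixes both $1$ and $W_0$ is the delicate bookkeeping the proof must carry out.
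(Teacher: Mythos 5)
First, a point of comparison: the paper gives \emph{no} proof of this lemma at all --- it is quoted as a known result from Pless \cite{Pless}, exactly as the characteristic-$\neq 2$ version is quoted from Serre \cite{Serre} --- so there is no internal argument to measure your proposal against, and it must be judged as a self-contained proof of Pless's theorem. Your overall strategy is the right one: the necessity of the two conditions on $1$, the normalization to $1\in W$ with $\phi(1)=1$, the induction through a codimension-one subspace $W_0\ni 1$, and the use of the transvection $\rho_c(x)=x+c\,s(x,u)\,u$ (legitimate precisely because $u=w+w'$ is automatically isotropic in characteristic $2$) are all correct, and the case $s(w,u)\neq 0$ is finished correctly.

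However, the residual case $s(w,u)=0$ is a genuine gap, and not a peripheral one: it is exactly the case where the hypotheses about $1$ do real work, i.e.\ the mathematical content of the theorem. You state what an intermediate vector $w''$ must satisfy and assert that the non-alternating structure makes it available, but you never produce it, and you explicitly defer the verification (``the delicate bookkeeping the proof must carry out''). The step can in fact be completed cheaply, which shows the gap is fillable but real: look for $w''=w+z$ with $z\in W_0^{\perp}$. Any such $w''$ agrees with $w$ and $w'$ on $W_0$; moreover, because $1\in W_0$, every $z\in W_0^{\perp}$ satisfies $s(z,z)=s(z,1)^2=0$, so the norm condition $s(w'',w'')=s(w,w)$ holds automatically. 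A short computation (using $s(w,u)=0$ and $s(z,z)=0$) reduces your two requirements $s(w,w+w'')\neq 0$ and $s(w'',w''+w')\neq 0$ to $s(z,w)\neq 0$ and $s(z,w')\neq 0$; that is, $z$ must avoid the kernels of two linear functionals on $W_0^{\perp}$. Neither functional vanishes identically, since $(W_0^{\perp})^{\perp}=W_0$ while $w\notin W_0$ and $w'\notin W_0$ (the latter by injectivity of $\phi$), and a vector space over any field is never the union of two proper subspaces, so such a $z$ exists. With $w''=w+z$, your two transvections (along $z$ and along $w''+w'$) compose to an isometry fixing $W_0$ pointwise --- hence fixing $1$ --- and sending $w$ to $w'$. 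Without this (or an equivalent) argument, your proof stops exactly at its crux.
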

As a consequence of the above lemmas, we have the following result.
\begin{prop}\label{prop:embedding} Let $C$ be a $q^2-$ary Hermitian self-orthogonal $[n,k]$ code with $n$ being even and $k\le \lfloor\frac{n-1}{2}\rfloor.$ Then there exists a $q^2-$ary Hermitian self-dual code of length $n.$
\end{prop}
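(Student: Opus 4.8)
The plan is to realize a Hermitian self-dual code of length $n$ as the image of a fixed reference self-dual code under an isometry that carries a chosen subcode onto $C$; this simultaneously proves existence and exhibits $C$ as a subcode of such a code. The bridge to Lemma~\ref{thm:witt} and to Pless's Lemma, both of which concern \emph{symmetric bilinear} forms, is to view $V=\F_{q^2}^n$ as an $\F_q$-vector space of dimension $2n$ equipped with the trace form $s(\mathbf{x},\mathbf{y})=\mathrm{Tr}_{\F_{q^2}/\F_q}(\mathbf{x}*\mathbf{y})$. First I would check that $s$ is a non-degenerate symmetric $\F_q$-bilinear form: symmetry follows from $\mathbf{y}*\mathbf{x}=(\mathbf{x}*\mathbf{y})^q$ together with the Frobenius-invariance of the trace, and non-degeneracy from that of $*$ and of the trace pairing on $\F_{q^2}$.

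The key dictionary, which I would establish next, is that an $\F_{q^2}$-subspace $C$ is Hermitian self-orthogonal if and only if it is totally isotropic for $s$, and Hermitian self-dual if and only if it is a maximal totally isotropic subspace for $s$. One direction is immediate; for the converse one fixes $\mathbf{x},\mathbf{y}\in C$ and uses that $\lambda\mathbf{y}\in C$ for every $\lambda\in\F_{q^2}$, so $\mathrm{Tr}(\lambda^q(\mathbf{x}*\mathbf{y}))=0$ for all $\lambda$, whence $\mathbf{x}*\mathbf{y}=0$ because the trace pairing on $\F_{q^2}$ is non-degenerate. Under this dictionary the hypothesis $k\le\lfloor\frac{n-1}{2}\rfloor=\frac{n}{2}-1$ says that $C$ is a totally isotropic $\F_q$-subspace of $\F_q$-dimension $2k\le n-2$, strictly below the maximal isotropic dimension $n$.

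With the form in place I would fix a reference code $D_0$ of length $n$ from Proposition~\ref{prop:sd1}, taking the $\frac{n}{2}\times\frac{n}{2}$ identity matrix together with some $\alpha\in\F_{q^2}$ satisfying $\alpha^{q+1}=-1$ (which always exists since the norm map is surjective), so $D_0$ is a maximal totally isotropic $\F_{q^2}$-subspace. Since $\dim_{\F_{q^2}}D_0=\frac{n}{2}>k$, choose an $\F_{q^2}$-subspace $W\subseteq D_0$ with $\dim_{\F_{q^2}}W=k$. Both $W$ and $C$ are totally isotropic of the same dimension, so $s$ vanishes identically on each; consequently \emph{any} linear isomorphism $\phi\colon W\to C$ automatically satisfies $s(\phi(\mathbf{x}),\phi(\mathbf{y}))=0=s(\mathbf{x},\mathbf{y})$ and is therefore an isometry. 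Applying Lemma~\ref{thm:witt} when $q$ is odd, and Pless's Lemma when $q$ is even, I would extend $\phi$ to an isometry $\phi'\colon V\to V$; then $\phi'(D_0)$ is again maximal totally isotropic and contains $\phi'(W)=C$, making it the desired self-dual code.

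The main obstacle is that the extension theorems produce an $\F_q$-linear isometry, whereas a genuine $q^2$-ary code must be an $\F_{q^2}$-subspace, and an arbitrary $\phi'$ need not respect the $\F_{q^2}$-structure; note that scalar multiplication by a primitive element is \emph{not} an $s$-isometry, so $\F_{q^2}$-linearity cannot simply be imposed by equivariance. I would resolve this by performing the extension $\F_{q^2}$-equivariantly, equivalently by invoking the Hermitian form of Witt's theorem, whose odd-characteristic instance is precisely the content of Lemma~\ref{thm:witt} transported through the dictionary above, and whose characteristic-two instance requires the all-ones hypothesis of Pless's Lemma to be checked for the chosen $W$ and $\phi$. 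A cleaner self-contained alternative, which I expect to be the smoothest route, replaces the global extension by a one-dimensional induction: since $C\subsetneq C^{\perp_H}$ with $\dim_{\F_{q^2}}(C^{\perp_H}/C)=n-2k\ge 2$, the Hermitian form induced on the quotient is non-degenerate and hence isotropic, so one may adjoin the $\F_{q^2}$-line spanned by an isotropic vector $\mathbf{v}\in C^{\perp_H}\setminus C$ to obtain a Hermitian self-orthogonal code of dimension $k+1$, iterating up to dimension $\frac{n}{2}$ while keeping every intermediate space $\F_{q^2}$-linear by construction.
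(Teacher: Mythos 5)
Your proposal is correct, and its main route is essentially the paper's own argument seen from the other side: the paper fixes the same reference self-dual code $D$ built from $(1,\alpha)$-blocks with $\alpha^{q+1}=-1$, takes an injective linear map $\phi\colon C\to D$ (automatically an isometry, since both spaces are totally isotropic), extends it via Witt/Pless, and recovers the self-dual code as $\phi'^{-1}(D)$; you instead map a $k$-dimensional subspace $W\subseteq D_0$ onto $C$ and push $D_0$ forward, which is the same mechanism up to inverting the isometry. The substantive difference is that you identify, and the paper does not, two genuine gaps in this route: Lemma~\ref{thm:witt} and Pless's lemma are stated for \emph{symmetric bilinear} forms, so they do not apply verbatim to the Hermitian (sesquilinear) product, and once one passes to the $\F_q$-bilinear trace form $s(\mathbf{x},\mathbf{y})=\mathrm{Tr}_{\F_{q^2}/\F_q}(\mathbf{x}*\mathbf{y})$ to make them applicable, the extension $\phi'$ is only $\F_q$-linear, so $\phi'(D_0)$ need not be an $\F_{q^2}$-subspace, i.e., need not be a $q^2$-ary code at all. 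One caveat on your first repair: the claim that the Hermitian Witt theorem is ``precisely'' Lemma~\ref{thm:witt} transported through the trace-form dictionary is not accurate --- the isometry group of $s$ is the orthogonal group of a $2n$-dimensional $\F_q$-space, which strictly contains the unitary group, so an $s$-isometric extension has no reason to be $\F_{q^2}$-linear; the unitary Witt theorem is a separate (true, standard, characteristic-free) result that would need its own citation. Your fallback argument, however, closes the proof completely and is the cleanest route on offer: since $\dim_{\F_{q^2}}(C^{\perp_H}/C)=n-2k\ge 2$ and every non-degenerate Hermitian space of dimension at least $2$ over $\F_{q^2}$ contains a nonzero isotropic vector (take $b=1$ and $a$ with $a^{q+1}=-1$, which exists by surjectivity of the norm map), adjoining the lift of such a vector enlarges $C$ to a self-orthogonal code of dimension $k+1$, and iterating reaches dimension $\frac{n}{2}$. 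This induction stays $\F_{q^2}$-linear throughout, avoids Witt/Pless entirely, and proves the stronger assertion implicit in the paper's proof --- that $C$ embeds in a Hermitian self-dual code, not merely that one exists.
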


\begin{proof} Let $\alpha=\omega ^{\frac{q-1}{2}},$ where $\omega$ is a primitive element of $\F_{q^2}.$ Let $D$ be a Hermitian self-dual $[n,\frac{n}{2}]$ code 
with its generator matrix as follows:
$$
\left(
\begin{array}{ccccccc}
1&\alpha&0&0&\cdots&0&0\\
0&0&1&\alpha&\cdots&0&0\\
\vdots&\vdots&\vdots&\vdots&\vdots&\vdots&\vdots\\
0&0&0&0&\cdots&1&\alpha\\
\end{array}
\right)
$$
 As $\dim C \le \dim D,$ there is an injective linear map $\phi : C\longrightarrow D.$ Then for any ${\bf x},{\bf y} \in C$,  we have $\phi({\bf x})*\phi({\bf y})={\bf x}*{\bf y}=0$ since both $C$ and $D$ are Hermitian self-orthogonal. Thus $\phi$ is an isometry and by the above two lemmas, $\phi$ can be extended as $\phi':D\longrightarrow D.$ Taking $C'=\phi'^{-1}(D)$, the result follows.

\end{proof}

We can deduce the construction of  $q^2-$ary Hermitian self-dual codes of length $n$ (if they exist) as follows.
\begin{cor}\label{cor:embedding2} Let $n$ be an odd positive integer and $C$ a Hermitian self-orthogonal $[2n-1,n-1,d]$ code. Then there exists a Hermitian self-orthogonal  $[2n,n,d]$ code $C_0$ which can be embedded into a Hermitian self-dual $[2n,n,d]$ code $C'$  such that $C_0\subset C'\subset C_0^{\perp_H}.$
\end{cor}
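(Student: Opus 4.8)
The plan is to first pad $C$ to even length and then close it up into a self-dual code by appealing to Proposition \ref{prop:embedding}. First I would append a single zero coordinate to every codeword, setting $C_0=\{(c,0)\mid c\in C\}\subseteq\F_{q^2}^{2n}$. Adjoining a zero column to a generator matrix alters neither the dimension, the minimum weight, nor any Hermitian inner product, so $C_0$ is a Hermitian self-orthogonal code of length $2n$, minimum distance $d$, and dimension $n-1$; note that the chain $C_0\subsetneq C'\subsetneq C_0^{\perp_H}$ forces $\dim C_0<n$, so the relevant dimension here is $n-1$. Because $2n$ is even and $\dim C_0=n-1=\lfloor\frac{2n-1}{2}\rfloor$, the code $C_0$ meets the hypotheses of Proposition \ref{prop:embedding}.

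Next I would apply Proposition \ref{prop:embedding} to $C_0$. Its proof exhibits a Hermitian self-dual code $C'=\phi'^{-1}(D)$ of length $2n$, where $D$ is the standard self-dual code and $\phi'$ is a Hermitian isometry of $\F_{q^2}^{2n}$ extending an isometric embedding $\phi\colon C_0\hookrightarrow D$. Since $\phi'$ restricts to $\phi$ on $C_0$, we have $\phi'(C_0)=\phi(C_0)\subseteq D$, whence $C_0\subseteq\phi'^{-1}(D)=C'$. The remaining inclusion is automatic from self-duality: dualizing $C_0\subseteq C'$ gives $C'^{\perp_H}\subseteq C_0^{\perp_H}$, and $C'=C'^{\perp_H}$ then yields $C'\subseteq C_0^{\perp_H}$. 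This establishes the chain $C_0\subseteq C'\subseteq C_0^{\perp_H}$, with dimensions $n-1$, $n$, and $n+1$ respectively. As an alternative that is more transparent for weight estimates, one may avoid Witt and Pless altogether: the quotient $C_0^{\perp_H}/C_0$ carries a non-degenerate Hermitian form of dimension $2$, which is isotropic over $\F_{q^2}$ (indeed $(1,\alpha)$ is isotropic whenever $\alpha^{q+1}=-1$); lifting an isotropic line to a vector $v\in C_0^{\perp_H}\setminus C_0$ with $v*v=0$ and putting $C'=\langle C_0,v\rangle$ gives a self-orthogonal, hence self-dual, $[2n,n]$ code containing $C_0$.

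The delicate point, and the step I expect to be the main obstacle, is the assertion that $C'$ can be chosen with minimum distance exactly $d$. Since $C_0\subseteq C'$ we always have $d(C')\le d(C_0)=d$, so only a matching lower bound is in question. The trouble is that the isometry $\phi'$ coming from Witt's and Pless's theorems preserves the Hermitian form but not the Hamming metric, so $C'$ need not inherit the weight distribution of $C_0$; concretely the new codewords $c+\mu v$ (with $c\in C_0$ and $\mu\neq 0$) all lie in $C_0^{\perp_H}$ and can a priori have weight below $d$. To secure $d(C')=d$ I would use that the self-dual codes containing $C_0$ correspond exactly to the $q+1$ isotropic lines of $C_0^{\perp_H}/C_0$, and argue that at least one of these extensions introduces no word of weight less than $d$. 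Controlling the number of ``bad'' isotropic lines --- those whose coset carries a low-weight representative --- is the real work; it is governed by the dual-distance structure of $C$ and, for the explicit families in the paper, is confirmed directly by computation.
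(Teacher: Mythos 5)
Your construction is essentially the paper's own proof. The paper pads $C$ with a zero coordinate to get a self-orthogonal $[2n,n-1,d]$ code $C_0$ (the "$[2n,n,d]$" for $C_0$ in the statement is a typo, and your dimension count $n-1$ agrees with the paper's proof), and then adjoins a single vector ${\bf x}\in C_0^{\perp_H}\setminus C_0$ with ${\bf x}*{\bf x}=0$ as an extra generator row --- exactly your second, "isotropic line" route; your Witt/Pless route is just Proposition \ref{prop:embedding}, which the paper proves in the same way. You are in fact more careful than the paper on one point: the paper never argues that such an isotropic ${\bf x}$ exists, whereas you obtain it from the nondegenerate $2$-dimensional Hermitian form induced on $C_0^{\perp_H}/C_0$. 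As for the "delicate point" you flag --- that the extension $C'=\langle C_0,{\bf x}\rangle$ could acquire words of weight below $d$, so that only $d(C')\le d$ is automatic --- this is a genuine issue, but it is equally unaddressed in the paper: the paper's proof asserts only that $C_0'$ is self-dual with parameters $[2n,n]$ and that $C_0\subset C_0'\subset C_0^{\perp_H}$, and never justifies the claimed minimum distance $d$. So your proposal matches the paper's argument, and the one step you could not close is not closed there either.
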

\begin{proof} Let $G$ be the generator matrix of $C$ and $C_0$ be a Hermitian self-orthogonal code obtained from $C$ by lengthening one zero coordinate. Clearly the code $C_0$ has parameters $[2n,n-1,d]$ and $C_0^{\perp_H}$ has parameters $[2n,n+1]$. Denote $G_0$ the generator matrix of $C_0,$ that is 
$$G_0=
\left(
\begin{array}{cc}
&0\\
G&\vdots\\
&0\\
\end{array}
\right).
$$
Let ${\bf x}\in C_0^{\perp_H}\slash C_0$ such that ${\bf x}* {\bf x}=0$. Then the code $C_0'$ with its following generator matrix $G_0'$ is self-dual with parameters $[2n, n]:$

$$G'_0=
\left(
\begin{array}{ccc}
&&0\\
G&&\vdots\\
&&0\\
\hline
&{\bf x}&\\
\end{array}
\right).
$$
Moreover the following inclusion holds:
$$C_0\subset C'_0\subset {C_0}^{\perp_H}.$$
%and thus the minimum distance of the Hermitian self-dual code $C'$ is lower bounded by $$d(C')\ge d(C_0^{\perp_H}). $$
\end{proof}
Simarly to the above corollary, we have:
\begin{cor}\label{cor:embedding3} Let $C$ be a Hermitian self-orthogonal $[2n,n-1,d]$ code. Then there exists a Hermitian self-dual $[2n,n]$ code $C'$  such that $C\subset C'\subset C'^{\perp_H}.$
\end{cor}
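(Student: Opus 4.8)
The plan is to mirror the construction in the proof of Corollary \ref{cor:embedding2}, but since $C$ already has length $2n$ we can dispense with the lengthening step and adjoin a single isotropic vector directly to $C$.

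First I would record the dimension count. Since $C$ is Hermitian self-orthogonal with $\dim C = n-1$ and length $2n$, its Hermitian dual satisfies $\dim C^{\perp_H} = 2n-(n-1) = n+1$, and $C\subset C^{\perp_H}$. Thus the quotient $V := C^{\perp_H}/C$ is a $2$-dimensional vector space over $\F_{q^2}$. The core of the argument is to produce a vector ${\bf x}\in C^{\perp_H}\setminus C$ with ${\bf x}*{\bf x}=0$; adjoining it will give exactly the extra dimension needed to pass from $n-1$ to $n$.

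To find ${\bf x}$, note that because $C$ is self-orthogonal and every element of $C^{\perp_H}$ is Hermitian-orthogonal to $C$, the Hermitian form descends to a well-defined form on $V$, and this induced form is non-degenerate: an element of $C^{\perp_H}$ orthogonal to all of $C^{\perp_H}$ lies in $(C^{\perp_H})^{\perp_H}=C$, hence is zero in $V$. A non-degenerate Hermitian form on a $2$-dimensional $\F_{q^2}$-space always admits a non-zero isotropic vector: after diagonalizing the form one solves $a\,x_1^{q+1}+b\,x_2^{q+1}=0$ with $a,b\in\F_q^*$, which is possible using surjectivity of the norm map $x\mapsto x^{q+1}$ from $\F_{q^2}^*$ onto $\F_q^*$ (take $x_2=1$ and choose $x_1$ with $x_1^{q+1}=-b/a$). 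Lifting such an isotropic class to any representative ${\bf x}\in C^{\perp_H}$ gives ${\bf x}\notin C$ and, since the form value on $V$ is independent of the chosen lift, ${\bf x}*{\bf x}=0$.

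With ${\bf x}$ in hand I would set $C' := C + \langle {\bf x}\rangle$ and verify self-orthogonality: for codewords ${\bf c}+\lambda{\bf x}$ and ${\bf c}'+\mu{\bf x}$, the term ${\bf c}*{\bf c}'$ vanishes by self-orthogonality of $C$, the two mixed terms vanish because ${\bf x}\in C^{\perp_H}$ and Hermitian orthogonality is symmetric, and the remaining term is a scalar multiple of ${\bf x}*{\bf x}=0$. Since ${\bf x}\notin C$ we have $\dim C'=n$, and a Hermitian self-orthogonal code of dimension $n$ and length $2n$ must equal its dual, so $C'$ is Hermitian self-dual; the chain $C\subset C'\subset C'^{\perp_H}$ then holds, the last inclusion being an equality. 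I expect the only genuine obstacle to be the existence of the isotropic vector in the third step; the rest is a routine dimension count and sesquilinearity check. That existence is the standard fact that a non-degenerate Hermitian form over a finite field has Witt index $1$ already in dimension $2$, and the norm-surjectivity argument makes it explicit.
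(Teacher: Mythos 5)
Your proposal is correct and takes essentially the same route as the paper: the paper proves Corollary \ref{cor:embedding2} by adjoining a vector ${\bf x}\in C_0^{\perp_H}\setminus C_0$ with ${\bf x}*{\bf x}=0$ as an extra generator row, and then states Corollary \ref{cor:embedding3} with no proof beyond ``similarly,'' which is exactly your construction $C'=C+\langle{\bf x}\rangle$ with the lengthening step rightly omitted since $C$ already has even length $2n$. The only place you go beyond the paper is that the paper merely asserts the existence of the isotropic ${\bf x}$, whereas you prove it via the induced non-degenerate Hermitian form on the $2$-dimensional quotient $C^{\perp_H}/C$ and the surjectivity of the norm map $x\mapsto x^{q+1}$ from $\F_{q^2}^*$ onto $\F_q^*$; this fills the one non-routine gap in the paper's argument and is correct as written.
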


It is clear that all rows of an $n\times n$ matrix $L\in {\cal U}_q(n)$ span the ambient space ${\mathbb F}_q^n$. The space spanned by rows $L$ can be embedded and thus we have another construction of a Hermitian self-dual $[2n+2,n+1]$ code by coordinate extension as follows.

\begin{prop}\label{prop:recursive3}  Let $q=p^m$ with $p$ being a prime. Fix $a\in {\mathbb F}_q$ such that $a^{q+1}= -1 $.  Let $L \in {\cal U}_n(q^2)$ and $\lambda_1,\hdots, \lambda_{n}  \in {\mathbb F}_q$. Let ${\bf x}$ be a vector of length $n+2$ satisfying ${\bf x* x }=0$ and $L'_i*{\bf x}=0,\forall 1\le i\le n,$ where $L'_i=(aL_i|a\lambda_i,\lambda_i)$ is the extended row $i$ of $aL.$
Then the code with the following generator matrix is a $q^2-$ary Hermitian self-dual $[2n+2,n+1]$ code:
\begin{equation} \label{eq:recursive3}
G'_{n}=\left(\begin{array}{ccc|ccccc}
&&&&&&\lambda_1a&\lambda_1\\
%&&&&&&\lambda_2a&\lambda_2\\
&I_n&&&aL&&\vdots&\vdots \\
&&&&&&\lambda_{n}a&\lambda_{n}\\
%&&&&&&&\\
\hline
&{\bf 0}&&&&{\bf x} &&\\
\end{array}\right).
\end{equation}
\end{prop}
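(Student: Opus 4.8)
The plan is to verify that the $n+1$ rows of $G'_n$ are pairwise Hermitian orthogonal and self-isotropic, to check that they are linearly independent, and then to invoke the fact that a Hermitian self-orthogonal code whose dimension equals half its length is automatically Hermitian self-dual. Since the length here is $2n+2$ and the target dimension is $n+1$, these three ingredients suffice.

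First I would fix notation for the rows. For $1\le i \le n$ write the $i$-th row as $g_i=(e_i\mid L'_i)$, where $e_i$ is the $i$-th standard basis vector and $L'_i=(aL_i\mid a\lambda_i,\lambda_i)$ is the extended row of $aL$ appearing in the hypothesis, and write the final row as $g_{n+1}=({\bf 0}\mid {\bf x})$. The heart of the computation is the inner product among the first $n$ rows: $g_i*g_j=e_i*e_j+L'_i*L'_j$. The identity block yields $e_i*e_j=\delta_{ij}$; since $L\in{\cal U}_n(q^2)$ its rows satisfy $L_i*L_j=\delta_{ij}$, so the $aL$-block contributes $(aL_i)*(aL_j)=a^{q+1}\delta_{ij}=-\delta_{ij}$; and because each $\lambda_k\in\F_q$ satisfies $\lambda_k^q=\lambda_k$, the two appended coordinates contribute $(a\lambda_i)(a\lambda_j)^q+\lambda_i\lambda_j^q=(a^{q+1}+1)\lambda_i\lambda_j=0$. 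Summing gives $g_i*g_j=\delta_{ij}-\delta_{ij}=0$ for all $1\le i,j\le n$, so these rows are mutually orthogonal and isotropic.

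Next I would treat the last row. Because the left $n$ coordinates of $g_{n+1}$ vanish, $g_i*g_{n+1}$ collapses to the inner product of the right-hand blocks, which is precisely $L'_i*{\bf x}$; this is zero by hypothesis, and the conjugate relation $g_{n+1}*g_i=(g_i*g_{n+1})^q$ then gives $g_{n+1}*g_i=0$, while $g_{n+1}*g_{n+1}={\bf x}*{\bf x}=0$ is assumed. Hence the generated code $C$ is Hermitian self-orthogonal. For the dimension, I would read off the first $n$ columns: the top $n$ rows restrict to $I_n$ and the last row to ${\bf 0}$, so any vanishing combination forces the first $n$ coefficients to be zero, and then ${\bf x}\neq{\bf 0}$ forces the last coefficient to vanish too; thus $\dim C=n+1=\tfrac{1}{2}(2n+2)$ and $C=C^{\perp_H}$.

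The steps above are essentially routine once the hypotheses are plugged in; the only conceptually nontrivial point, which I would address separately, is that a vector ${\bf x}$ with the required properties actually exists and can be taken nonzero. Here I would observe that $L'_1,\dots,L'_n$ span a non-degenerate $n$-dimensional subspace of $\F_{q^2}^{n+2}$ (its Gram matrix is $-I_n$), so its Hermitian-orthogonal complement is a non-degenerate $2$-dimensional Hermitian space; over $\F_{q^2}$ such a space always contains nonzero isotropic vectors, which furnishes an admissible ${\bf x}\neq{\bf 0}$ and guarantees that the construction is non-vacuous.
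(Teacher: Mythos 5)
Your proof is correct, and its core is the same as the paper's: a direct check that the rows of $G'_n$ are pairwise Hermitian orthogonal and isotropic, followed by the rank argument via the $I_n$ block, and the standard fact that a self-orthogonal code of dimension half the length is self-dual. The paper's own proof is far terser (it asserts the orthogonality and the rank ``obviously'' hold), so your detailed computation $g_i*g_j=\delta_{ij}+a^{q+1}\delta_{ij}+(a^{q+1}+1)\lambda_i\lambda_j=0$ is exactly the content it leaves implicit. Where you genuinely diverge is on the existence of an admissible ${\bf x}$: the paper simply exhibits the explicit vector ${\bf x}=(0,\hdots,0,a,1)$, which one verifies in one line satisfies ${\bf x}*{\bf x}=a^{q+1}+1=0$ and $L'_i*{\bf x}=(a^{q+1}+1)\lambda_i=0$, whereas you argue abstractly that the $L'_i$ span a non-degenerate subspace with Gram matrix $-I_n$, so its orthogonal complement is a non-degenerate Hermitian plane over $\F_{q^2}$, which necessarily contains nonzero isotropic vectors. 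Your route invokes a nontrivial fact about Hermitian forms over finite fields but buys more: it shows the whole two-dimensional complement in which ${\bf x}$ may be chosen, which is what the paper's subsequent remark and Algorithm 2 exploit when searching over vectors $\alpha{\bf x}_0+\beta{\bf y}$ to get codes of large minimum distance; the paper's explicit choice is more elementary and constructive but, as the paper itself notes, yields a code of minimum distance at most $2$ on its own.
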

\begin{proof}
For $q=p^m$, such a vector $\bf x$ of length $n+2$ exists, for example ${\bf x}=(0,\hdots,0,a,1)$ satisfies the desired condition.

The rank of $G'_{n}$ is obviously $n+1$ and each row of $G'_{n}$ is Hermitian orthogonal to itself and to other rows and thus the result follows.
\end{proof}
\begin{rem}
\begin{enumerate}
\item The vector ${\bf x}_0=(0,\hdots,0,a,1)$ satisfies the above properties but is not interesting for constructing optimal codes since the code constructed from this vector will have minimum distance at most $2$. However it is useful for generating many vectors with Hamming weight greater than $2$. 
\item Those vectors $\bf x$ can be found algorithmically for example echolonizing the matrix obtained from all  the extended rows $L'_i=(aL_i|a\lambda_i,\lambda_i)$ of $aL$ allows us to calculate a vector ${\bf y}=(y_1,\hdots,y_n)$ such that $L'_i*{\bf y}=0$ for $1\le i\le n.$ 
For  all $\alpha_0,\alpha_1\in \F_{q^2}$, we can search for vectors of the form ${\bf x}=\alpha_0 {\bf x}_0+\alpha_1{\bf y}$  satisfying ${\bf x}*{\bf x}=0$.
\end{enumerate}
\end{rem}

By taking $\lambda_1=\cdots=\lambda_n=1$, we get an explicit construction of such a vector $\bf x$ in Construction (\ref{eq:recursive3}) and thus the following construction.

\begin{prop}  Let $q=p^m$ with $p$ being a prime. Fix $a\in {\mathbb F}_q$ such that $a^{q+1}= -1$.  Let $L \in {\cal U}_n(q^2)$ 
Then for any $n\equiv 1 \pmod p$, the code with the following generator matrix is a $q^2-$ary  Hermitian self-dual $[2n+2,n+1]$ code:
\begin{equation} \label{eq:recursive34}
G'_{n}=\left(\begin{array}{ccc|ccccc}
&&&&&&a&1\\
%&&&&&&a&1\\
&I_n&&&aL&&\vdots&\vdots \\
&&&&&&a&1\\
%&&&&&&&\\
\hline
&{\bf 0}&&&&{\bf x} &&\\
\end{array}\right),
\end{equation}
where ${\bf x}$ is a vector of length $n+2$ belonging to $X=\{ {\bf z}=\alpha(aL_1+\cdots+aL_n,0,1)+\beta(0,\hdots,0,a,1)|  \alpha,\beta \in \F_{q},{\bf z}* {\bf z}=0\}$ and $L_i$ is the $i-$th row of $L.$ In particular (with ${\bf x}=(aL_1+\cdots+aL_n,0,1)$)
the code with the following generator matrix is a $q^2-$ary Hermitian self-dual $[2n+2,n+1]$ code:
%\begin{equation} \label{eq:recursive33}
%G'_{n}=\left(\begin{array}{ccc|ccc|cc}
%&&&&&&a&1\\
%%&&&&&&a&1\\
%&I_n&&&aL&&\vdots&\vdots \\
%&&&&&&a&1\\
%%&&&&&&&\\
%\hline
%&{\bf 0}&&&a(L_1+\cdots+L_n)& &0&1\\
%\end{array}\right),
%\end{equation}

\begin{equation} \label{eq:recursive33}
G'_{n}=\left(\begin{array}{ccc|c|cc}
&&&&a&1\\
&I_n&&aL&\vdots&\vdots \\
&&&&a&1\\

\hline
&{\bf 0}&&a(L_1+\cdots+L_n) &0&1\\
\end{array}\right),
\end{equation}
\end{prop}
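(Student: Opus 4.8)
The plan is to read off the statement directly from Proposition \ref{prop:recursive3} after specializing $\lambda_1=\cdots=\lambda_n=1$, so that the extended rows become $L'_i=(aL_i\,|\,a,1)$. Proposition \ref{prop:recursive3} requires exactly two things of the appended vector ${\bf x}$ (of length $n+2$): that ${\bf x}*{\bf x}=0$ and that $L'_i*{\bf x}=0$ for all $1\le i\le n$. The first of these is built into the definition of the set $X$, so the substantive work is to check the orthogonality conditions $L'_i*{\bf x}=0$ for every ${\bf x}\in X$, and, for the explicit (``in particular'') construction in Eq. (\ref{eq:recursive33}), to check that the specific vector ${\bf v}:=(a(L_1+\cdots+L_n),0,1)$ indeed lies in $X$.

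First I would name the two generators of $X$, namely ${\bf v}=(a(L_1+\cdots+L_n),0,1)$ and ${\bf w}=(0,\hdots,0,a,1)$, so that a generic element is ${\bf z}=\alpha{\bf v}+\beta{\bf w}$ with $\alpha,\beta\in\F_q$. Every computation is driven by the unitarity of $L$: from $L\overline{L}^\top=I_n$ one has $L_i*L_k=\delta_{ik}$. Using this and $a^{q+1}=-1$, a short calculation gives
$$L'_i*{\bf v}=a^{q+1}\sum_{k=1}^n(L_i*L_k)+1=a^{q+1}+1=0,\qquad L'_i*{\bf w}=a^{q+1}+1=0$$
for every $i$.

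Next I would exploit that $\alpha,\beta$ lie in $\F_q$ rather than in $\F_{q^2}$. Since the Hermitian form is conjugate-linear (i.e.\ $q$-semilinear) in its second argument and $\alpha^q=\alpha$, $\beta^q=\beta$, it follows that $L'_i*{\bf z}=\alpha(L'_i*{\bf v})+\beta(L'_i*{\bf w})=0$ for every ${\bf z}\in X$ and every $i$. Combined with the defining relation ${\bf z}*{\bf z}=0$ of $X$, both hypotheses of Proposition \ref{prop:recursive3} hold; that proposition then yields the Hermitian self-dual code, the dimension being forced to $n+1$ by the $I_n$ block in the first $n$ columns.

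The crux --- and the only point at which the congruence $n\equiv1\pmod p$ is used --- is the ``in particular'' claim, where I must verify ${\bf v}\in X$, i.e.\ ${\bf v}*{\bf v}=0$. Expanding and again invoking $L_i*L_k=\delta_{ik}$ gives ${\bf v}*{\bf v}=a^{q+1}\sum_{k,l=1}^n(L_k*L_l)+1=a^{q+1}n+1=1-n$, which vanishes precisely when $n\equiv1\pmod p$. This single identity is therefore the heart of the argument: everything else is forced by unitarity and $a^{q+1}=-1$, and it simultaneously explains why $n\equiv1\pmod p$ is exactly the right hypothesis for the explicit vector ${\bf v}$ to belong to $X$.
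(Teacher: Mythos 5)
Your proof is correct and takes essentially the same route as the paper's, which likewise introduces the two vectors $(0,\ldots,0,a,1)$ and $(0,\ldots,0,a(L_1+\cdots+L_n),0,1)$, uses unitarity of $L$ together with $a^{q+1}=-1$ and $n\equiv 1\pmod p$ to get their self-orthogonality and their orthogonality to the first $n$ rows, and then writes the last row as a linear combination of them; your version is merely more explicit about the computations ($L'_i*{\bf v}=L'_i*{\bf w}=a^{q+1}+1=0$ and ${\bf v}*{\bf v}=a^{q+1}n+1=1-n$) that the paper compresses into ``clearly.'' One trivial remark: the restriction $\alpha,\beta\in\F_q$ is not actually needed for your orthogonality step, since the Hermitian product is conjugate-linear in its second argument and both $L'_i*{\bf v}$ and $L'_i*{\bf w}$ vanish regardless.
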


\begin{proof} Take $g_i$ as the $i-$th row of $G'_{n}$.  Take  ${\bf x}_1=(0,\hdots,0,aL_1+\cdots+aL_n,0,1),{\bf x}_0=(0,\hdots,0,a,1)\in \F_{q^2}^{2n+2}.$ Clearly ${\bf x}_0*{\bf x}_0={\bf x}_1*{\bf x}_1=0.$ The $(n+1)-$th row of $G'_n$ can be written as $g_{n+1}=\alpha{\bf x}_0+\beta{\bf x}_1$ for some $\alpha,\beta\in \F_{q^2}$ and thus $g_i*g_{n+1}=0$ for $1\le i\le n$ and $g_{n+1}*g_{n+1}=0.$ Hence the result follows.
\end{proof}

The following algorithm is used to embed a Hermitian self-orthogonal $[2n+2,n]$ code into a Hermitian self-dual $[2n+2,n+1,d]$ code.

%%%%%%%%%%%%%%%%%%%%%%%%%%%%%%%
%%%%%%%%%%%%%%%%%%%%%%%%%%%%%%%%%%%%%%
\noindent \rule[-.1cm]{\linewidth}{0.2mm}
{\bf Algorithm 2}\\
\noindent \rule[.3cm]{\linewidth}{0.2mm}

\vspace{-.3cm}
\noindent\textbf{Input:} $L$: an $n\times n$ unitary matrix, $\lambda=(\lambda_1,\hdots,\lambda_n):$ a vector of length $n$\\
\textbf{Output:} A list $X$ of vectors of length $n+2$

\vspace{.2cm}
\noindent
\begin{small}
\begin{enumerate}
\item ${\bf x}_0:=(0,\hdots,0,a,1)$;
\item Extending $L$ to $L'$ as in Proposition \ref{prop:recursive3};
\item M:=EchelonForm($L'$);
\item Determining a vector ${\bf y}\not=0$ satisfying ${\bf y}*M_i=0$ for all $1\le i\le n$;
\item $X:=[~]$;
\item  for $\alpha$ in $ \F_{q^2}$ do 
\item for $\beta$ in $\F_{q^2}$ do 
\item ${\bf x}:=\alpha {\bf x}_0+\beta {\bf y}$;
\item if ${\bf x}*{\bf x}=0$ then 
%\item $C:={\tt Span}(I_n|\alpha L)$;
%\item if the minimum distance of  $C$ is at least $d$ then 
%\item return $C$;
\item {\tt Append}$(X,{\bf x})$;
\item end if;
\item end if;
\item end for;
\item end for;
\item return X;
\end{enumerate}
\end{small}
\noindent \rule[.3cm]{\linewidth}{0.2mm}
We now prove that the algorithm correctly returns a list of vectors ${\bf x}=\alpha {\bf x}_0+\beta {\bf y}$ for some $\alpha,\beta\in \F_{q^2}$. It is enough to show that such a non-zero vector {\bf y} exists. Since $L$ is unitary, it is invertible and by applying elementary row operations on the extended matrix $L'$, we get, at Step 3 of the algorithm,

\begin{equation}
M=\left(
\begin{array}{cccccc}
1&0&\cdots&0&m_{1,n+1}&m_{1,n+2}\\
0&1&\cdots&0&m_{2,n+1}&m_{2,n+2}\\
\vdots&\vdots&\ddots&\vdots&\vdots&\vdots\\
0&0&\cdots&1&m_{n,n+1}&m_{n,n+2}\\
\end{array}
\right)\label{eq:echelon}
\end{equation}
At Step 4, we determine the coordinates of ${\bf y}=(y_1,\hdots,y_n,y_{n+1},y_{n+2})$ from Eq. (\ref{eq:echelon}) as follows: 

\begin{itemize}
\item $y_1=1,y_{n+1}=0,y_{n+2}=-\frac{1}{m_{1,n+2}^q}.$
\item $y_2=-y_{n+2}m_{2,n+2}^q.$\\
 $\vdots$
\item $y_n=-y_{n+2}m_{n,n+2}^q.$
\end{itemize}
It can be easily checked that $(y_1,\hdots,y_n,y_{n+1},y_{n+2})*M_i=0$ for all $1\le i \le n.$
Note that for $k<n$, the proof of ${\bf y}*M_i=0$ for all $1\le i \le k,$ follows with the same approach by taking $y_1=1,y_{k+1}=\cdots=y_{n+1}=0,y_{n+2}=-\frac{1}{m_{1,n+2}^q}.$

\section{Numerics}\label{section:num}
All our constructions are based on the unitary matrices in the group. For small dimensions, say $n\le 3$ or $n=4$ and $q=2^2,3^2$, such matrices can be easily reachable and thus optimal Hermitian self-dual codes can be constructed efficiently. For large dimension $n$, we can no longer visit all the elements in the group. However by applying Algorithm 1, we can efficiently construct unitary matrices with good properties, say non-sparse matrices, which are really useful for constructing MDS codes when applying Construction (\ref{eq:sd2}). Recall that all the Hermitian self-dual codes obtained from our constructions use the unitary matrix $L$ computed by Algorithm 1 and such a matrix is of the form $L=N^iP^jQ^kR^l$.\\

\begin{itemize}
\item Parameters of Hermitian self-dual codes over $\F_{2^2}$ and $\F_{5^2}$ from Construction (\ref{eq:sd1}) are given in Table \ref{table:F4} and Table \ref{table:F25} respectively and are compared with the quadratic double circulant construction \cite{Gabo} and the building-up construction \cite{KimLee} respectively. With the same parameters available, our constructions perform better than those of \cite{Gabo} and \cite{KimLee}. More precisely over $\F_{2^2}$ we obtain a Hermitian self-dual code with parameters $[18,9,8]$ while the parameters in \cite{Gabo} are just $[16,8,6].$ Also over $\F_{3^2}$ we obtain a Hermitian self-dual code with parameters $[10,5,6]$ but the parameters in \cite{KimLee} are just $[10,5,5].$
\item Parameters of Hermitian self-dual codes over $\F_{4^2}$ from Construction (\ref{eq:sd2}) are given in Table \ref{table:F16} and are compared with the building-up construction \cite{KimLee}. We obtain an MDS Hermitian self-dual code with parameters $[10,5,6]$ which are better than $[10,5,5]$ in \cite{KimLee}.
\item  In Table \ref{table:F49-81}, we provide parameters of Hermitian self-dual codes over $\F_{q^2},q^2=5^2,7^2,8^2,9^2,11^2,13^2$ up to length $18$ from Construction (\ref{eq:recursive3}) and compare with \cite{GulKimLee}. We obtain MDS Hermitian self-dual codes over $\F_{8^2},\F_{11^2},\F_{13^2},$ with parameters $[12,6,7],[14,7,8],[14,7,8]$ respectively. These parameters are better than \cite{GulKimLee} and meanwhile the first two parameters are new in \cite{TongWang}. We also find MDS Hermitian self-dual codes over $\F_{16^2},\F_{17^2}$ with parameters $[14,7,8]$ which are better than \cite{GulKimLee}. We give examples of the two MDS codes with new parameters as follows.

\begin{exam}  For $q=8^2, \omega$ being a primitive element of $\F_{8^2}$ and  $n=5,L=N^0P^ 1 Q^1 R^1,a=1,(\lambda_1,\hdots,\lambda_n)=(\omega^{21}, \omega^{44}, 1, 1, 1)$, we find $ { \bf x}=(1, \omega^{19}, \omega^{37}, \omega^{62}, \omega^{57}, \omega^{37}, \omega^{42 })$ and a Hermitian self-dual code with parameters $[12,6,7]$. The generator matrix of such a code is given by:
$$
\left(
\begin{array}{ccc|ccccccc}

 &&&&\omega^{27}& \omega^{55}& \omega^{16}& \omega^{32}& \omega^{18}& \omega^{55 }\\ 
    
&&&&  \omega^{17}& \omega^{22}& \omega^{58}& \omega^{28}& \omega^{43}& \omega^{54 }\\ 
    
&I_6&&& \omega^{41}& \omega^{58}& \omega^{6}& \omega^{43}& \omega^{34}& \omega^{57 }\\ 
    
&&&& \omega^{14}& \omega^{5}& \omega^{59}& \omega^{30}& \omega^{11}& \omega^{32 }\\ 
    
 &&&&\omega^{9}& \omega^{35}& \omega^{29}& \omega& \omega^{36}& \omega^{53 }\\
&&&&  \omega^{19}& \omega^{37}& \omega^{62}& \omega^{57}& \omega^{37}& \omega^{42 }\\ 
\end{array}
\right).
$$
\end{exam}

\begin{exam}  For $q=11^2, \omega$ being a primitive element of $\F_{11^2}$ and  $n=6,L=N^0P^ 1 Q^2 R^4,a=\omega^{5},(\lambda_1,\hdots,\lambda_n)=(\omega^{25}, \omega^{61}, \omega^{46}, \omega^{86}, \omega^{42}, \omega^{100})$, we find $ { \bf x}=(\omega^{38}, \omega^{9}, \omega^{93}, \omega^{34}, \omega^{91}, \omega^{91}, 
\omega^{9}, \omega^{100})$ and a Hermitian self-dual code with parameters $[14,7,8]$. The generator matrix of such a code is given by:
$$
\left(
\begin{array}{ccc|cccccccc}

&&&&\omega^{85}& \omega^{86}& \omega^{52}& \omega^{65}& \omega^{97}& \omega^{76}& \omega^{54}\\
&&&&\omega^{11}& \omega^{92}& \omega^{118}& \omega^{116}& \omega^{5}& \omega^{58}& \omega^{50}\\
&&&& \omega^{94}& \omega^{94}& \omega^{110}& \omega^{39}& \omega^{53}& 1& \omega^{29}\\
&I_7&&&\omega^{68}& \omega^{15}& \omega^{119}& \omega^{47}& \omega^{109}& \omega^{25}& \omega^{30}\\
&&&&\omega^{49}& \omega^{66}& \omega^{102}& \omega^{86}& \omega^{17}& \omega^{105}& \omega^{116}\\
&&&&2 &\omega^{77}& \omega^{4}& 7 &\omega^{33}& \omega^{73}& \omega^{119}\\
&&&&\omega^{91}& \omega^{55}& \omega^{116}& \omega^{53}& \omega^{53}& \omega^{91}& \omega^{62}\\
\end{array}
\right).
$$
\end{exam}

\item Parameters of Hermitian self-dual codes over $\F_{3^2}$ for Theorem \ref{thm:double1} 1) are given in Table \ref{table:F9} and over $\F_{11^2},\F_{13^2}$ for Theorem \ref{thm:double1} 2) and 3) in Table \ref{table:new}.

\item Parameters of Hermitian self-dual codes over $\F_{2^2}$ and $\F_{4^2}$ for Theorem \ref{thm:double2} 4) are given in Table \ref{table:F4} and Table \ref{table:F16} respectively.

\item Parameters of Hermitian self-dual codes over $\F_{3^2}$ and $\F_{5^2}$ for Theorem \ref{thm:double3} 1)  and 3) are given in Tables \ref{table:F9} and \ref{table:F25} respectively.

\item Parameters of Hermitian self-dual codes over $\F_{17^2}$ and $\F_{19^2}$ for Theorem \ref{thm:double4} 4) are given in Table \ref{table:new}.

\end{itemize}

\begin{exam} We construct a Hermitian self-dual MP code over $\F_{2^2}$ as follows. Take a unitary matrix $A$ and $7$ Hermitian self-dual codes $C_1,\hdots,C_7$ of length $8$:\\
{\tiny
$
A=\left(
\begin{array}{ccccccc}
 \omega&  \omega&\omega^2&\omega&  \omega&\omega^2&\omega^2\\
  0&1&\omega^2&\omega^2&\omega&\omega^2&0\\
  1&\omega^2&0&\omega^2&\omega^2&0&\omega^2\\
  1&1&0&0&1&1&\omega^2\\
  1&\omega^2&\omega^2&\omega&  \omega&  0&0\\
  \omega&  0&\omega&  0&\omega&  1&\omega^2\\
  0&0&\omega&  \omega&\omega^2&1&1\\
\end{array}
\right),
$}
{\scriptsize
 $
C_1=\left(
\begin{array}{cccccccc} 
 &0&\omega&\omega^2&1\\
  I_4  &1&0&\omega^2&1\\
&\omega&\omega^2&1&0\\
  &\omega&\omega^2&0&\omega\\
\end{array}
\right),
$
$
C_2=\left(
\begin{array}{cccccccc} 
 &1&\omega^2&1&0\\
I_4    &0&\omega^2&1&\omega\\
  &1&0&1&\omega\\
  &\omega^2&\omega&  0&1\\
\end{array}
\right),
$
$
C_3=\left(
\begin{array}{cccccccc} 
 &0&\omega&\omega^2&\omega\\
  I_4  &1&0&1&\omega^2\\
  &\omega&  1&0&1\\
  &\omega^2&\omega&\omega^2&0\\
\end{array}
\right),
$
}
{\scriptsize
$
C_4=\left(
\begin{array}{cccccccc} 
 &1&\omega^2&1&0\\
 I_4  &0&\omega^2&1&\omega\\
   &1&0&1&\omega\\
  &\omega^2&\omega&  0&1\\
\end{array}
\right),
$
$
C_5=\left(
\begin{array}{cccccccc} 
 &0&\omega&  1&\omega\\
  I_4  &\omega^2&\omega&  0&\omega\\
  &\omega&  1&\omega^2&0\\
  &\omega&  0&\omega^2&1\\
\end{array}
\right),
$
$
C_6=\left(
\begin{array}{cccccccc} 
 &\omega&  0&\omega^2&1\\
 I_4   &\omega&  \omega&  0&1\\
  &\omega^2&\omega^2&1&0\\
  &0&\omega^2&1&\omega\\
\end{array}
\right),
$
$
C_7=\left(
\begin{array}{cccccccc} 
 &\omega&  1&0&\omega\\
  I_4 &0&\omega^2&\omega&  1\\
   &\omega^2&\omega&  1&0\\
  &\omega&  0&\omega^2&\omega\\
\end{array}
\right).
$
}

The code $[C_1,\hdots,C_7]A$ is Hermitian self-dual with parameters $[56,28,14]$ and it is an optimal code as per \cite{Gweb4H}.
\end{exam}

Parameters of the new MDS and almost MDS as well as optimal Hermitian self-dual codes from all our constructions are summarized in Table \ref{table:new00} and Table \ref{table:new01} while the overall parameters are given in the Appendix. 
\begin{rem}
\begin{enumerate}
\item As we see in the tables of the Appendix,  optimal Hermitian self-dual codes can be constructed very efficiently from less than $5^4$ unitary matrices  which is extremely small compared with the order of the unitary group considered.
\item Compared with other constructions, Construction (\ref{eq:recursive3}) may produce MDS Hemitian self-dual codes more efficiently for example with length $14$ over $\F_{11^2},\F_{13^2}$. It also happens for the almost MDS Hermitian self-dual code over $\F_{5^2}$ with parameters $[16,8,8].$ See Table \ref{table:F49-81}.
 \item It should be noted that the parameters $[20,10, 8]$ over $\F_{2^2}$ are hard to be reachable since we have to considered up to $33^4$ unitary matrices (see Table \ref{table:F4})  and may not be reachable by Constructions (\ref{eq:sd1}), (\ref{eq:sd2}) and (\ref{eq:recursive3}) since we have tried the searching with the same matrices but we could not get such parameters. This shows that the construction in Theorem \ref{thm:double2} 4) performs better than those three constructions.
\item In the double circulant construction \cite{Gabo}, the exhaustive search for optimal Hermitian self-dual codes over $\F_{q}$ takes $q^3$ matrices while most of ours take only $5^4$ matrices (see the tables in the Appendix) which are much smaller if $q\ge 3^2.$

\end{enumerate}
\end{rem}
\begin{table}[h]
\caption{MDS and almost MDS Hermitian self-dual codes, M: MDS, A: almost MDS, $^{**}:$ new parameters}
$$
\begin{array}{c|c|c|c|c|c|c|c|c}
q/2n&4&6&8&10&12&14&16&18\\
\hline
3^2&M&M&A&M&A&&&\\
4^2&M&M&A&M&A&A&&\\
5^2&M&M&M&M&A&A^{**}&A^{**}&\\
7^2&M&M&M&M&M&A&A^{**}&A^{**}\\
8^2&M&M&M&M&M^{**}&A^{**}&A^{**}&A^{**}\\
9^2&M&M&M&M&M&A^{**}&A^{**}&A^{**}\\
11^2&M&M&M&M&M&M^{**}&A^{**}&A^{**}\\
13^2&M&M&M&M&M&M&A^{**}&A^{**}\\
16^2&M&M&M&M&M&M&A&A^{**}\\
17^2&M&M&M&M&M&M&A&A\\
19^2&M&M&M&M&M&M&A&A\\
\end{array}
\label{table:new00}
$$
\end{table}

\begin{table}
\caption{ Optimal Hermitian self-dual codes, $^{**}:$ new distance}
$$
\begin{array}{c|c|c|c|c|c|c|c|c}
q/2n&14&16&18&20&22&24&26&28\\
\hline
3^2&6^{**}&7^{**}&7^{**}&8^{**}&8^{**}&9^{**}&9^{**}&10^{**}\\
4^2&7&7^{**}&8^{**}&8^{**}&9^{**}&9^{**}&10^{**}&11^{**}\\
5^2&7^{**}&8^{**}&8^{**}&9^{**}&9^{**}&10^{**}&10^{**}&11^{**}\\
\end{array}
\label{table:new01}
$$
\end{table}

%%%%%%%%%%%%%%%%%%%%%%%%%%%%%%%%%%%%

\section{Conclusion}\label{section:conclusion}

In this correspondence we have developed new methods and algorithms to construct Hermitian self-dual codes over large finite fields. Some constructions are similar to those in the Euclidean case \cite{ShiSokSole} while the others are the generalizations of the quadratic double circulant construction \cite{Gabo}. Most constructions \cite{ShiSokSole} used random orthogonal matrices to construct Euclidean self-dual codes but here we have provided algorithmic constructions. While the quadratic double circulant construction \cite{Gabo} is only possible with lengths of prime power, our generalized methods allow the construction of Hermitian self-dual codes for all lengths in odd characteristic and for more lengths in even characteristic. 
The numerical results give more than forty Hermitian self-dual codes with new optimal parameters. Searching more optimal codes for longer lengths requires stronger machine implementation, like distributed computing or a lower level language.\\

\noindent
{\bf Acknowledgement:} This research work is supported by Anhui Provincial Natural Science Foundation with grant number 1908085MA04.
\section*{Appendix}
 Parameters of Hermitian self-dual codes over $\F_{q}, q=2^2,3^2,4^2,5^2,7^2,8^2,9^2,11^2,13^2,17^2,19^2$ from various constructions are given as follows.
\begin{table}[h]
\caption{Hermitian self-dual codes over $\F_{2^2}$ compared with the quadratic double circulant construction \cite{Gabo}, where $L=N^iP^jQ^kR^l$ is computed by Algorithm 1 with $m=n$, $^*$: optimal distance as per \cite{Gweb4H}, $-$: not available in \cite{Gabo}}

$$
\begin{array}{cccccc|c}
n&\text{length}&\text{Distance}&\text{Distance}\cite{Gabo}&[i,j,k,l]&[\delta,\theta,\beta,\alpha,\gamma,a,\lambda]&\text{Construction }\\
\hline
\\

2&4&3^*&-&&\\

3&6&4^*&-&&&\\

4&8&4^*&-&[ 0, 0, 1, 1 ]&&\\

5&10&4^*&-&[ 0, 0, 0, 1 ]&&\\

6&12&4^*&-&[ 0, 0, 1, 1 ]&&\\

7&14&6^*&-&[ 1, 0, 2, 2 ]&&\\

8&16&6^*&-&[ 1, 0, 1, 1 ]&&\\

9&18&8^*&6&[ 9, 9, 10, 2 ]&&\\

%10&20&8^*&8&&&\\

11&22&8^*&-&[ 9, 0, 2, 2 ]&&(\ref{eq:sd1})\\

12&24&8^*&-&[ 1, 0, 2, 2 ]&&\\

13&26&8^*&-&[ 0, 0, 1, 1 ]&&\\

14&28&8&-&[ 1, 0, 1, 0 ]&&\\

15&30&8&-&[ 0, 0, 1, 1 ]&&\\

%16&32&10^*&-&&&\\

17&34&10^*&10&[ 1, 0, 2, 3 ]&&\\

\\
\hline
\\
9&20&8^*&8&[32,0,7,32]&[1,\omega,\omega,\omega,0,1,\omega]&\\
15&32&10^*&-&[8,20,20,15]&[1,\omega,\omega,\omega,0,1,\omega]&\\
21&44&12^*&-&[1,2,4,4]&[1,\omega,\omega,\omega,0,1,\omega]&(\ref{eq:double1})\text{ Theorem } \ref{thm:double2} ~4)\\
27&56&14^*&-&[1,2,4,1]&[1,\omega,\omega,\omega,0,1,\omega]&\\
33&68&16^*&-&[0,1,4,4]&[1,\omega,\omega,\omega,0,1,\omega]&\\

\end{array}
$$\label{table:F4}
\end{table}

\begin{table}
\caption{Hermitian self-dual codes over $\F_{3^2}$ compared with \cite{KimLee},  where $L=N^iP^jQ^kR^l$ is computed by Algorithm 1 with $m=n$, $-$: not available in \cite{KimLee} }
$$
\begin{array}{cccccc|c}
n&\text{length}&\text{Distance}&\text{Distance}\cite{KimLee}&[i,j,k,l]&[\delta,\theta,\beta,\alpha,\gamma,a,\lambda]&\text{Construction }\\
\hline
\\

2&4&3&3&&&\\
3&6&4&4&&& (\ref{eq:sd1})\\
5&10&6&5&[1,2,1,3]&&\\
\\
\hline
\\
3&8&4&4&[  0, 0, 0, 0 ]&[  0, \omega, \omega^3, 1, \omega, \omega, 0 ]&\\

6&14&6&6&[  1, 0, 1, 0 ]&[  0, \omega, \omega^3, 1, \omega, \omega, 0 ]&\\

7&16&7&-&[  4, 0, 4, 2 ]&[  0, \omega, \omega^2, 0, 1, \omega, 0 ]&\\

9&20&8&-&[  3, 1, 3, 1 ]&[  0, \omega, \omega^3, 1, \omega, \omega, 0 ]&\\

10&22&8&-&[  0, 0, 1, 1 ]&[  0, \omega, \omega^2, 0, 1, \omega, 0 ]&(\ref{eq:double1})\text{ Theorem } \ref{thm:double1} ~1)\\

12&26&9&-&[  0, 0, 1, 1 ]&[  0, \omega, \omega^3, 1, \omega, \omega, 0 ]&\\

%4&10&5&5&[  0, 0, 0, 1 ]&[  0, \omega, \omega^2, 0, 1, \omega, 0 ]&\\

13&28&10&-&[  1, 2, 3, 1 ]&[  0, \omega, \omega^2, 0, 1, \omega, 0 ]&\\

15&32&11&-&[  2, 0, 2, 3 ]&[  0, \omega, \omega^3, 1, \omega, \omega, 0 ]&\\

16&34&11&-&[  0, 0, 1, 1 ]&[  0, \omega, \omega^2, 0, 1, \omega, 0 ]&\\

\\
\hline
\\

5&12&6&6&[ 0, 2, 0, 2 ]&[  0, \omega, \omega^3, 0, \omega, \omega, 0 ]]&\\

8&18&7&-&[ 0, 1, 1, 1 ]&[  0, \omega, \omega^3, 0, \omega, \omega, 0 ]&\\

11&24&9&-&[ 0, 3, 0, 2 ]&[  0, \omega, \omega^3, 0, \omega, \omega, 0 ]&(\ref{eq:double2})\text{ Theorem } \ref{thm:double3}~ 1)\\

14&30&10&-&[ 0, 0, 1, 3 ]&[  0, \omega, \omega^3, 0, \omega, \omega, 0 ]&\\

17&36&12&-&[ 1,2,0, 3 ]&[  0, \omega, \omega^3, 0, \omega, \omega, 0 ]&\\

\end{array}
$$\label{table:F9}
\end{table}

\begin{table}
\caption{Hermitian self-dual codes over $\F_{4^2}$ compared with \cite{KimLee}, where $L=N^iP^jQ^kR^l$ is computed by Algorithm 1 with $m=n$, $-$: not available in \cite{KimLee} }
$$
\begin{array}{cccccc|c}
n&\text{length}&\text{Distance}&\text{Distance}\cite{KimLee}&[i,j,k,l]&[\delta,\theta,\beta,\alpha,\gamma,a,\lambda]&\text{Construction }\\
\hline
\\

2&4&3&3&&&\\

3&6&4&4&&&\\

4&8&4&4&[0,0,0,1 ]&&\\

5&10&6&5&[ 0, 5, 3, 0 ]&\\

%6&12&6&6&[ 0, 1, 1, 1 ]&&(\ref{eq:sd2})\\

7&14&7&-&[ 1, 2, 0, 0 ]&&(\ref{eq:sd2})\\

%8&16&7&-&[ 0, 0, 2, 1 ]&&\\

9&18&8&-&[1,0,3,1]&\\

%10&20&8&-&[0,1,1,2]&&\\

11&22&9&-&[0,3,2,0]&\\

%12&24&9&-&[ 0, 1, 1, 2 ]&&\\
13&26&10&-&[ 0, 2, 2, 3 ]&\\

%14&28&11&-&[5,4,3,0]&\\
\\
\hline
\\

5&12&6&6&[  1, 0, 0, 1 ]&[  1, \omega, \omega, \omega, 0, 1, \omega ]&\\
7&16&7&-&[  0, 1, 1, 1 ]&[  1, \omega, \omega, \omega, 0, 1, \omega ]&\\
9&20&8&-&[  0, 1, 1, 2 ]&[  1, \omega, \omega, \omega, 0, 1, \omega ]&(\ref{eq:double1})\text{ Theorem }\ref{thm:double2}~4)\\
11&24&9&-&[  0, 1, 1, 1 ]&[  1, \omega, \omega, \omega, 0, 1, \omega ]&\\
13&28&11&-&[  3, 4, 1, 5 ]&[  1, \omega, \omega, \omega, 0, 1, \omega ]&\\

\end{array}
$$\label{table:F16}
\end{table}

\begin{table}
\caption{Hermitian self-dual codes over $\F_{5^2}$ compared with \cite{KimLee},  where $L=N^iP^jQ^kR^l$ is computed by Algorithm 1 with $m=n$, $-$: not available in \cite{KimLee} }
$$
\begin{array}{cccccc|c}
n&\text{length}&\text{Distance}&\text{Distance}\cite{KimLee}&[i,j,k,l]&[\delta,\theta,\beta,\alpha,\gamma,a,\lambda]&\text{Construction }\\

\hline\\

%2&4&3&3&[ 0, 0, 0, 1 ]&\\
%
%3&6&4&4&[ 0, 0, 0, 1 ]&\\

2&4&3&3&&\\

3&6&4&4&&\\

4&8&5&5&[  0, 1, 1, 0 ]&\\

5&10&6&6&[  2, 2, 2, 1 ]&\\

6&12&6&6&[  0, 1, 0, 1 ]&&(\ref{eq:sd1})\\

%8&16&8&-&&&\\

9&18&8&-&[  0, 1, 0, 1 ]&\\

%10&20&9&-&[ 1, 1, 2, 2 ]&\\

11&22&9&-&[  0, 0, 1, 1 ]&\\

%12&24&10&-&[ 1, 1, 1, 2 ]&\\

%13&26&10&-&[ 0, 0, 1, 1 ]&\\

14&28&11&-&[  0, 1, 0, 1 ]&\\
\\
\hline
\\

6&14&7&-&[  0, 1, 2, 2 ]
&
[  0, \omega, \omega^{20}, \omega^3, \omega, \omega^2, \omega^{22} ]\\

9&20&9&-&[  1, 4, 3, 3 ]
&
[  0, \omega, \omega^{13}, \omega^3, \omega^2, \omega^2, \omega^{15} ]&(\ref{eq:double2})\text{ Theorem } \ref{thm:double3}~ 3)\\

%11&24&10&-&[ 0, 1, 1, 1 ]&[  0, \omega, \omega^{20}, \omega^3, \omega, \omega^2, \omega^{22} ]\\

11&24&10&-&[  0, 0, 3, 1 ]
&
[  0, \omega, \omega^{20}, \omega^3, \omega, \omega^2, \omega^{22} ]\\
12&26&10&-&[  0, 0, 1, 1 ]
&
[  0, \omega, \omega^9, \omega^3, 1, \omega^2, \omega^{11} ]\\

%4&10&6&6&[ 1, 1, 0, 0 ]&[  1, \omega, \omega^2, \omega^3, \omega^3, \omega^2, \omega^4 ]\\

%9&20&9&-&[ 0, 1, 6, 6 ]&[  1, \omega, \omega^2, \omega^3, \omega^3, \omega^2, \omega^4 ]\\

\\
\hline
\end{array}
$$\label{table:F25}
\end{table}

\begin{table}
\caption{Hermitian self-dual codes over $\F_{q}, q=5^2,7^2,8^2,9^2,11^2,13^2$ compared with \cite{KimLee},  where $L=N^iP^jQ^kR^l$ is computed by Algorithm 1 with $m=n$, $(d):$ distance in \cite{KimLee},$-$: not available in \cite{KimLee}}
$$
%\begin{array}{ccccccccllc}
\begin{array}{l||llllllll|c}
q&n&\text{length}&\text{Distance}&\text{Distance}\cite{KimLee}&[i,j,k,l]&a&(\lambda_1,\hdots,\lambda_n)&\bf{x}&\text{Construction }\\

\hline\\

5^2&7&16&8&-& [ 0, 0, 1, 3] & \omega^{2} & 2 \omega^{3} 2 \omega^{19} 2 \omega^{7} \omega^{19}  
& \omega^{10} \omega^{13} 2 \omega^{4} \omega^{4} 4 \omega^{4} \omega^{3} \omega^{23} &(\ref{eq:recursive3})\\
\\
\hline
\\

&4&10&6&6&  [0, 0, 2, 1] &  \omega^{3} &   \omega^{22} \omega \omega^{46} \omega^{4}  &   \omega^{47} \omega^{39} \omega^{11} \omega^{31} \omega^{13} \omega^{27}  
&\\

&5&12&7&7&  [0, 0, 0, 1 ]&  \omega^{3} &  \omega^{44} \omega^{2} \omega^{5} \omega^{11} \omega^{28}  &   \omega^{14} \omega^{27} \omega^{18} 5 \omega^{21} \omega^{36} \omega^{18} 
&\\

7^2&6&14&7&-&  [0, 0, 1, 1] &  \omega^{3 }&   \omega^{21} \omega^{19} \omega^{17} \omega^{12} \omega^{36} \omega^{43}  &   \omega^{6} 4 \omega^{3} \omega^{23} 2 0 5 \omega^{9}  
&(\ref{eq:recursive3})\\

&7&16&8&-&  [0, 0, 1, 1] &  \omega^{3} &   \omega^{34} \omega^{3} \omega^{2} \omega^{10} \omega^{44} \omega^{35} \omega^{9}  &  \omega^{27} \omega^{33} 4 \omega^{42} \omega^{27} \omega^{19} \omega^{41} 3 \omega^{28 } 
&\\

&8&18&9&-&  [1, 1, 4, 4] &  \omega^{3} &   \omega^{18} \omega^{35} \omega^{7} \omega^{39} 4 \omega^{5} \omega^{12} \omega^{39}  &
   \omega^{35} \omega^{46} \omega^{30} \omega^{6} \omega^{29} \omega^{18} \omega^{11} \omega^{47} \omega^{13} \omega^{23}  
&\\
\\
\hline
\\

&4&10&6&6& [ 0, 1, 0, 1 ]&  1 &   \omega^{52} \omega^{56} \omega^{4} \omega^{20} &   \omega^{26} \omega^{13} \omega^{41} \omega^{8} \omega^{47} \omega^{29}  
&\\

&5&12&7&-&  [0, 1, 1, 1] &  1 &   \omega^{21} \omega^{44} 1 1 1 &  1 \omega^{19} \omega^{37} \omega^{62} \omega^{57} \omega^{37} \omega^{42 }
&\\

8^2&6&14&7&-& [ 0, 1, 0 ,1 ]&  1 &   \omega^{15} \omega^{51} \omega^{40} \omega^{15} \omega^{17} \omega^{55 } &
   \omega^{15} \omega^{39} \omega^{25} \omega^{28} \omega^{56} 1 \omega^{6} \omega^{53}  
&(\ref{eq:recursive3})\\

&7&16&8&-&  [0, 0, 1, 2] &  1 &   \omega^{47} \omega^{45} \omega^{32} \omega^{47} \omega^{44} \omega^{57} \omega^{41} &
  \omega^{50} \omega^{51} \omega^{15} \omega^{40} \omega^{12} \omega^{20} \omega^{23} \omega^{17} \omega^{32}  
&\\

&8&18&9&-& [ 0, 1, 0, 2 ]&  1 &   \omega^{10} \omega^{3} \omega^{9} \omega^{29} \omega^{39} \omega^{36} \omega^{3} \omega^{29}  
&   \omega^{18} \omega^{14} \omega^{59} \omega^{56} \omega^{17} \omega^{15} \omega^{3} \omega^{4} \omega^{18} \omega^{26}  
&\\
\\
\hline
\\

&4&10&6&6&  [0, 0, 1, 1 ]&  \omega^{4 }&   \omega^{70} \omega^{6} \omega^{6} \omega^{69}  &   \omega^{22} \omega^{5} 2 \omega^{48} \omega^{57} \omega^{23}  
&\\

&5&12&7&7&  [0, 0, 0, 1] &  \omega^{4 }&   \omega^{29} \omega^{60} \omega \omega^{68} \omega^{13} 
&   \omega^{10} \omega^{21} \omega^{14} \omega^{32} \omega^{27} \omega^{17} \omega^{58}
&\\

9^2&6&14&7&-& [ 0, 0, 1, 1] &  \omega^{4} &   \omega^{52} \omega^{51} \omega^{14} \omega^{57} \omega^{27} \omega^{39}  
&   \omega^{69} \omega^{45} \omega^{49} \omega \omega^{31} \omega^{62} \omega^{58} \omega^{75}  
&(\ref{eq:recursive3})\\

&7&16&8&-&  [1, 0, 1, 1 ]&  \omega^{4 }&  \omega^{68} \omega^{23} \omega^{15} \omega^{9} \omega^{30} \omega^{14} \omega^{36}  
&  \omega^{5} \omega^{5} \omega^{29} \omega^{65} \omega^{11} \omega^{57} \omega \omega^{25} \omega^{36}  
&\\

&8&18&9&-&  [0, 1, 0, 1 ]&  \omega^{4} &   \omega^{17} \omega^{55} 2 \omega^{21} \omega^{64} \omega^{68} \omega^{47} \omega^{6} 
&  \omega^{52} \omega^{43} \omega^{63} \omega^{69} 0 \omega^{23} \omega^{10} \omega^{32} \omega^{2} \omega^{79} 
&\\
\\
\hline
\\
11^2&6&14&8&-&  [0, 1, 2, 4 ]&  \omega^{5} &   \omega^{25} \omega^{61} \omega^{46} \omega^{86} \omega^{42} \omega^{100}   
&   \omega^{38} \omega^{9} \omega^{93} \omega^{34} \omega^{91} \omega^{91} \omega^{9} \omega^{100}  
&(\ref{eq:recursive3})\\
\\
\hline
\\
13^2&6&14&8&-&  [0, 0, 4, 5] &  \omega^{6 }&   \omega^{82} \omega^{164} 3 \omega^{100} \omega^{138} \omega^{163}   &
   \omega^{166} \omega^{58} \omega^{165} \omega^{114} \omega^{129} \omega^{145} \omega^{24} \omega^{83 } 
&(\ref{eq:recursive3})\\

\end{array}
$$\label{table:F49-81}
\end{table}

\begin{table}
\caption{Hermitian self-dual codes over $\F_{q}, q=11^2,13^2,16^2,17^2,19^2$ compared with \cite{GulKimLee},  where $L=N^iP^jQ^kR^l$ is computed by Algorithm 1 with $m=n$, $(d):$ distance in \cite{GulKimLee},$-$: not available in \cite{GulKimLee}}

$$
\begin{array}{c||cccccc|c}
q&n&\text{length}&\text{Distance}&\text{Distance}\cite{GulKimLee}&[i,j,k,l]&[\delta,\theta,\beta,\alpha,\gamma,a,\lambda]&\text{Construction }\\

\hline\\
&2&4&3&3&&&\\
11^2&3&6&4&4&&&(\ref{eq:sd1})\\
&4&8&5&5&[  0,0,0,1 ]&&\\
\\
\hline
\\
&2&4&3&3&&&\\
13^2&3&6&4&4&&&(\ref{eq:sd1})\\
&4&8&5&5&[  0,0,0,1 ]&&\\
\\
\hline
\\
&2&4&3&3&&&\\
16^2&3&6&4&4&&&(\ref{eq:sd1})\\
&4&8&5&5&[  0,0,1,1 ]&&\\
\\
\hline
\\
&2&4&3&3&&&\\
17^2&3&6&4&4&&&(\ref{eq:sd1})\\
&4&8&5&5&[  0,0,0,2 ]&&\\
\\
\hline
\\
&2&4&3&3&&&\\
19^2&3&6&4&4&&&(\ref{eq:sd1})\\
&4&8&5&5&[  0,0,0,1 ]&&\\
\\
\hline
\\
&4&10&6&6&[  0, 0, 1, 1 ]
&
[  0, \omega^2, \omega^{68}, \omega^{75}, \omega^6, \omega^5, \omega^5 ]&\\
&5&12&7&7&[  1, 2, 1, 0 ]
&
[  0, \omega^4, \omega^{64}, \omega^{73}, \omega^3, \omega^5, \omega^5 ]&\\
11^2&6&14&7&7&[  0, 0, 1, 1 ]
&
[  0, \omega^9, \omega^{57}, \omega^{71}, \omega^7, \omega^5, \omega^5 ]&(\ref{eq:double2})\text{ Theorem } \ref{thm:double1}~ 2)\\
&7&16&8&-&[  0, 0, 2, 2 ]
&
[  0, \omega^7, \omega^{58}, \omega^{70}, \omega^9, \omega^5, \omega^5 ]&\\
&8&18&9&-&[  0, 1, 2, 1 ]
&
[  0, \omega^3, \omega^{51}, \omega^{59}, \omega^4, \omega^5, \omega^5 ]&\\
\\
\hline
\\

&4&10&6&6&[  0, 0, 1, 1 ]
&
[  0, \omega, 1, \omega^7, \omega^7, \omega^6, \omega^6 ]&\\
&5&12&7&7&[  1, 1, 1, 0 ]
&
[  0, \omega, \omega^{165}, \omega^7, \omega^{10}, \omega^6, \omega^3 ]&\\
13^2&6&14&7&7&[  0, 1, 0, 1 ]
&
[  0, \omega, \omega^{167}, \omega^7, \omega^8, \omega^6, \omega^5 ]&(\ref{eq:double2})\text{ Theorem } \ref{thm:double1}~ 3)\\
&7&16&8&-&[  0, 0, 1, 1 ]
&
[  0, \omega, \omega^4, \omega^7, \omega^3, \omega^6, \omega^{10} ]&\\
&8&18&9&-&[  1, 0, 1, 1 ]
&
[  0, \omega, \omega^{164}, \omega^7, \omega^{11}, \omega^6, \omega^2 ]&\\
\\
\hline
\\
%&4&8&5&&[  0,0,0,2 ]&&\\
&5&10&6&6&[  0,0,1,2 ]&&\\
&6&12&7&7&[  0,0,2,1 ]&&\\
16^2&7&14&8&-&[  5,4,1 ,0]&&(\ref{eq:sd2})\\
&8&16&8&-&[  0,0,1,2 ]&&\\
&9&18&9&-&[  0,1,1,2 ]&&\\
\\
\hline
\\
&4&10&6&6&[  0, 0, 0, 1 ]&[  1, \omega, \omega^{166}, \omega^9, \omega^3, \omega^8, \omega^{174} ]&\\
&5&12&7&7&[  0, 0, 1, 1 ]&[  1, \omega, \omega^{176}, \omega^9, \omega^2, \omega^8, \omega^{184} ]&\\
17^2&6&14&8&-&[  0, 0, 1, 4 ]&[  1, \omega, \omega^{25}, \omega^9, \omega^{10}, \omega^8, \omega^{33} ]&(\ref{eq:double2})\text{ Theorem } \ref{thm:double4}~ 4)\\
&7&16&8&-&[  0, 0, 1, 1 ]&[  1, \omega, \omega^{57}, \omega^9, \omega^{11}, \omega^8, \omega^{65} ]&\\
&8&18&9&-&[  0, 0, 1, 1 ]&[  1, \omega, \omega^{208}, \omega^9, \omega^{15}, \omega^8, 4 ]&\\
\\
\hline
\\
&4&10&6&6&[  0, 0, 0, 2 ]&[  1, \omega, \omega^{294}, \omega^{10}, \omega^{15}, \omega^9, \omega^{303} ]&\\
&5&12&7&7&[  0, 0, 2, 2 ]&[  1, \omega, \omega^{12}, \omega^{10}, \omega^{12}, \omega^9, \omega^{21} ]&\\
19^2&6&14&8&8&[  1, 0, 1, 2 ]&[  1, \omega, \omega^{166}, \omega^{10}, \omega^{17}, \omega^9, \omega^{175} ]&(\ref{eq:double2})\text{ Theorem } \ref{thm:double4}~ 4)\\
&7&16&8&-&[  0, 0, 1, 1 ]&[  1, \omega, \omega^{223}, \omega^{10}, \omega^8, \omega^9, \omega^{232} ]&\\
&8&18&9&-&[  0, 0, 1, 2 ]&[  1, \omega, 10, \omega^{10}, \omega^{3}, \omega^9, \omega^{349} ]&\\
\end{array}
\label{table:new}
$$
\end{table}

\newpage

\end{document}